\newcounter{theorem}
\numberwithin{theorem}{section}
\numberwithin{equation}{section}
\renewcommand{\thetheorem}{\arabic{section}.\arabic{theorem}}
\newenvironment{thm}[1]{\par
\begin{sloppypar}\refstepcounter{theorem}%
\noindent{\bf #1 \thetheorem.}\it{}}{\end{sloppypar}}
\newenvironment{theorem}{\begin{thm}{Theorem}}{\end{thm}}
\newenvironment{proposition}{\begin{thm}{Proposition}}{\end{thm}}
\newenvironment{defi}[1]{\par
\begin{sloppypar}\refstepcounter{theorem}%
\noindent{\bf #1 \thetheorem.}\rm{}}{\end{sloppypar}}
\newenvironment{definition}{\begin{defi}{Definition}}{\end{defi}}
\newenvironment{remark}{\begin{defi}{Remark}}{\end{defi}}
\def\ad{\mathfrak{ad}}
\def\Op{\mathfrak{Op}} 
\def\r{\mathfrak{r}}
\def\X{\mathcal X}
\def\V{\mathcal V}
\newcommand{\Cinfpol}{$C^\infty_{\text{\sf pol}}$}
\newcommand{\varomega}{\omega\hspace*{-7pt}{\raisebox{2,5pt}{$\scriptstyle{\sim}$}}}
\def\bb1{{\rm{1}\hspace{-3pt}\mathbf{l}}}
\def\Lpol{\mathfrak{L}^2_{\text{\sf pol}}(\X)}
\def\Apol{\mathfrak{L}^1_{\text{\sf pol}}(\X)}
\def\Lbc{\mathfrak{L}^2_{\text{\sf bc}}(\X)}
\def\beq{\begin{equation}}
\def\eeq{\end{equation}}
\begin{document}

\title{Quantum observables as magnetic pseudodifferential operators}

\date{\today}

\author{Viorel Iftimie\footnote{Institute
of Mathematics Simion Stoilow of the Romanian Academy,
Bucharest.\hspace{1cm} Email: Viorel.Iftimie@imar.ro} , 
Marius M\u antoiu\footnote{Universidad de Chile, Las Palmeras 3425, Casilla 653, Santiago de Chile.\hspace{1cm} Email: mantoiu@uchile.cl} \ and Radu Purice\footnote{Institute of
Mathematics Simion Stoilow of the Romanian Academy, Bucharest.}
\footnote{Centre Francophone en Math\'{e}matique Bucarest. \hspace{1cm} Email: Radu.Purice@imar.ro}
\footnote{
\textbf{2010 Mathematics Subject Classification:} Primary 35S05, 47A60, Secondary 81Q10.
\newline
\textbf{Key Words:}  Pseudodifferential operator, magnetic field, symbol, self-adjoint operator.}}

\maketitle

\begin{abstract}
In a series of papers we have argued that the 'basic' physical procedure of \textit{minimal coupling} giving the quantum description of a Hamiltonian system interacting with a magnetic field, can be given a very satisfactory mathematical formulation as a \textit{twisted} Weyl quantization \cite{MP2}. In this paper we shall present a review of some of these results with some modified proofs that allow a special focus on the dependence on the behavior of the magnetic field, having in view possible developments towards problems with unbounded magnetic fields. The main new result is contained in Theorem \ref{T-ev-group} and states that the {\it the symbol of the evolution group of the self-adjoint operator defined by a real elliptic symbol of strictly positive order in a smooth bounded magnetic field is in the associated magnetic Moyal algebra, i.e. leaves invariant the space of Schwartz test functions and its dual}.
\end{abstract}


\section{Introduction}

This article is a continuation of some previous work \cite{MP1,MP3,IMP1,IMP2,IMP3}, describing magnetic pseudidifferential operators in a gauge covariant setting.

\smallskip
We consider only systems having an \textit{affine configuration space} $\X\cong\mathbb{R}^d$ with $d\geq2$. We shall denote by $\bigwedge^{\!k}\!\X$ the space of smooth $k$-forms on $\X$. \textit{The magnetic field} is then described by a \textit{closed 2-form} $B\in\bigwedge^{\!2}\!\X$, thus satisfying $dB=0$ (see lecture 13 in \cite{dA}). Due to the topological triviality of the configuration space $\X$ we can find a 1-form $A\in\bigwedge\!\X$, called \textit{a vector potential}, such that $B=dA$. Clearly the choice of $A\in\bigwedge\!\X$ is highly non-unique, different choices being related by \textit{gauge transformations} $A\mapsto A^\prime=A+d f$ for some $f\in C^2(\X;\mathbb{R})$. A usual choice of the vector potential is \textit{the transversal gauge}:
\beq\label{F-transv-gauge}
A_j(x)=-\sum_{k=1}^d\int_0^1\!ds\,B_{jk}(sx)sx_k,
\eeq
verifying $x\cdot A(x)=0$.

\smallskip
\textit{A Hamiltonian system} is described by a \textit{smooth function} $h:\Xi\rightarrow\mathbb{R}$, where $\Xi:=\X\times\X^*$ is \textit{the phase space} of the system, with $\X^*$ the dual of $\X$ as a finite dimensional real vector space, with the duality map ${<\cdot,\cdot>:\X^*\times\X\rightarrow\mathbb{R}}$ (see \cite{MP3} and the references therein).

\subsection{Notations.}

We shall use the notation $[t]:=\max\big\{k\in\mathbb{Z}\mid\,k\leq t\big\}\in\mathbb{Z}$ for the \textit{integer part} of $t\in\mathbb{R}$.

\smallskip
For any Euclidian space $\mathcal{V}\cong\mathbb{R}^N$ we denote by $\mathscr{S}(\mathcal{V})$ the Fr\'{e}chet space of Schwartz test functions and by $\mathscr{S}^\prime(\mathcal{V})$ its dual, the space of tempered distributions on $\mathcal{V}$. We shall denote by $C^\infty(\mathcal{V})$ the space of smooth functions on
$\mathcal{V}$ and by $C^\infty_{\text{\sf pol}}(\mathcal{V})$ (resp. $C^\infty_{\text{\sf pol,u}}(\mathcal{V})$) and by $BC^\infty(\V)$ its subspaces of smooth functions
that are polynomially bounded together with all their derivatives (resp. those with uniform polynomial growth on all the derivatives) or smooth and bounded
together with all their derivatives. We use the notation $\left<v\right>:=\sqrt{1+|v|^2}$ for any $v\in\mathcal{V}$.

\smallskip
When working in a Hilbert space $L^2(\mathcal{V})$ over an Euclidian space $\mathcal{V}\cong\mathbb{R}^N$ with the Lebesgue measure, we shall denote by $F(Q)$ the operator of multiplication with the measurable function $F:\mathcal{V}\rightarrow\mathbb{C}$, i.e.
$$
\big(F(Q)f\big)(v):=F(v)f(v),\ \forall v\in\mathcal{V},\quad\forall f\in L^2(\mathcal{V}).
$$
Moreover we shall denote by $\mathbb{B}\big(\mathcal{H}\big)$ and $\mathcal{U}\big(\mathcal{H}\big)$ the algebra of bounded linear operators, respectively the group of unitary linear operators on the Hilbert space $\mathcal{H}$.

\smallskip
For $k$-forms on $\X$, with $k\in\mathbb{N}$, we shall consider the spaces
$$
\mathfrak{L}^k_{\text{\sf pol}}(\X)\ :=\,\Big\{F\in\bigwedge^k\X\,\mid\,F_{j_1,\ldots,j_k}\in C^\infty_{\text{\sf pol}}(\X),\,\forall(j_1,\ldots,j_k)\in\{1,\ldots,d\}^k\Big\},
$$
$$
\mathfrak{L}^k_{\text{\sf bc}}(\X)\ :=\,\Big\{F\in\bigwedge^k\X\,\mid\,F_{j_1,\ldots,j_k}\in BC^\infty(\X),\,\forall(j_1,\ldots,j_k)\in\{1,\ldots,d\}^k\Big\}.
$$
Clearly $\mathfrak{L}^0_{\text{\sf pol}}(\X)=C^\infty_{\text{\sf pol}}(\X)$ and $\mathfrak{L}^0_{\text{\sf bc}}(\X)=BC^\infty(\X)$.

\smallskip
We shall need several types of semi-norms and weights on these spaces. We shall call \textit{weight} a positive function, verifying the properties of a semi-norm but being allowed to take also the value $+\infty$. 

\smallskip
On $\mathfrak{L}^k_{\text{\sf bc}}(\X)$ we shall use the following two families of semi-norms, indexed by $m\in\mathbb{N}$:
$$
\mu_m(F)\ :=\ \underset{(j_1,\ldots,j_k)\in\{1,\ldots,d\}^k}{\max}\Big(\,\underset{|\alpha| = m}{\max}\,\underset{x\in\X}{\sup}\left|\big(\partial^\alpha F_{j_1,\ldots,j_k}\big)(x)\right|\Big),\qquad\rho_m(F):=\underset{n\leq m}{\max}\,\mu_n(F).
$$

On $\mathfrak{L}^k_{\text{\sf pol}}(\X)$ we consider the following family of \textit{weight functions} indexed by $(p,m)\in\mathbb{R}\times\mathbb{N}$:
\beq\label{FD-seminorms-Cinfpol}
\nu^{p}_{m}(F):\ =\ \underset{(j_1,\ldots,j_k)\in\{1,\ldots,d\}^k}{\max}\Big(\,\underset{x\in\mathcal{X}}{\sup}\left<x\right>^{-p}\!\underset{|\alpha|= m}{\max}\left|\big(\partial^\alpha F_{j_1,\ldots,j_k}\big)(x)\right|\Big).
\eeq

In developing the magnetic pseudodifferential calculus an important role will be played by a specific imaginary exponential of the magnetic flux through some triangles (see \eqref{FD-omega-B} and the results in Appendix 1). 
The formulas in the Appendix 1 show the interest in defining the following family of functionals that characterize the growth of the magnetic field and its derivatives. For $B\in\Lbc$ and for any $M\in\mathbb{N}$ we define:
\beq\label{FD-weights-B}
\mathring{\mathfrak{w}}_M(B)\ :=\ \max\Big\{\mu_0(B),\underset{1\leq m\leq M}{\max}\ \underset{1\leq l\leq m}{\max}\ \underset{p_1+\cdots+p_l=m}{\max}\ \prod\limits_{s=1}^{l}\big(\mu_{p_s}(B)+\mu_{p_s-1}(B)\big)\Big\}.
\eeq

For the points of $\Xi=\X\times\X^*$ we shall use notations of the form $X=(x,\xi), Y=(y,\eta), Z=(z,\zeta)$. We recall that on $\Xi$ we have a \textit{canonical symplectic form}:
\beq\label{Xi-simpl-form}
\sigma(Y,Z)\ :=\ <\eta,z>-<\zeta,y>.
\eeq

For the space $C^\infty_{\text{\sf pol}}(\Xi)$ we shall use a family of \textit{weight functions} of the form \eqref{FD-seminorms-Cinfpol} but with four indices $(p_1,p_2,m_1,m_2)\in\mathbb{R}^2\times\mathbb{N}^2$:
\beq\label{FD-seminorms-Cinfpol-Xi}
\nu^{p_1,p_2}_{m_1,m_2}(F):=\underset{(x,\xi)\in\Xi}{\sup}\left<x\right>^{-p_1}\!\left<\xi\right>^{-p_2}\!\underset{|a|=m_1}{\max}\,\underset{|\alpha|=m_2}{\max}\,\big|\big(\partial_x^a\partial_\xi^\alpha F\big)(x,\xi)\big|,\quad\nu^p_{m_1,m_2}(F)\equiv\nu^{0,p}_{m_1,m_2}(F).
\eeq

Let us recall now a family of H\"{o}rmander type symbols (\cite{H-I}) that will play a very important role in our analysis.

\smallskip
\begin{definition}\label{D-H-symb}
For any $p\in\mathbb{R}$ we denote by $S^p(\Xi,\X)$ the following complex linear space
\beq\label{FD-H-symb}
S^p(\Xi,\X)\,:=\,\left\{F\in \text{\Cinfpol}(\Xi)\,\mid\, \underset{(x,\xi)\in\Xi}{\sup}\left<\xi\right>^{-p+|\beta|}\big|\big(\partial^\alpha_x\partial^\beta_\xi F\big)(x,\xi)\big|<\infty,\,\forall\,\alpha,\beta\in\mathbb N^d\right\},
\eeq
with the Fr\'{e}chet topology defined by the countable family of semi-norms $\{\nu^{p-m_2}_{m_1,m_2}\}$ indexed by $(m_1,m_2)\in\mathbb{N}^2$.

We also set
$$
S^\infty(\Xi,\X):=\underset{p\in\mathbb{R}}{\bigcup}S^p(\Xi,\X),\qquad S^{-\infty}(\Xi,\X):=\underset{p\in\mathbb{R}}{\bigcap}S^p(\Xi,\X),\qquad S^{-}(\Xi,\X):=\underset{p<0}{\bigcap}S^p(\Xi,\X).
$$
\end{definition}

Let us point out that \eqref{FD-H-symb} is the class $S^p_{1,0}(\Xi)$ with the notations from \cite{H-I,IMP1}. Noticing that $\nu^{p}_{m_1,m_2}(F)\leq\nu^{p-m_2}_{m_1,m_2}(F)$ for any $F\in S^p(\Xi,\X)$ and any $(m_1,m_2)\in\mathbb{N}^2$, we shall also use the following semi-norms on $S^p(\Xi,\X)$ (indexed by $(m_1,m_2)\in\mathbb{N}^2$):
$$
\rho^p_{m_1;m_0,m_2}(F):=\underset{n_1\leq m_1}{\max}\,\underset{m_0\leq n_2\leq m_0+m_2}{\max}\nu^{p}_{n_1,n_2}(F).
$$

\begin{definition}\label{D-ell-symb}
We say that a symbol $F\in S^p(\Xi,\X)$ is \textit{elliptic} if there exists two constants $(R,C)\in\mathbb{R}_+\times\mathbb{R}_+$ such that
$$
|F(x,\xi)|\geq C<\xi>^p,\ \forall (x,\xi)\in\X\times\{\xi\in\X^*\mid|\xi|\geq R\}.
$$
We denote by $S^p(\Xi,\X)_{\text{\sf ell}}$ the family of elliptic symbols of type $p\in\mathbb{R}$.
\end{definition}

\smallskip
For any $s\in\mathbb{R}$ we shall use the notation $\mathfrak{p}_s(x,\xi):=\left<\xi\right>^{s}$, defining an elliptic symbol of order $s\in\mathbb{R}$ and $\mathfrak{q}_s(x,\xi):=\left<x\right>^{s}$ (that for $s>0$ is not a H\"{o}rmander type symbol).

\smallskip
We shall use the following Fourier transforms:
\begin{align}
\mathcal{F},\mathcal{F}^-:\mathscr{S}(\X)\rightarrow\mathscr{S}(\X^*):&\nonumber\\
&\big(\mathcal{F}\phi\big)(\xi):=(2\pi)^{-d/2}\int_{\X}e^{-i<\xi,x>}\phi(x)\,dx,\nonumber\\
&\big(\mathcal{F}^-\phi\big)(\xi):=(2\pi)^{-d/2}\int_{\X}e^{i<\xi,x>}\phi(x)\,dx\nonumber,
\end{align}
\begin{align}
\mathcal{F}_*,\mathcal{F}_*^-:\mathscr{S}(\X^*)\rightarrow\mathscr{S}(\X):&\nonumber\\
&\big(\mathcal{F}_*\psi\big)(x):=(2\pi)^{-d/2}\int_{\X^*}e^{-i<\xi,x>}\psi(\xi)\,d\xi,\nonumber\\
&\big(\mathcal{F}_*^-\psi\big)(x):=(2\pi)^{-d/2}\int_{\X^*}e^{i<\xi,x>}\psi(\xi)\,d\xi,\nonumber
\end{align}
\beq
\widetilde{\mathcal{F}}:\mathscr{S}(\Xi)\rightarrow\mathscr{S}(\Xi),\quad
\big(\widetilde{\mathcal{F}}F\big)(x,\xi):=(2\pi)^{-d}\int_{\Xi}e^{i<\xi,y>-<\eta,x>}F(y,\eta)\,dy\,d\eta.\nonumber
\eeq

\subsection{The magnetic quantization.}

Given a magnetic field $B\in\bigwedge^{\!2}\!\X$ and an associated vector potential $A\in\bigwedge\!\X$, let us consider the following \textit{invariant integrals}, whose significance in constructing a \textit{gauge covariant functional calculus} has been noticed in \cite{CN,N-02}:
$$
\Gamma^A(x,y):=\int_{[x,y]}\hspace*{-10pt}A\ \ \ ,\qquad\Phi^B(x,y,z):=\int_{<x,y,z>}\hspace*{-15pt}B\ \,,
$$
where $[x,y]$ is the oriented line segment from $x\in\X$ to $y\in\X$ and $<x,y,z>$ is the oriented triangle in $\X$ having the vertices $\{x,y,z\}\subset\X$. They verify the following relations due to the Stoke' formula and the condition $dB=0$:
$$
\forall\{x,y,z\}\subset\X:\qquad\Gamma^A(x,y)+\Gamma^A(y,z)+\Gamma^A(z,x)=
\Phi^B(x,y,z),
$$
$$
\forall\{x,y,z,w\}\subset\X:\qquad\Phi^B(x,y,w)+\Phi^B(y,z,w)+\Phi^B(z,x,w)=\Phi^B(x,y,z).
$$
We shall also use the notations:
$$
\Lambda^A(x,y):=e^{-i\Gamma^A(x,y)},\qquad\Omega^B(x,y,z):=e^{-i\Phi^B(x,y,z)}.
$$
\begin{definition}\label{D-magn-W-syst}
Given a magnetic field $B\in\bigwedge^{\!2}\!\X$ and an associated vector potential $A\in\bigwedge\!\X$, we call \emph{the Magnetic Weyl system} on $\Xi=\X\times\X^*$ associated to $A\in\bigwedge\X$ the application
\beq\label{F-magn-W-syst}
W^A:\Xi\rightarrow\mathcal{U}\big(L^2(\X)\big),\qquad W^A(x,\xi)f:=\Lambda^A(Q,Q+x)W(x,\xi)f,
\eeq
where $W:\Xi\rightarrow\mathcal{U}\big(L^2(\X)\big)$ is the usual Weyl system on $\Xi=\X\times\X^*$:
\beq\label{F-W-syst}
\big(W(x,\xi)f\big)(z):=e^{-(i/2)<\xi,x>}e^{-i<\xi,z>}f(z+x),\ \forall z\in\X,\quad\forall f\in L^2(\X).
\eeq
We shall sometimes use the notation
\beq\label{W-syst-2}
U(x):=W(-x,0),\quad V(\xi):=W(0,\xi),\quad U^A(x)=W^A(-x,0).
\eeq
\end{definition}

\begin{definition}\label{D-magn-quant}
Given a magnetic field $B\in\bigwedge^{\!2}\!\X$ and an associated vector potential $A\in\bigwedge\!\X$, we define \textit{the magnetic quantization} as the application
\beq\label{F-magn-quant}
\Op^A:\mathscr{S}(\Xi)\rightarrow\mathbb{B}\big(L^2(\X)\big),\qquad\Op^A(F):=(2\pi)^{-d}\int_{\Xi}\big(\widetilde{\mathcal{F}}F\big)(x,\xi)W^A(x,\xi)\,dx\,d\xi,
\eeq
with the integral defined in the weak operator sense.
\end{definition}

\smallskip
For $A=0$ we obtain the usual Weyl quantization, that we shall denote by $\Op\equiv\Op^0$.

\smallskip
\begin{proposition}\label{P-gauge-cov} \textit{(Proposition 3.4 in \cite{MP1})}\\
Given two gauge equivalent vector potentials $A^\prime=A+df$, the corresponding magnetic quantizations are unitarily equivalent; more precisely we have
$$
\Op^{A^\prime}\!(F)\ =\ e^{if(Q)}\Op^A(F)e^{-if(Q)},\qquad\forall F\in\mathscr{S}^\prime(\Xi).
$$
\end{proposition}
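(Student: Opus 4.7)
The strategy is to reduce the identity to the corresponding statement for the magnetic Weyl system $W^A$ at each point $(x,\xi)\in\Xi$, then pass the pointwise multiplication operators $e^{\pm if(Q)}$ through the weak operator integral \eqref{F-magn-quant}. The key gauge identity to establish first concerns the parallel transport $\Gamma^A$: parametrising the segment $[x,y]$ by $s\mapsto x+s(y-x)$ and using the fundamental theorem of calculus for $df$, one obtains
\beq\label{Gamma-gauge}
\Gamma^{A^\prime}(x,y)\ =\ \Gamma^A(x,y)\ +\ \int_{[x,y]}\!\!df\ =\ \Gamma^A(x,y)\ +\ f(y)\ -\ f(x),
\eeq
which immediately gives $\Lambda^{A^\prime}(x,y)=e^{if(x)}e^{-if(y)}\Lambda^A(x,y)$.

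Next I compute the conjugation of $W^A(x,\xi)$ by $e^{if(Q)}$. Since $e^{\pm if(Q)}$ are multiplication operators, their action commutes with the multiplication piece $\Lambda^A(Q,Q+x)$ of $W^A(x,\xi)$, while on the translation piece $W(x,\xi)$, a direct check using the definition \eqref{F-W-syst} yields
\[
\bigl(e^{if(Q)}W(x,\xi)e^{-if(Q)}f\bigr)(z)\ =\ e^{if(z)-if(z+x)}\bigl(W(x,\xi)f\bigr)(z).
\]
Combining this with \eqref{Gamma-gauge} and the expression $W^A(x,\xi)f=\Lambda^A(Q,Q+x)W(x,\xi)f$ gives the pointwise identity
\beq\label{Wgauge}
e^{if(Q)}\,W^A(x,\xi)\,e^{-if(Q)}\ =\ W^{A^\prime}(x,\xi),\qquad\forall(x,\xi)\in\Xi.
\eeq

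With \eqref{Wgauge} in hand, the proposition follows for $F\in\mathscr{S}(\Xi)$ by inserting it into Definition \ref{D-magn-quant}: since $e^{\pm if(Q)}$ are bounded (actually unitary) and independent of the integration variables, they may be pulled outside the weak operator integral defining $\Op^{A^\prime}(F)$, yielding $\Op^{A^\prime}(F)=e^{if(Q)}\Op^A(F)e^{-if(Q)}$.

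The remaining point, and the main technical issue, is the extension to $F\in\mathscr{S}^\prime(\Xi)$. For such $F$, $\Op^A(F)$ is interpreted as a continuous linear map $\mathscr{S}(\X)\to\mathscr{S}^\prime(\X)$ via the duality $\langle\Op^A(F)u,\overline{v}\rangle=\langle F,\mathcal{W}^A(u,v)\rangle$ with a Weyl-type kernel $\mathcal{W}^A(u,v)\in\mathscr{S}(\Xi)$. Because $e^{if(Q)}$ acts continuously on $\mathscr{S}(\X)$ and on $\mathscr{S}^\prime(\X)$ (as $f\in C^2(\X;\mathbb{R})$ together with its first two derivatives is at most polynomially controlled on compacts), the Schwartz-level identity extends by density/duality: test both sides against $u,v\in\mathscr{S}(\X)$, transferring $e^{\pm if(Q)}$ to the test functions and reducing to the already proved identity on Schwartz symbols. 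This duality extension is the only non-cosmetic obstacle; everything else is a direct unwinding of \eqref{Gamma-gauge} and \eqref{Wgauge}.
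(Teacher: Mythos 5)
The paper does not supply its own proof of this proposition; it simply cites Proposition~3.4 of \cite{MP1}. Your argument is correct and is the natural one (and, as far as I know, the one in \cite{MP1}): establish the gauge behaviour of the parallel transport $\Gamma^{A'}(x,y)=\Gamma^A(x,y)+f(y)-f(x)$, translate it into the conjugation identity $e^{if(Q)}W^A(x,\xi)e^{-if(Q)}=W^{A'}(x,\xi)$, and pull the $\xi$-independent unitaries through the weak operator integral defining $\Op^{A'}$ for $F\in\mathscr{S}(\Xi)$. Your pointwise computation for $W(x,\xi)$ and the $\Lambda$-factor are correct.

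One remark on the final extension step: the duality argument you sketch works, but it is slightly roundabout. Since \eqref{F-int-kernel} identifies the Schwartz kernel of $\Op^A(F)$ for every $F\in\mathscr{S}^\prime(\Xi)$ as $\mathfrak{K}^A_F=\Lambda^A\,\mathfrak{K}_F$, the gauge relation $\Lambda^{A'}(x,y)=e^{if(x)}e^{-if(y)}\Lambda^A(x,y)$ immediately gives
$\mathfrak{K}^{A'}_F(x,y)=e^{if(x)}\mathfrak{K}^A_F(x,y)e^{-if(y)}$,
which is precisely the kernel identity for $\Op^{A'}(F)=e^{if(Q)}\Op^A(F)e^{-if(Q)}$ on $\mathcal{L}\big(\mathscr{S}(\X);\mathscr{S}^\prime(\X)\big)$. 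This avoids invoking density at the final stage and makes the extension to $\mathscr{S}^\prime(\Xi)$ an equality of kernels, valid once one checks (as you do) that multiplication by $e^{\pm if}$ preserves $\mathscr{S}(\X)$ and $\mathscr{S}^\prime(\X)$; for that you should note that in the setting where $\Op^A$ is defined on all of $\mathscr{S}^\prime(\Xi)$ (Proposition~\ref{P-izo-Sprime}) both $A$ and $A'$ are in $\Apol$, so $df$ and hence all derivatives of $f$ of order $\geq 1$ are polynomially bounded, which is what makes $e^{\pm if(Q)}$ continuous on $\mathscr{S}(\X)$.
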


\begin{proposition}\label{Gen-Diamagn-ineq} \textit{\textbf{A Diamagnetic Inequality for symbols}}\\
Suppose given a magnetic field $B\in\Lpol$. Then, for any distribution $F\in\mathscr{S}^\prime(\X^*)$ (considered as the subspace of $\mathscr{S}^\prime(\Xi)$ of distributions constant along the directions in $\X$) such that $\mathcal{F}_*F$ is a non-negative distribution (takes non-negative values on non-negative test functions) we have the inequality
\beq\label{G-D-ineq}
\left|\Op^A(F)\phi\right|\ \leq\ \Op(F)|\phi|,\quad\forall \phi\in\mathscr{S}(\X),
\eeq
where $\Op(F)$ is the usual Weyl quantization of $F\in\mathscr{S}^\prime(\X^*)\subset\mathscr{S}^\prime(\Xi)$ (obtained from \eqref{D-magn-quant} for $A=0$).
\end{proposition}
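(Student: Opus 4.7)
My plan is to write $\Op^A(F)$, for a symbol $F$ depending only on $\xi$, explicitly as an integral operator whose kernel is the ordinary (non-magnetic) kernel of $\Op(F)$ twisted pointwise by the modulus-one phase $\Lambda^A$; under the hypothesis that $\mathcal{F}_*F$ is a non-negative distribution, this kernel is a positive measure, and the inequality \eqref{G-D-ineq} then follows from the triangle inequality.

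\textbf{Step 1 (positive kernel).} Since $\mathcal{F}_*F\in\mathscr{S}'(\X)$ is by assumption a non-negative distribution, the classical theorem of L.~Schwartz identifying non-negative distributions with positive Radon measures yields that $\mu := \mathcal{F}_*F$ is a (tempered) positive Radon measure on $\X$.

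\textbf{Step 2 (integral representation).} When $F$ depends only on $\xi$ one finds, by a direct computation with the integral defining $\widetilde{\mathcal{F}}$, that $(\widetilde{\mathcal{F}}F)(x,\xi) = (2\pi)^{d/2}\,\delta(\xi)\otimes\mu(x)$ in the distributional sense. Substituting this into Definition~\ref{D-magn-quant} and using the explicit action $\bigl(W^A(x,0)\phi\bigr)(z) = \Lambda^A(z,z+x)\phi(z+x)$ that follows from \eqref{F-magn-W-syst} and \eqref{F-W-syst}, one integrates out the $\xi$-Dirac mass and obtains
\[
\bigl(\Op^A(F)\phi\bigr)(z)\;=\;(2\pi)^{-d/2}\!\int_\X \Lambda^A(z,\,z+x)\,\phi(z+x)\,d\mu(x),\qquad \phi\in\mathscr{S}(\X).
\]
With $A\equiv 0$ (so $\Lambda^0\equiv 1$) the same computation gives $\bigl(\Op(F)|\phi|\bigr)(z) = (2\pi)^{-d/2}\!\int_\X |\phi(z+x)|\,d\mu(x)$.

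\textbf{Step 3 (triangle inequality and closing obstacle).} Since $|\Lambda^A(z,z+x)|=1$ for all $z,x\in\X$ and $\mu\geq 0$, the triangle inequality for measure-valued integrals gives $\bigl|\bigl(\Op^A(F)\phi\bigr)(z)\bigr|\leq(2\pi)^{-d/2}\!\int_\X|\phi(z+x)|\,d\mu(x)=\bigl(\Op(F)|\phi|\bigr)(z)$, which is precisely \eqref{G-D-ineq}. The remaining technical obstacle is the rigorous justification of Step 2 when $\mu$ is only a tempered measure, since the Dirac mass and the resulting oscillatory integral must be interpreted in a weak sense. I would resolve this by first establishing the formula for $F\in\mathscr{S}(\X^*)$ with $\mathcal{F}_*F\geq 0$, where every integral is absolutely convergent, and then extending to general such $F\in\mathscr{S}'(\X^*)$ by regularizing $\mu$ through convolution with a non-negative Gaussian of vanishing variance: positivity of the regularized $\mu_\epsilon$ is preserved, both sides of \eqref{G-D-ineq} depend continuously on the symbol in a topology compatible with the magnetic Weyl calculus developed in \cite{IMP1,IMP2}, and the limit $\epsilon\to 0$ closes the argument.
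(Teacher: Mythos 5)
Your proposal is correct and follows essentially the same route as the paper: identify $\mu=\mathcal{F}_*F$ as a positive tempered measure via Schwartz's theorem, represent $\Op^A(F)\phi$ as the integral of $\Lambda^A(\cdot,\cdot+x)\phi(\cdot+x)$ against $d\mu(x)$, and conclude by the triangle inequality since $|\Lambda^A|=1$ and $\mu\geq 0$. The only (minor) divergence is in justifying the distributional manipulations of your Step 2: the paper pairs $\Op^A(F)\phi$ with a test function $\psi$ and uses the decay estimate $\big|\big\langle\psi,U^A(x)\phi\big\rangle_{L^2(\X)}\big|\leq C_N\left<x\right>^{-N}$ to show directly that $\Op^A(F)\phi$ is a tempered complex measure (so its modulus is well defined), rather than your proposed Gaussian regularization of $\mu$, but both close the argument.
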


\begin{proof}
By hypothesis and using Theorem I.4.V in \cite{Schw} we conclude that $\mathcal{F}_*F$ is a temperate positive measure (see \cite{Schw}) $\mu_F$ on $\X$. Thus, for any test function $\phi\in\mathscr{S}(\X)$ we can write
$$
\Op^A(F)\phi=(2\pi)^{-d}\int_{\Xi}dx\,d\xi\,\big((\mathcal{F}^-\otimes\mathcal{F}_*)F\big)(\xi,x)\big(W^A(x,\xi)\phi\big)=(2\pi)^{-d/2}\int_{\X}\mu_F(dx)\big(U^A(x)\phi\big).
$$
Choosing one more test function $\psi\in\mathscr{S}(\X)$ we compute
$$
\big[\Op^A(F)\phi\big](\overline{\psi})=(2\pi)^{-d/2}\!\int_{\X}\mu_F(dx)\big\langle\psi,U^A(x)\phi\big\rangle_
{L^2(\X)}=
$$
$$
=(2\pi)^{-d/2}\int_{\X}\big(\left<x\right>^{-N}\!\mu_F(dx)\big)\left<x\right>^N\!\big\langle\psi,U^A(-x)\phi\big\rangle_{L^2(\X)}\,,
$$
where for $N\in\mathbb{N}$ large enough the measure $\mu_{F,N}(dx):=\left<x\right>^{-N}\mu_F(dx)$ is a finite positive measure with total mass $M_{F,N}<\infty$. On the other hand we notice that
$$
\big\langle\psi,U^A(x)\phi\big\rangle_{L^2(\X)}=\int_\X dz\,\Lambda^A(z,z+x)\overline{\psi(z)}\phi(z+x)
$$
and deduce that, for any $N\in\mathbb{N}$, there exists some $C_N>0$ such that
$$
\left|\big\langle\psi,U^A(x)\phi\big\rangle_{L^2(\X)}\right|\,\leq\,C_N\left<x\right>^{-N}\left(\underset{z\in\X}{\sup}\left<z\right>^{n_1}|\psi(z)|\right)\left(\underset{z\in\X}{\sup}\left<z\right>^{n_2}|\phi(z)|\right).
$$
Using the results in \cite{Schw} we conclude that $\Op^A(F)\phi$ is a tempered complex measure on $\X$. Thus its absolute value is a well defined positive tempered measure and we can write
$$
\left|\Op^A(F)\phi\right|\leq(2\pi)^{-d/2}\!\int_{\X}\mu_F(dx)\,\left|U^A(-x)\phi\right|=(2\pi)^{-d/2}\!\int_{\X}\mu_F(dx)\,U(-x)|\phi|=\Op(F)|\phi|.
$$
\end{proof}

\begin{description}
\item[The integral kernels.]
If we use formulas \eqref{F-magn-quant} and \eqref{F-magn-W-syst} we obtain for any $F\in\mathscr{S}(\Xi)$ and $\phi\in\mathscr{S}(\X)$ 
$$
\big(\Op^A(F)\phi\big)(z)=(2\pi)^{-d}\int_{\Xi}\big(\widetilde{\mathcal{F}}F\big)(x,\xi)\big(W^A(x,\xi)F\big)(z)\,dx\,d\xi\,=
$$
$$
=(2\pi)^{-d}\int_{\Xi}\big((\mathscr{F}^-\otimes\mathscr{F}_*)F\big)(\xi,x)\Lambda^A(z,z+x)
\left(e^{-(i/2)<\xi,x>}e^{-i<\xi,z>}f(z+x)\right)\,dx\,d\xi\,=
$$
$$
=(2\pi)^{-d/2}\int_{\X}\Lambda^A(z,z+x)\big((\bb1\otimes\mathscr{F}_*)F\big)(z+(x/2),x)f(z+x)\,dx\,=
$$
$$
=(2\pi)^{-d/2}\int_{\X}\Lambda^A(z,y)\big((\bb1\otimes\mathscr{F}_*)F\big)\big((z+y)/2,y-z\big)f(y)\,dy\,\equiv\int_{\X}\mathfrak{K}^A_F(z,y)f(y)\,dy.
$$
Thus the integral kernel of the operator $\Op^A(F)$ is
$$
\mathfrak{K}^A_F(x,y)=(2\pi)^{-d/2}\Lambda^A(x,y)\big((\bb1\otimes\mathscr{F}_*)F\big)\big((x+y)/2,y-x\big).
$$
Setting $\big(\Upsilon F\big)(x,y):=(2\pi)^{-d/2}F\big((x+y)/2,y-x\big)$, we can write 
\beq\label{F-int-kernel}
\mathfrak{K}^A_F\ =\ \Lambda^A\Upsilon(\bb1\otimes\mathscr{F}_*)F\equiv\Lambda^A\mathfrak{K}_F
\eeq
and notice that it defines isomorphisms $\mathscr{S}(\Xi)\overset{\sim}{\rightarrow}\mathscr{S}(\X\times\X)$ and 
$\mathscr{S}^\prime(\Xi)\overset{\sim}{\rightarrow}\mathscr{S}^\prime(\X\times\X)$.

\begin{remark}
The map $\Upsilon:\X\times\X\rightarrow\X\times\X$ is a linear bijection with Jacobian 1 and its inverse has the explicit action $\Upsilon^{-1}(u,v)=\big(u-v/2,u+v/2\big)$.
\end{remark}
\end{description}

The following two statements are proved in \cite{MP1}. The first one is an easy consequence of the results in Sections 50 and 51 in \cite{T}, taking into account that $\mathscr{S}(\X)$ and $\mathscr{S}^\prime(\X)$ are \textit{nuclear spaces} (\cite{T}).

\smallskip
\begin{proposition}\label{P-izo-S}
If the vector potential $A$ is in $\Apol$, then the application $\Op^A$ defines a linear and topological isomorphism 
$$
\Op^A:\mathscr{S}(\Xi)\overset{\sim}{\rightarrow}\mathcal{L}\big(\mathscr{S}^\prime(\X);\mathscr{S}(\X)\big).
$$
Here $\mathcal{L}\big(\mathscr{S}^\prime(\X);\mathscr{S}(\X)\big)$ is the space of linear continuous maps from $\mathscr{S}^\prime(\X)$ to $\mathscr{S}(\X)$ with the topology of uniform convergence on bounded subsets.
\end{proposition}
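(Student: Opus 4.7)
\smallskip
The natural plan is to reduce the statement to the Schwartz kernel theorem via the explicit kernel formula \eqref{F-int-kernel}, and then to check that each factor appearing in that formula defines a topological isomorphism on the corresponding Schwartz space.

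First, the Schwartz kernel theorem (Sections 50 and 51 in \cite{T}), combined with the nuclearity of $\mathscr{S}(\X)$, identifies $\mathcal{L}\bigl(\mathscr{S}^\prime(\X);\mathscr{S}(\X)\bigr)$ (with the topology of uniform convergence on bounded sets) with $\mathscr{S}(\X\times\X)$ as topological vector spaces, the correspondence sending a kernel $K\in\mathscr{S}(\X\times\X)$ to the operator $\phi\mapsto\int K(\cdot,y)\phi(y)\,dy$. Hence it is enough to show that the kernel assignment $F\mapsto\mathfrak{K}^A_F$ is a topological isomorphism $\mathscr{S}(\Xi)\overset{\sim}{\to}\mathscr{S}(\X\times\X)$.

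By \eqref{F-int-kernel} this assignment factors as
\[
F\ \overset{\bb1\otimes\mathcal{F}_*}{\longmapsto}\ (\bb1\otimes\mathcal{F}_*)F\ \overset{\Upsilon}{\longmapsto}\ \mathfrak{K}_F\ \overset{\Lambda^A\cdot}{\longmapsto}\ \Lambda^A\mathfrak{K}_F\,=\,\mathfrak{K}^A_F.
\]
The partial Fourier transform $\bb1\otimes\mathcal{F}_*$ is a topological automorphism of $\mathscr{S}(\Xi)$ by the standard theory, and $\Upsilon$ is the pull-back by a linear bijection of $\X\times\X$ with Jacobian $1$ and explicit linear inverse (cf.\ the Remark following \eqref{F-int-kernel}), hence a topological automorphism of $\mathscr{S}(\X\times\X)$. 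The whole matter therefore reduces to showing that multiplication by $\Lambda^A$ is a topological automorphism of $\mathscr{S}(\X\times\X)$, which is the only place where the hypothesis $A\in\Apol$ actually enters and is the main obstacle.

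For this last point, parametrizing the segment in $\Gamma^A(x,y)=\int_0^1\langle A(x+t(y-x)),y-x\rangle\,dt$, one sees that every derivative $\partial_x^a\partial_y^b\Gamma^A$ is a finite sum of integrals over $[0,1]$ of derivatives of the components of $A$ evaluated at points of $[x,y]$, multiplied by polynomial factors in $y-x$. Since the components of $A$ lie in $C^\infty_{\text{\sf pol}}(\X)$ and since $|x+t(y-x)|\leq|x|+|y|$ for $t\in[0,1]$, one obtains polynomial bounds of the form
\[
\bigl|\partial_x^a\partial_y^b\Gamma^A(x,y)\bigr|\,\leq\,C_{a,b}\,\langle x\rangle^{p_{a,b}}\langle y\rangle^{p_{a,b}}.
\]
A simple induction on $|a|+|b|$, using the chain rule and $|\Lambda^A|=1$, expresses $\partial_x^a\partial_y^b\Lambda^A$ as $\Lambda^A$ times a polynomial in lower-order partial derivatives of $\Gamma^A$, and hence yields analogous polynomial bounds for all derivatives of $\Lambda^A$; the same argument applies to $\overline{\Lambda^A}=e^{i\Gamma^A}$. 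Thus $\Lambda^A,\overline{\Lambda^A}\in C^\infty_{\text{\sf pol}}(\X\times\X)$, and multiplication by any such function is continuous on $\mathscr{S}(\X\times\X)$ by the Leibniz rule together with the standard bound of the polynomial weights by arbitrary negative powers of $\langle x\rangle\langle y\rangle$ absorbed by the Schwartz semi-norms. Multiplication by $\Lambda^A$ is therefore a continuous bijection of $\mathscr{S}(\X\times\X)$ with continuous inverse multiplication by $\overline{\Lambda^A}$, which closes the proof. The only real analytic content is the polynomial control of $\Lambda^A$ extracted from $A\in\Apol$; the rest is a bookkeeping application of the kernel theorem and of elementary properties of Schwartz spaces.
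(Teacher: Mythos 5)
Your argument is correct and follows the same route the paper indicates: reduce to the kernel theorem for nuclear spaces (Tr\'eves, Sections 50--51), factor the kernel map $F\mapsto\mathfrak{K}^A_F$ via \eqref{F-int-kernel}, and check that each factor (partial Fourier transform, the linear change of variables $\Upsilon$, and multiplication by $\Lambda^A$) is a topological automorphism of the relevant Schwartz space. Your verification that $\Lambda^A$ and $\overline{\Lambda^A}$ lie in $C^\infty_{\text{\sf pol}}(\X\times\X)$ when $A\in\Apol$ is exactly the point the paper leaves implicit, and it is handled correctly.
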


\smallskip
Using the \textit{Kernel Theorem} of L. Schwartz and \eqref{F-int-kernel} (\cite{T}), we can extend the magnetic quantization to the space of tempered distributions.

\smallskip
\begin{proposition}\label{P-izo-Sprime}
If the vector potential $A\in\Apol$, then the application $\Op^A$ defines a linear and topological isomorphism 
\beq\label{F-ext-Op}
\Op^A:\mathscr{S}^\prime(\Xi)\overset{\sim}{\rightarrow}\mathcal{L}\big(\mathscr{S}(\X);\mathscr{S}^\prime(\X)\big).
\eeq
\end{proposition}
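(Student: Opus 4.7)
The plan is to reduce the statement to the Schwartz Kernel Theorem, which, applied to the nuclear Fr\'echet spaces $\mathscr{S}(\X)$ (Section 51 in \cite{T}), provides a canonical topological isomorphism
\[
\kappa:\mathscr{S}^\prime(\X\times\X)\overset{\sim}{\longrightarrow}\mathcal{L}\big(\mathscr{S}(\X);\mathscr{S}^\prime(\X)\big).
\]
It therefore suffices to lift the kernel map $F\mapsto\mathfrak{K}^A_F$, which by Proposition \ref{P-izo-S} is already a topological isomorphism $\mathscr{S}(\Xi)\overset{\sim}{\to}\mathscr{S}(\X\times\X)$, to a topological isomorphism of $\mathscr{S}^\prime(\Xi)$ onto $\mathscr{S}^\prime(\X\times\X)$, and then to define the extended $\Op^A$ as $\kappa$ composed with this lift. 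Automatic agreement with the original definition on $\mathscr{S}(\Xi)$ is built into the construction.

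I would handle the lift by treating separately the three factors appearing in \eqref{F-int-kernel}. The partial Fourier transform $\bb1\otimes\mathscr{F}_*$ is a topological isomorphism $\mathscr{S}(\X\times\X^*)\overset{\sim}{\to}\mathscr{S}(\X\times\X)$ and extends by transposition to an isomorphism of the duals. The linear change of variables $\Upsilon(x,y)=\big((x+y)/2,y-x\big)$ has constant Jacobian $1$ and linear inverse, so $F\mapsto F\circ\Upsilon^{-1}$ is a topological automorphism of $\mathscr{S}(\X\times\X)$ and, by transposition, of $\mathscr{S}^\prime(\X\times\X)$.

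The only factor requiring the assumption $A\in\Apol$ is pointwise multiplication by $\Lambda^A$. The plan is to prove that $\Lambda^A\in C^\infty_{\text{\sf pol}}(\X\times\X)$, which, together with the fact that its inverse is $\overline{\Lambda^A}$ (also in $C^\infty_{\text{\sf pol}}(\X\times\X)$), implies that multiplication by $\Lambda^A$ is a continuous automorphism of $\mathscr{S}(\X\times\X)$ and, by transposition, of $\mathscr{S}^\prime(\X\times\X)$. For the required regularity I would use the parametrization
\[
\Gamma^A(x,y)=\sum_{j=1}^d(y_j-x_j)\int_0^1\!A_j\big(x+s(y-x)\big)\,ds,
\]
which is smooth on $\X\times\X$ and has every derivative $\partial_x^\alpha\partial_y^\beta\Gamma^A$ polynomially bounded because $A\in\Apol$. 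The chain rule then expresses each derivative of $\Lambda^A=e^{-i\Gamma^A}$ as a finite sum of products of derivatives of $\Gamma^A$ multiplied by the unimodular factor $\Lambda^A$ itself, so that the unimodularity of $\Lambda^A$ absorbs any exponential growth and every such derivative remains polynomially bounded.

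Composing the three extended isomorphisms yields the lifted kernel map, and composing further with $\kappa$ gives \eqref{F-ext-Op}. The essential technical step, and the only place where $A\in\Apol$ is used, is the verification that the exponential of the magnetic circulation belongs to $C^\infty_{\text{\sf pol}}(\X\times\X)$; the remainder is a formal application of the nuclearity of $\mathscr{S}(\X)$ and the standard duality between operators on Schwartz space and their transposes on tempered distributions.
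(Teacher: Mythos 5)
Your proposal is correct and follows the same route the paper indicates: the Schwartz kernel theorem combined with the factorization \eqref{F-int-kernel} of the kernel map into the partial Fourier transform, the linear change of variables $\Upsilon$, and multiplication by $\Lambda^A$, each extended to $\mathscr{S}^\prime$ by transposition. The paper itself only sketches this (delegating details to \cite{MP1}), and your verification that $\Lambda^A\in C^\infty_{\text{\sf pol}}(\X\times\X)$ via the parametrized line integral and unimodularity is precisely the point where $A\in\Apol$ enters.
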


Using Proposition \ref{P-izo-S} we notice that given a magnetic field $B\in\Lpol$, formula \eqref{F-transv-gauge} allows us to fix an associated vector potential $A\in\Apol$, and thus, for any pair of test functions 
$(\phi,\psi)\in\mathscr{S}(\Xi)\times\mathscr{S}(\Xi)$ the product $\Op^A(\phi)\Op^A(\psi)$ belongs to $\mathcal{L}\big(\mathscr{S}^\prime(\X);\mathscr{S}(\X)\big)$ and there exists a unique test function $\rho^B(\phi,\psi)\in\mathscr{S}(\Xi)$ such that
\beq\label{FD-sharp-B}
\Op^A(\phi)\Op^A(\psi)\ =\ \Op^A\big(\rho^B(\phi,\psi)\big).
\eeq

\begin{definition}\label{D-sharp-B}
For any magnetic field $B\in\Lpol$ we define the following composition map:
$$
\mathscr{S}(\Xi)\times\mathscr{S}(\Xi)\ni(\phi,\psi)\mapsto\phi\sharp^B\psi:=\rho^B(\phi,\psi)\in\mathscr{S}(\Xi),
$$
with $\rho^B(\phi,\psi)$ satisfying \eqref{FD-sharp-B}. We call it \textit{the magnetic Moyal product}. 
\end{definition}

\smallskip
Clearly $\rho^B(\phi,\psi)\in\mathscr{S}(\Xi)$ depends linearly and continuously (due also to Proposition \ref{P-izo-S}) on both variables $(\phi,\psi)\in\mathscr{S}(\Xi)\times\mathscr{S}(\Xi)$.
A straightforward computation allows us to prove that
\beq\label{F-sharp-B-1}
\big(\phi\sharp^B\psi\big)(X)=\pi^{-2d}\int_{\Xi}dY\int_{\Xi}dZ\,e^{-2i\sigma(Y,Z)}\varomega^B_x(y,z)\phi(X-Y)\psi(X-Z)=
\eeq
\beq\label{F-sharp-B-2}
=\pi^{-2d}\int_{\Xi}dY\int_{\Xi}dZ\,e^{-2i\sigma(X-Y,X-Z)}\omega^B(x,y,z)\phi(Y)\psi(Z),
\eeq
where 
\beq\label{FD-omega-B}
\varomega^B_x(y,z):=\exp\left\{{-i\int_{\mathcal{T}_x(y,z)}\hspace*{-14pt}B}\hspace*{8pt}\right\},\qquad
\omega^B(x,y,z):=\exp\left\{{-i\int_{\mathcal{T}(x,y,z)}\hspace*{-14pt}B}\hspace*{8pt}\right\},
\eeq
with $\mathcal{T}_x(y,z)$ the oriented triangle in $\X$ with vertices $x-y-z,x+y-z,x-y+z$ and $\mathcal{T}(x,y,z)$ the oriented triangle in $\X$ with vertices $y+z-x,z+x-y,x+y-z$. In the first Appendix to this paper we prove a number of estimations on the function $\varomega^B\in C^\infty_{\text{\sf pol}}\big(\X;C^\infty_{\text{\sf pol}}(\X\times\X)\big)$ that will be needed. 

\smallskip
In \cite{MP1} (Lemma 4.14 and Corollary 4.15) the following statement is proved.

\smallskip
\begin{proposition}\label{P-sharp-B}
Given a magnetic field $B\in\Lpol$,
\begin{enumerate}
\item for any pair of test functions 
$(\phi,\psi)\in\mathscr{S}(\Xi)\times\mathscr{S}(\Xi)$ the following equality holds:
\beq\label{FP-sharp-B-2}
\int_{\Xi}dX\,\big(\phi\sharp^B\psi\big)(X)\ =\ \int_{\Xi}dX\,\phi(X)\psi(X);
\eeq
\item for any three test functions 
$(\phi,\psi,\chi)\in\mathscr{S}(\Xi)\times\mathscr{S}(\Xi)\times\mathscr{S}(\Xi)$ the following equality holds:
\beq\label{FP-sharp-B-3}
\int_{\Xi}dX\,\big(\phi\sharp^B\psi\big)(X)\chi(X)\ =\ \int_{\Xi}dX\,\phi(X)\big(\psi\sharp^B\chi\big)(X)\ =\ \int_{\Xi}dX\,\psi(X)\big(\chi\sharp^B\phi\big)(X)).
\eeq
\end{enumerate}
\end{proposition}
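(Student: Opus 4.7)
The plan for part (1) is a direct distributional computation starting from \eqref{F-sharp-B-2}. Writing $X=(x,\xi)$, $Y=(y,\eta)$, $Z=(z,\zeta)$ and expanding the symplectic form gives
\[
\sigma(X-Y,X-Z)\ =\ \langle\xi,y-z\rangle\,+\,\langle\zeta-\eta,x\rangle\,+\,\sigma(Y,Z).
\]
Inserting this into $\int_\Xi dX\,(\phi\sharp^B\psi)(X)$ and interchanging the order of integration (justified by the Schwartz decay of $\phi$ and $\psi$ against the polynomial growth of $\omega^B$ controlled by the estimates of Appendix 1), I would first integrate out $\xi$ to produce a factor $\pi^d\,\delta(y-z)$. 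The decisive geometric observation is that the triangle $\mathcal{T}(x,y,y)$ has vertices $2y-x,\ x,\ x$ and hence degenerates, so that $\omega^B(x,y,y)\equiv 1$. The remaining $x$-integration yields $\pi^d\,\delta(\eta-\zeta)$; the prefactor $\pi^{-2d}$ cancels the two $\pi^d$ coming from the deltas, and on the support $\{y=z,\;\eta=\zeta\}$ the residual phase $e^{-2i\sigma(Y,Z)}$ equals $1$. What remains is exactly $\int_\Xi\phi(X)\psi(X)\,dX$.

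Part (2) then follows formally from (1) together with associativity of $\sharp^B$ on $\mathscr{S}(\Xi)$. Associativity is immediate from the definition: the operator identity $\Op^A(\phi)\,\Op^A(\psi)\,\Op^A(\chi)=\Op^A\!\big((\phi\sharp^B\psi)\sharp^B\chi\big)=\Op^A\!\big(\phi\sharp^B(\psi\sharp^B\chi)\big)$, combined with the injectivity of $\Op^A$ from Proposition \ref{P-izo-S}, yields $(\phi\sharp^B\psi)\sharp^B\chi=\phi\sharp^B(\psi\sharp^B\chi)$. Two applications of (1), with associativity in between, then give
\begin{align*}
\int_\Xi(\phi\sharp^B\psi)\,\chi\,dX &= \int_\Xi\big((\phi\sharp^B\psi)\sharp^B\chi\big)\,dX \\
&= \int_\Xi\big(\phi\sharp^B(\psi\sharp^B\chi)\big)\,dX = \int_\Xi\phi\,(\psi\sharp^B\chi)\,dX,
\end{align*}
which is the first equality in \eqref{FP-sharp-B-3}. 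For the third expression I would first deduce the two-factor cyclicity $\int(\phi\sharp^B\psi)\,dX=\int(\psi\sharp^B\phi)\,dX$ by applying (1) to each side and using commutativity of pointwise multiplication; combined with associativity this permutes the triple product $\phi\sharp^B\psi\sharp^B\chi$ cyclically to $\psi\sharp^B\chi\sharp^B\phi$ inside the integral, and a final use of (1) rewrites the latter as $\int\psi\,(\chi\sharp^B\phi)\,dX$.

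The main obstacle is the rigorous treatment of the oscillatory integrals in step (1) in the presence of the polynomially unbounded factor $\omega^B(x,y,z)$ coming from $B\in\Lpol$. The cleanest route is to perform the two Fourier integrations (first in $\xi$, then in $x$) as successive oscillatory integrals tested against the Schwartz function $\phi(Y)\psi(Z)$, with every polynomial growth of $\omega^B$ absorbed by the rapid decay of $\phi$ and $\psi$. The degeneracy identity $\omega^B(x,y,y)=1$ then closes the argument without any further structural input, and part (2) reduces to the purely algebraic manipulation above.
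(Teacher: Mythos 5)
The paper itself does not supply a proof of this Proposition; it cites Lemma 4.14 and Corollary 4.15 of \cite{MP1}. So the comparison can only be with the idea behind that reference, but your argument is correct and self-contained.

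For part (1) your computation is right: the expansion $\sigma(X-Y,X-Z)=\langle\xi,y-z\rangle+\langle\zeta-\eta,x\rangle+\sigma(Y,Z)$ is exact, the $\xi$- and $x$-integrations produce $\pi^d\delta(y-z)$ and $\pi^d\delta(\eta-\zeta)$ respectively, the crucial geometric input $\omega^B(x,y,y)=1$ (degeneracy of $\mathcal{T}(x,y,y)$ with vertices $2y-x,x,x$) kills the magnetic phase, and $\sigma(Y,Y)=0$ removes the residual oscillation; the $\pi^{-2d}$ prefactor is absorbed. One point worth tightening is the justification of the interchange of integration: the triple integral in $(X,Y,Z)$ is not absolutely convergent, since there is no decay in $x,\xi$. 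The Schwartz decay of $\phi,\psi$ handles $(Y,Z)$ but not $X$, so invoking it as the reason the interchange is legitimate is not quite accurate. A clean rigorous route is to insert a cutoff $\chi(X/R)$, apply Fubini to the truncated integral, observe that the $\xi$-integral gives an approximate identity $g_R(y-z)\to\pi^d\delta(y-z)$, and note that after passing to $z=y$ the factor $\omega^B$ disappears (this is where $\omega^B(x,y,y)=1$ is essential, since otherwise the $x$-integral would produce a genuine distribution rather than a delta), before letting $R\to\infty$. An even cleaner alternative, which avoids the oscillatory integrals altogether, is the kernel computation: $\int_\Xi(\phi\sharp^B\psi)\,dX=(2\pi)^d\!\int\!\mathfrak{K}^A_\phi(u,w)\mathfrak{K}^A_\psi(w,u)\,du\,dw$, and after the change of variables $(u,w)\mapsto((u+w)/2,w-u)$ the phases $\Lambda^A(u,w)\Lambda^A(w,u)=1$ cancel and Parseval gives $\int_\Xi\phi\psi\,dX$ directly.

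For part (2) your reduction via associativity (operator associativity plus injectivity of $\Op^A$ on $\mathscr{S}(\Xi)$ from Proposition \ref{P-izo-S}) and two applications of part (1) is correct, and the two-factor cyclicity you derive from part (1) plus commutativity of pointwise multiplication gives the third expression. A slightly more direct route, and probably the one closer to \cite{MP1}, is to note that $\sigma(X-Y,X-Z)=\sigma(X,Y)+\sigma(Y,Z)+\sigma(Z,X)$ and that the triangle $\mathcal{T}(x,y,z)$ with vertices $\{y+z-x,\,z+x-y,\,x+y-z\}$ is invariant under cyclic permutation of $(x,y,z)$, so that $\omega^B(x,y,z)=\omega^B(y,z,x)$; the triple integral $\pi^{-2d}\!\int\!\int\!\int e^{-2i[\sigma(X,Y)+\sigma(Y,Z)+\sigma(Z,X)]}\omega^B(x,y,z)\phi(Y)\psi(Z)\chi(X)\,dX\,dY\,dZ$ is then manifestly invariant under the cyclic relabelling $(X,Y,Z)\mapsto(Y,Z,X)$, which gives \eqref{FP-sharp-B-3} in one stroke without first proving part (1) or associativity. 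Both routes are valid; yours is a bit more structural, the direct one is more symmetric.
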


The above result allows us to extend the magnetic Moyal product by duality and define two bilinear bicontinuous maps
\beq\label{Ext-sharp-B}
\sharp^B:\mathscr{S}^\prime(\Xi)\times\mathscr{S}(\Xi)\rightarrow\mathscr{S}^\prime(\Xi),\quad
\sharp^B:\mathscr{S}(\Xi)\times\mathscr{S}^\prime(\Xi)\rightarrow\mathscr{S}^\prime(\Xi).
\eeq

\subsection{The magnetic Moyal algebra.}

We shall briefly recall some definitions and results from \cite{MP1}. Let us define 
\beq\label{FD-Moyal-alg}
\mathfrak{M}^B(\Xi)\ :=\ \left\{F\in\mathscr{S}^\prime(\Xi)\,\mid\,F\sharp^B\phi\in\mathscr{S}(\Xi),\  \phi\sharp^BF\in\mathscr{S}(\Xi),\ \forall\phi\in\mathscr{S}(\Xi)\right\}.
\eeq
\begin{remark}\label{Moyal-alg} \textit{(Proposition 4.20 in \cite{MP1})}\\
$\mathfrak{M}^B(\Xi)$ is a unital *-algebra (with the *-involution given by the complex conjugation) containing $\mathscr{S}(\Xi)$ as a two-sided $^*$-ideal.
\end{remark}

\smallskip
\begin{definition}\label{D-magn-Moyal-alg}
We call $\mathfrak{M}^B(\Xi)$ \emph{the magnetic Moyal algebra} associated to the magnetic field $B\in\Lpol$.
\end{definition}

\smallskip
Using Proposition \ref{P-sharp-B} we can extend $\sharp^B:\mathfrak{M}^B(\Xi)\times\mathfrak{M}^B(\Xi)\rightarrow
\mathfrak{M}^B(\Xi)$ by duality to the following applications:
\beq\label{Ext-sharp-B-2}
\sharp^B:\mathfrak{M}^B(\Xi)\times\mathscr{S}^\prime(\Xi)\rightarrow
\mathscr{S}^\prime(\Xi),\quad
\sharp^B:\mathscr{S}^\prime(\Xi)\times\mathfrak{M}^B(\Xi)\rightarrow
\mathscr{S}^\prime(\Xi).
\eeq

Arguments similar to those in the previous subsection allow us to obtain the following statement:

\smallskip
\begin{proposition}\label{P-izo-Moyal}
If the vector potential $A\in\Apol$, then the application $\Op^A$ defines a linear and topological isomorphism 
\beq\label{F-ext-Op-2}
\Op^A:\mathfrak{M}^B(\Xi)\overset{\sim}{\rightarrow}\mathcal{L}\big(\mathscr{S}(\X);\mathscr{S}(\X)\big)\bigcap\mathcal{L}\big(\mathscr{S}^\prime(\X);\mathscr{S}^\prime(\X)\big).
\eeq
\end{proposition}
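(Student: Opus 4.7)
The plan is to follow the same scheme that yielded Propositions \ref{P-izo-S} and \ref{P-izo-Sprime}, reducing the statement to the already-established identification $\Op^A:\mathscr{S}(\Xi)\overset{\sim}{\rightarrow}\mathcal{L}\big(\mathscr{S}^\prime(\X);\mathscr{S}(\X)\big)$ of Proposition \ref{P-izo-S}, combined with the algebraic identity $\Op^A(F\sharp^B\phi)=\Op^A(F)\Op^A(\phi)$ and its mirror $\Op^A(\phi\sharp^BF)=\Op^A(\phi)\Op^A(F)$, which are valid for $F\in\mathscr{S}^\prime(\Xi)$ and $\phi\in\mathscr{S}(\Xi)$ through the duality extension \eqref{Ext-sharp-B}. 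Injectivity of the global map is inherited from Proposition \ref{P-izo-Sprime}; the set-theoretic equality then splits into the two obvious inclusions.

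For the forward inclusion I fix $F\in\mathfrak{M}^B(\Xi)$ and $f\in\mathscr{S}(\X)$ and represent $f$ as $\Op^A(\phi_f)u_0$ for some conveniently chosen $L^2$-normalized $u_0\in\mathscr{S}(\X)$ and an associated $\phi_f\in\mathscr{S}(\Xi)$; the existence of $\phi_f$ is handed to us by applying Proposition \ref{P-izo-S} to the rank-one operator $v\mapsto\langle u_0,v\rangle f$, which lies in $\mathcal{L}\big(\mathscr{S}^\prime(\X);\mathscr{S}(\X)\big)$. The hypothesis supplies $F\sharp^B\phi_f\in\mathscr{S}(\Xi)$, so the identity above gives
$$
\Op^A(F)f\ =\ \Op^A(F\sharp^B\phi_f)u_0\ \in\ \mathscr{S}(\X).
$$
This pointwise statement $\Op^A(F)\mathscr{S}(\X)\subset\mathscr{S}(\X)$ is then upgraded to continuity $\mathscr{S}(\X)\rightarrow\mathscr{S}(\X)$ by the closed graph theorem, the graph being closed in $\mathscr{S}(\X)\times\mathscr{S}^\prime(\X)$ because $\Op^A(F)\in\mathcal{L}\big(\mathscr{S}(\X);\mathscr{S}^\prime(\X)\big)$ by Proposition \ref{P-izo-Sprime}. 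For the second continuity $\mathscr{S}^\prime(\X)\rightarrow\mathscr{S}^\prime(\X)$ I would invoke the $*$-stability of $\mathfrak{M}^B(\Xi)$ from Remark \ref{Moyal-alg}: since $\overline{F}\in\mathfrak{M}^B(\Xi)$, the previous step applied to $\overline{F}$ gives $\Op^A(\overline{F})=\Op^A(F)^*\in\mathcal{L}\big(\mathscr{S}(\X);\mathscr{S}(\X)\big)$, and transposing produces the desired extension of $\Op^A(F)$ to $\mathscr{S}^\prime(\X)$.

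The reverse inclusion is a direct use of the key identity. If $\Op^A(F)\in\mathcal{L}\big(\mathscr{S}(\X);\mathscr{S}(\X)\big)\cap\mathcal{L}\big(\mathscr{S}^\prime(\X);\mathscr{S}^\prime(\X)\big)$, then for every $\phi\in\mathscr{S}(\Xi)$ the composition $\Op^A(F)\Op^A(\phi)$ factors as $\mathscr{S}^\prime(\X)\rightarrow\mathscr{S}(\X)\rightarrow\mathscr{S}(\X)$ via Proposition \ref{P-izo-S}, while $\Op^A(\phi)\Op^A(F)$ factors as $\mathscr{S}^\prime(\X)\rightarrow\mathscr{S}^\prime(\X)\rightarrow\mathscr{S}(\X)$; both lie in $\mathcal{L}\big(\mathscr{S}^\prime(\X);\mathscr{S}(\X)\big)$, so Proposition \ref{P-izo-S} returns $F\sharp^B\phi,\ \phi\sharp^BF\in\mathscr{S}(\Xi)$ and thus $F\in\mathfrak{M}^B(\Xi)$.

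The point that I expect to require the most care is the \emph{topological} part of the statement rather than the underlying set equality. The natural topology on $\mathfrak{M}^B(\Xi)$ is the projective one defined by the two multiplication maps $F\mapsto F\sharp^B\phi$ and $F\mapsto\phi\sharp^BF$ into $\mathscr{S}(\Xi)$ as $\phi$ ranges over $\mathscr{S}(\Xi)$, while the target space carries the intersection of the two topologies of uniform convergence on bounded subsets of $\mathscr{S}(\X)$ and of $\mathscr{S}^\prime(\X)$. Matching them amounts to chaining the bicontinuity of Proposition \ref{P-izo-S} through $\Op^A(F\sharp^B\phi)=\Op^A(F)\Op^A(\phi)$ and promoting the separate continuities in $F$ and $\phi$ to joint continuity on the relevant Fr\'{e}chet spaces by the standard closed graph / Banach--Steinhaus apparatus---this is the bookkeeping implicit in the phrase \emph{arguments similar to those in the previous subsection} already used after Proposition \ref{P-izo-Sprime}.
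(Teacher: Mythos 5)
Your argument is correct, and since the paper itself supplies no proof of Proposition \ref{P-izo-Moyal} (it only says ``arguments similar to those in the previous subsection''), your writeup is a genuine filling-in rather than a deviation. The two inclusions via the compatibility identity $\Op^A(F\sharp^B\phi)=\Op^A(F)\Op^A(\phi)$, the rank-one realization of $f\mapsto\langle u_0,\cdot\rangle f$ through Proposition \ref{P-izo-S}, the closed-graph upgrade to continuity on $\mathscr{S}(\X)$, and the passage to $\mathscr{S}'(\X)$ via $\overline{F}\in\mathfrak{M}^B(\Xi)$ are exactly the natural way to carry out the suggestion. Two small points worth making explicit in a polished version: (i) the identity $\Op^A(F\sharp^B\phi)=\Op^A(F)\Op^A(\phi)$ for $F\in\mathscr{S}'(\Xi)$, $\phi\in\mathscr{S}(\Xi)$ is only implicit in the construction of \eqref{Ext-sharp-B} and deserves a one-line verification (both sides lie in $\mathcal{L}(\mathscr{S}'(\X);\mathscr{S}'(\X))$, agree on the dense subspace $\mathscr{S}(\X)$ by the kernel formula, hence agree); (ii) in the step producing the $\mathscr{S}'(\X)\to\mathscr{S}'(\X)$ continuity, what you use is the Hermitian adjoint $\Op^A(F)=\Op^A(\overline{F})^*$ with respect to the sesquilinear $\mathscr{S}$--$\mathscr{S}'$ pairing, not the plain transpose; your wording ``transposing'' is slightly off but the mechanism is right.
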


Using Proposition \ref{P-izo-Sprime}, for any tempered distribution $F\in\mathscr{S}^\prime(\Xi)$ we can consider the restriction of $\Op^A(F)$ to $L^2(\X)$ (considering that any class in $L^2$ defines a unique tempered distribution).

\smallskip
\begin{definition}\label{D-bd-obs}
Given a vector potential $A\in\Apol$, we define \emph{the algebra of bounded magnetic symbols} associated to the magnetic field $B=dA$ as
\beq\label{FD-bd-obs}
\mathfrak{C}^B(\Xi):=\left\{F\in\mathscr{S}^\prime(\Xi)\,\mid\,\Op^A\big[L^2(\X)\big]\subset L^2(\X)\right\}.
\eeq
\end{definition}

By the Uniform Boundedness Principle (\cite{RS1}) $F\in\mathfrak{C}^B(\Xi)$ if and only if $\Op^A(F)\in\mathbb{B}\big(L^2(\X)\big)$ and using Proposition \ref{P-gauge-cov} we see that this condition only depends on the magnetic field $B=dA$.

\smallskip
\begin{proposition}\label{P-Moyal-H-symb} \textit{(Lemma 2.1 in \cite{IMP1})}\\
Given a magnetic field $B\in\Lpol$ we have that  $S^p(\Xi,\X)\subset\mathfrak{M}^B(\Xi)$ for any $p\in\mathbb{R}$.
\end{proposition}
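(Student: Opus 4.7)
My plan is to show that for any $\phi \in \mathscr{S}(\Xi)$ both $F \sharp^B \phi$ and $\phi \sharp^B F$ belong to $\mathscr{S}(\Xi)$, starting from the oscillatory integral representation \eqref{F-sharp-B-1},
$$
(F \sharp^B \phi)(X) = \pi^{-2d} \int_{\Xi} dY \int_{\Xi} dZ\, e^{-2i\sigma(Y,Z)}\, \varomega^B_x(y,z)\, F(X-Y)\, \phi(X-Z),
$$
and treating $\phi \sharp^B F$ symmetrically. Since $X$-differentiation passes inside the integral and only produces derivatives of $F$, $\phi$, and $\varomega^B_x$ (all lying in the same families of classes), verifying $F \sharp^B \phi \in \mathscr{S}(\Xi)$ reduces to absolute convergence of a family of integrals of this form, together with the bound $C_N \langle X \rangle^{-N}$ for every $N \in \mathbb{N}$.

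The main technical device is integration by parts in the phase. Since $\sigma(Y,Z) = \langle \eta,z \rangle - \langle \zeta,y \rangle$, one has $(1 - \Delta_\eta)e^{-2i\sigma} = (1+4|z|^2)e^{-2i\sigma}$, $(1 - \Delta_y)e^{-2i\sigma} = (1+4|\zeta|^2)e^{-2i\sigma}$, and symmetrically for $z$ and $\zeta$. I would apply $(1-\Delta_\eta)^{k_1}$ and $(1-\Delta_y)^{k_2}$ to gain $\langle z \rangle^{-2k_1}\langle \zeta \rangle^{-2k_2}$ at the cost of derivatives of $F(X-Y)$ and of $\varomega^B_x$, and $(1-\Delta_z)^{l_1}$ and $(1-\Delta_\zeta)^{l_2}$ on the other side to gain $\langle \eta \rangle^{-2l_1}\langle y \rangle^{-2l_2}$ at the cost of derivatives of $\phi(X-Z)$ and of $\varomega^B_x$. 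The resulting integrand is a finite sum of terms whose pointwise bound is of the shape
$$
C\,\langle Y \rangle^{-2(k_2+l_2)}\,\langle Z \rangle^{-2(k_1+l_1)}\,|G_1(X-Y)|\,|G_2(X-Z)|\,|H(x,y,z)|,
$$
where $G_1$ is a derivative of $F$, $G_2$ a derivative of $\phi$, and $H$ a derivative of $\varomega^B_x$.

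The polynomial decay in $X$ is extracted via Peetre's inequality applied to the Schwartz factor: $|G_2(X-Z)| \leq C_M \langle X-Z \rangle^{-M} \leq C'_M \langle X \rangle^{-M}\langle Z \rangle^{M}$ for every $M$. The hypothesis $F \in S^p(\Xi,\X)$ gives the uniform bound $|G_1(X-Y)| \leq C \langle \xi - \eta \rangle^{p-s}$, where $s$ is the number of $\eta$-derivatives taken; by Peetre, this is dominated by $C \langle \xi \rangle^{|p-s|} \langle \eta \rangle^{|p-s|}$. Finally, the derivatives of $\varomega^B_x(y,z)$ have polynomial growth in $x$, $y$, $z$ separately; the $x$-growth is absorbed by choosing $M$ large, and the $(y,z)$-growth is dominated by $\langle Y \rangle^{-2(k_2+l_2)}\langle Z \rangle^{-2(k_1+l_1)}$ for $k_i, l_i$ large enough. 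After this, the double integral converges absolutely and yields the required $\langle X \rangle^{-N}$ bound.

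The main obstacle is the careful estimate of $\partial_x^a \partial_y^{a'} \partial_z^{b'} \varomega^B_x(y,z)$ when $B$ is only polynomially bounded. Expanding via the Leibniz and Fa\`a di Bruno rules produces products of factors of the form $\int_{\mathcal{T}_x(y,z)} \partial^\gamma B$, and because $B$ and its derivatives have polynomial growth one must control how polynomial degrees compound under successive differentiations. Establishing these bounds, in the spirit of \eqref{FD-weights-B} but adapted to the polynomial setting via the weights $\nu^p_m$ of \eqref{FD-seminorms-Cinfpol}, is the content of Appendix 1; once in hand, the integration-by-parts scheme closes the argument routinely.
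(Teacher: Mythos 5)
Your overall strategy—writing $F\sharp^B\phi$ as the oscillatory integral \eqref{F-sharp-B-1} and integrating by parts in $y,z,\eta,\zeta$ to manufacture decay—is the natural one, and for $B\in\Lbc$ the bookkeeping you sketch closes without difficulty (this is essentially Proposition \ref{P-Ap-1}). For $B\in\Lpol$, however, there is a genuine circularity that a literal reading of your argument does not resolve. Each $y$- or $z$-derivative of $\varomega^B_x$ produced by $(1-\Delta_y)^{k_2}$ or $(1-\Delta_z)^{l_1}$ brings down a factor $\partial^\gamma F^B_x$, and since $B$ and its derivatives grow polynomially, the degree of the resulting polynomial weight in $x,y,z$ increases roughly linearly with $k_2+l_1$. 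You propose to absorb the $(y,z)$-growth by choosing $l_2,k_1$ large and the $x$-growth by choosing the Peetre exponent $M$ large; but increasing $M$ forces $k_2$ to increase (so that $\langle\zeta\rangle^{-2k_2}$ beats $\langle\zeta\rangle^{M}$), which raises the $x$-growth that $M$ was supposed to beat, and your crude Peetre bound $|G_1(X-Y)|\leq C\langle\xi\rangle^{|p-s|}\langle\eta\rangle^{|p-s|}$ with $s$ as large as $2k_1$ forces $l_1$ to grow with $k_1$, which raises the $(y,z)$-growth that $k_1$ itself must beat. When the polynomial degree of $B$ is $\geq 1$ these coupled inequalities have no solution, so "for $k_i,l_i$ large enough" is precisely where the gap hides.

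Both loops can be closed, but it requires two observations that are absent from your sketch. First, the right consequence of $F\in S^p(\Xi,\X)$ is not $\langle\eta\rangle^{|p-s|}$ but rather: for $s>p$ one has $\langle\xi-\eta\rangle^{p-s}\leq 1$, and for $0\leq s\leq p$ Peetre gives $\langle\xi-\eta\rangle^{p-s}\leq C\langle\xi\rangle^{p}\langle\eta\rangle^{p}$; so the $\eta$-growth contributed by $F$ is $\leq\langle\eta\rangle^{\max(0,p)}$ \emph{independently of the number of $\eta$-derivatives}. This allows $l_1$ to be fixed once and for all at $l_1>(d+\max(0,p))/2$, bounding the number of $z$-derivatives of $\varomega^B_x$. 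Second, the $\zeta$-integral already converges because $\phi$ is Schwartz in $\xi-\zeta$, so $(1-\Delta_y)^{k_2}$ is superfluous and one should take $k_2=0$, so that $\varomega^B_x$ receives no $y$-derivatives. With $l_1$ and $k_2$ fixed, the polynomial weights on all the $\varomega^B_x$-derivatives that appear have exponents depending only on $d,p,|\alpha|,B$, and one may then choose $M$, then $k_1$, then $l_2$, in that order, with no feedback. Without ordering the parameters this way the integration-by-parts scheme does not close "routinely," as you assert, and the final estimate never becomes uniform.
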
 

\smallskip
\begin{remark}\label{R-bd-obs}
We may transport on $\mathfrak{C}^B(\Xi)$ the operatorial norm from $\mathbb{B}\big(L^2(\X)\big)$ by $\|F\|_B:=\|\Op^A(F)\|_{\mathbb{B}(L^2(\X))}$. This norm only depends on $B$, due to Proposition \ref{P-gauge-cov}. Then $\mathfrak{C}^B(\Xi)$ becomes a $C^*$-algebra.
\end{remark}

\paragraph{The weak operator topology on $\mathfrak{C}^B(\Xi)$.}

We suppose fixed a magnetic field $B\in\Lpol$.

\begin{definition}
On $\mathfrak{C}^B(\Xi)$ we consider the locally convex topology $\mathfrak{T}^B_{\text{\sf opw}}$ defined by the semi-norms
$$
\rho^A_{u,v}(F)\,:=\,\left\langle u,\Op^A(F)v\right\rangle_{L^2(\X)},
$$
indexed by the pairs $(u,v)\in\big[L^2(\X)\big]^2$.
\end{definition}

\smallskip
\begin{proposition}
Taking into account that $\mathfrak{C}^B(\Xi)\subset\mathscr{S}^\prime(\Xi)$, the topology $\mathfrak{T}^B_{\text{\sf opw}}$ coincides on bounded subsets of $\mathfrak{C}^B(\Xi)$ (with respect to the $\|\cdot\|_B$-topology) with the \textit{weak distribution topology} induced from $\mathscr{S}^\prime(\Xi)$.
\end{proposition}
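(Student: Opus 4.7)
The plan is to prove coincidence of the two topologies on every $\|\cdot\|_B$-bounded subset by establishing cross-continuity of each generating family of semi-norms. The essential tool is Proposition \ref{P-izo-S} together with the Schwartz kernel theorem: Schwartz symbols correspond via $\Op^A$ to operators with kernel in $\mathscr{S}(\X\times\X)$, in particular to trace-class operators with rapidly decreasing singular values. Throughout, the $\|\cdot\|_B$-boundedness provides the uniform control $\|\Op^A(F)\|_{\mathbb{B}(L^2(\X))}\le M$ that will make all approximations uniform in $F$. The key preliminary fact, obtained by applying the integral kernel formula \eqref{F-int-kernel} and the change of variables $\Upsilon^{-1}$, is that for $u,v\in\mathscr{S}(\X)$
\[
\rho^A_{u,v}(F)\ =\ \int_\Xi F(X)\,\Psi^A_{u,v}(X)\,dX,
\]
where $\Psi^A_{u,v}\in\mathscr{S}(\Xi)$ is (up to normalization) the partial Fourier transform of the map $(q,p)\mapsto\Lambda^A(q-p/2,q+p/2)\,\bar u(q-p/2)\,v(q+p/2)$, itself Schwartz because $A\in\Apol$ makes $\Lambda^A$ smooth with polynomially bounded derivatives.

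For continuity of each $\rho^A_{u,v}$ with $u,v\in L^2(\X)$ in the weak-distribution topology, the previous step handles Schwartz vectors directly, and an $\varepsilon/3$ argument with $L^2$-density of $\mathscr{S}(\X)$ extends to general $u,v$: choosing $u_n,v_n\in\mathscr{S}(\X)$ with $u_n\to u$, $v_n\to v$ in $L^2$,
\[
|\rho^A_{u,v}(F)-\rho^A_{u_n,v_n}(F)|\ \le\ M\bigl(\|u-u_n\|_{L^2}\|v\|_{L^2}+\|u_n\|_{L^2}\|v-v_n\|_{L^2}\bigr),
\]
uniformly in $F$ on the bounded subset, while $\rho^A_{u_n,v_n}$ is a weak-distribution semi-norm.

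For the converse direction, I would fix $\phi\in\mathscr{S}(\Xi)$ and expand the Schwartz kernel of $\Op^A(\phi)$ in the orthonormal Hermite basis $(h_\alpha)_{\alpha\in\mathbb{N}^d}$ of $L^2(\X)$:
\[
\Op^A(\phi)\ =\ \sum_{\alpha,\beta\in\mathbb{N}^d} c_{\alpha\beta}\,|h_\alpha\rangle\langle h_\beta|,\qquad \sum_{\alpha,\beta}|c_{\alpha\beta}|<\infty,
\]
the absolute summability following from rapid decay of the Hermite coefficients of a Schwartz kernel (repeated integration by parts against the harmonic oscillator). Since $\Op^A(\Psi^A_{h_\beta,h_\alpha})=|h_\alpha\rangle\langle h_\beta|$ by the first paragraph and $\Op^A$ is injective on $\mathscr{S}(\Xi)$, we obtain $\phi=\sum c_{\alpha\beta}\Psi^A_{h_\beta,h_\alpha}$ as a convergent series in $\mathscr{S}(\Xi)$; pairing with $F\in\mathfrak{C}^B(\Xi)\subset\mathscr{S}'(\Xi)$ then gives
\[
\int_\Xi \phi(X)\,F(X)\,dX\ =\ \sum_{\alpha,\beta} c_{\alpha\beta}\,\rho^A_{h_\beta,h_\alpha}(F),
\]
and the uniform bound $|\rho^A_{h_\beta,h_\alpha}(F)|\le M$ on the bounded subset allows truncation with tail error $M\sum_{|\alpha|+|\beta|>N}|c_{\alpha\beta}|\to 0$, approximating $F\mapsto\int\phi F$ by a finite linear combination of $\mathfrak{T}^B_{\text{\sf opw}}$ semi-norms.

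The main obstacle will be verifying the convergence $\sum c_{\alpha\beta}\Psi^A_{h_\beta,h_\alpha}\to\phi$ in the Fr\'echet topology of $\mathscr{S}(\Xi)$: one has to weigh the polynomial growth in $(|\alpha|,|\beta|)$ of the Schwartz semi-norms of $\Psi^A_{h_\beta,h_\alpha}$ (inherited from the Hermite functions and compounded by the polynomial growth of $\Lambda^A$) against the super-polynomial decay of the Hermite coefficients $(c_{\alpha\beta})$. This is the one step where the explicit formula for $\Psi^A_{u,v}$ from the first paragraph becomes essential, and once it is in place the rest of the argument reduces to dominated summation.
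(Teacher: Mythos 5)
Your proof is sound and, modulo the gap you yourself flag at the end, complete. The identity $\rho^A_{u,v}(F)=\int_\Xi F\,\Psi^A_{u,v}\,dX$ with $\Psi^A_{u,v}\in\mathscr{S}(\Xi)$ for $u,v\in\mathscr{S}(\X)$ is correct (and it is here that $A\in\Apol$ enters, to keep $\Lambda^A$ and its derivatives polynomially bounded so that $\Psi^A_{u,v}$ stays Schwartz); the uniform bound $\sup_{\mathcal{B}}\|\Op^A(F)\|_{\mathbb{B}(L^2)}\le M$ on a $\|\cdot\|_B$-bounded set $\mathcal{B}$ then makes both your $\varepsilon/3$ passage to general $u,v\in L^2(\X)$ and, in the converse direction, the tail bound $|\rho^A_{h_\beta,h_\alpha}(F)|\le M$ work as stated, yielding cross-continuity of the two generating families of seminorms on $\mathcal{B}$.

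The gap you flag --- convergence of $\sum c_{\alpha\beta}\Psi^A_{h_\beta,h_\alpha}$ to $\phi$ in $\mathscr{S}(\Xi)$ --- is genuine but closes more simply than the direct balancing of coefficient decay against the Schwartz seminorms of $\Psi^A_{h_\beta,h_\alpha}$ that you anticipate. The double Hermite expansion $\sum c_{\alpha\beta}\,h_\alpha\otimes\bar h_\beta$ converges to $\mathfrak{K}^A_\phi$ in the Fr\'echet topology of $\mathscr{S}(\X\times\X)$, since that topology is equivalently described by the weighted $\ell^2$-norms $\bigl(\sum_{\alpha,\beta}(1+|\alpha|+|\beta|)^{2k}|c_{\alpha\beta}|^2\bigr)^{1/2}$ obtained by applying powers of the harmonic oscillator on $\X\times\X$; super-polynomial decay of $(c_{\alpha\beta})$ therefore already gives convergence of the partial sums. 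Proposition \ref{P-izo-S}, combined with the kernel identification $\mathcal{L}\bigl(\mathscr{S}^\prime(\X);\mathscr{S}(\X)\bigr)\cong\mathscr{S}(\X\times\X)$ noted after \eqref{F-int-kernel}, exhibits $(\Op^A)^{-1}$ as a \emph{continuous} map $\mathscr{S}(\X\times\X)\to\mathscr{S}(\Xi)$, so pulling the kernel series back through this isomorphism gives the required $\mathscr{S}(\Xi)$-convergence of the symbol series with no further estimate on $\Psi^A_{h_\beta,h_\alpha}$: the isomorphism does the bookkeeping for you. (The precise normalization is $\Op^A(\Psi^A_{h_\beta,h_\alpha})=(2\pi)^{-d}\,|h_\alpha\rangle\langle h_\beta|$, so the coefficient in the expansion of $\phi$ is $(2\pi)^d c_{\alpha\beta}$, which of course changes nothing.) With this step filled in, your argument is a correct proof of the proposition.
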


\section{Observables in bounded smooth magnetic fields.}

In the estimations that follow we shall often use operators of the form $\left<\nabla\right>^s$ (Fourier transforms of symbols of type $\mathfrak{p}_s$) and we shall frequently prefer to work with differential operators, so that we shall when possible consider orders of the form $s=2N$ with $N\in\mathbb{N}$ even if this will give slightly weaker estimations. We shall use the notation $\tilde{d}:=2[d/2]+2$ and for any $p\in\mathbb{R}_+$ we set $\tilde{p}:=2[(d+p)/2]+2$.

\subsection{Composition of H\"{o}rmander type symbols.}

In \cite{IMP1} we have proven a result concerning the composition of H\"{o}rmander type symbols (Proposition 2.6) that is very similar to the known result for Weyl calculus. We present here a new proof of a simplified version (that is enough for the applications to quantum mechanics that we have in view), emphasizing on the dependence of the estimations on the behaviour of the magnetic field. In fact we present this result in the following Theorem that is a direct consequence of the Proposition \ref{P-Ap-1} that we prove in the second Appendix to this paper.

\smallskip
\begin{theorem}\label{T-bound}
Given a magnetic field $B\in\Lbc$, for any pair $(p_1,p_2)\in\mathbb{R}^2$ the restriction of the Moyal product to $S^{p_1}(\Xi,\X)\times S^{p_2}(\Xi,\X)$ defines a continuous bilinear application
$$
S^{p_1}(\Xi,\X)\times S^{p_2}(\Xi,\X)\ni(F,G)\mapsto F\sharp^BG\in S^{p_1+p_2}(\Xi,\X).
$$
More precisely, for any pair $(q_1,q_2)\in\mathbb{N}\times\mathbb{N}$ there exists a constant $C:=C(d,p_1,p_2,q_1,q_2)$ such that
$$
\nu^{p_1+p_2-q_2}_{q_1,q_2}\big(F\sharp^BG\big)\ \leq\ C\,\mathring{\mathfrak{w}}_{q_1+\tilde{p}_1+\tilde{p}_2}(B)\big[\nu^{p_1}_{q_1+\tilde{p}_2,q_2+m_2}(F)\big]\big[\nu^{p_2}_{q_1+\tilde{p}_1,q_2+m_1}(G)\big],
$$
where $m_1=2[\tilde{p}_2+(q_1+\tilde{p}_1)/2]+2$ and $m_2=2[\tilde{p}_1+(q_1+\tilde{p}_2)/2]+2$.
\end{theorem}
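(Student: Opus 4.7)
The plan is to derive the bound directly from the integral representation \eqref{F-sharp-B-1} of the magnetic Moyal product by applying Leibniz's rule and reducing everything to the oscillatory-integral estimate of Proposition \ref{P-Ap-1}. Fixing multi-indices $a,\alpha$ with $|a|\leq q_1$ and $|\alpha|\leq q_2$, I would apply $\partial_x^a\partial_\xi^\alpha$ to \eqref{F-sharp-B-1}. Since the phase $e^{-2i\sigma(Y,Z)}$ and the magnetic factor $\varomega^B_x(y,z)$ are independent of $\xi$, the $\xi$-derivatives fall only on $F(X-Y)$ and $G(X-Z)$, producing $(\partial_\xi^{\alpha'}F)(X-Y)$ and $(\partial_\xi^{\alpha''}G)(X-Z)$ with $\alpha'+\alpha''=\alpha$. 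The $x$-derivatives distribute by Leibniz among $\varomega^B_x$, $F$ and $G$; every $x$-derivative of $\varomega^B_x$ of order $|a_0|$ is controlled, by Appendix 1, by a polynomial in $(y,z)$ whose coefficients are dominated by $\mathring{\mathfrak{w}}_{|a_0|}(B)$, and this is ultimately the source of the magnetic weight in the final bound.

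Next I would absorb the polynomial growth of the magnetic factor using integration by parts in the oscillating integral. The identities $\nabla_y e^{-2i\sigma(Y,Z)}=2i\zeta\,e^{-2i\sigma(Y,Z)}$ and $\nabla_z e^{-2i\sigma(Y,Z)}=-2i\eta\,e^{-2i\sigma(Y,Z)}$ allow one to trade powers of $y$ and $z$ for decay in $\zeta$ and $\eta$, respectively, while the dual identities $\nabla_\eta e^{-2i\sigma(Y,Z)}=-2iz\,e^{-2i\sigma(Y,Z)}$ and $\nabla_\zeta e^{-2i\sigma(Y,Z)}=2iy\,e^{-2i\sigma(Y,Z)}$ give decay in $y$ and $z$. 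Combining these manipulations with the H\"{o}rmander estimates $|\partial_y^b\partial_\eta^\beta F(x-y,\xi-\eta)|\lesssim\nu^{p_1-|\beta|}_{|b|,|\beta|}(F)\langle\xi-\eta\rangle^{p_1-|\beta|}$, and the analogous one for $G$, reduces the integral to an absolutely convergent one. A Peetre-type inequality, split according to whether $|\eta|$ and $|\zeta|$ are comparable to $|\xi|$ or not, then produces the target weight $\langle\xi\rangle^{p_1+p_2-q_2}$ on the left-hand side.

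The main obstacle is the bookkeeping. The orders $m_1$ and $m_2$ of integration by parts in the $y$- and $z$-variables must be chosen large enough so that, after absorbing the growths $\langle\zeta\rangle^{p_2}$ and $\langle\eta\rangle^{p_1}$ coming from the H\"{o}rmander classes of $G$ and $F$ together with the polynomial factors produced by the magnetic phase, the integrand becomes integrable over the $2d$ momentum directions. The threshold $\tilde{p}=2[(d+p)/2]+2$ is essentially the smallest even integer strictly above $d+p$, and the precise formulas for $m_1$ and $m_2$ stated in the theorem arise exactly from this counting, with the extra $q_1+\tilde{p}_1+\tilde{p}_2$ derivatives landing on $\varomega^B_x$ and explaining the order of $\mathring{\mathfrak{w}}$ that appears. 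Once this reduction is in place, Proposition \ref{P-Ap-1} of Appendix 2, which handles the abstract oscillatory integral uniformly in all these parameters, delivers the claimed inequality verbatim.
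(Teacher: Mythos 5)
Your proposal matches the paper's route exactly: the theorem is, as stated just before it, a direct consequence of Proposition \ref{P-Ap-1} applied with $\Theta\equiv 1$, $s=t=1$, $|\alpha|=q_1$, $|\beta|=q_2$ and $N=q_1=q_2=0$ in the $\mathcal{W}$-weight, which is precisely where you land. Note only that the Leibniz expansion, integration by parts in $(y,z)$ versus $(\eta,\zeta)$, and the Peetre-type splitting you outline are the internal mechanics of the proof of Proposition \ref{P-Ap-1} rather than additional steps one performs before invoking it; once you decide to cite it, the identification of parameters $n_1=\tilde p_1$, $n_2=\tilde p_2$, $m_1$, $m_2$ and the conversion of the $\rho$-seminorms into $\nu$-seminorms is all that remains.
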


\smallskip
We formulate separately a consequence of the above result that will be used several times in this paper.

\smallskip
\begin{proposition}\label{P-2-nd-symb-comp}
Given a magnetic field $B\in\Lbc$, for any pair $(F,G)\in S^{p_1}(\Xi,\X)\times S^{p_2}(\Xi,\X)$ with $(p_1,p_2)\in\mathbb{R}^2$ the following is true
$$
\mathfrak{R}(F,G):=F\sharp^BG-FG\in S^{p_1+p_2-1}(\Xi,\X).
$$ 
Moreover, for anny $(q_1,q_2)\in\mathbb{N}\times\mathbb{N}$ there exists a constant $C:=C(d,p_1,p_2,q_1,q_2)>0$ such that
\beq
\begin{split}
\nu^{p_1+p_2-1-q_2}_{q_1,q_2}\big(F\sharp^BG-FG\big)\,&\leq\,C\mathring{\mathfrak{w}}_{q_1+\tilde{p}_1
+\tilde{p}_2}(B)\mathring{\mathfrak{w}}_{2+q_1+\tilde{p}_1+\tilde{p}_2}(B)\\
&\times\Big(\rho^{p_1}_{q_1+\tilde{p}_2,q_2+m_2}(\nabla_xF)+\rho^{p_1-1}_{q_1+\tilde{p}_2,q_2+m_2}(\nabla_\xi F)\Big)\\
&\times\Big(\rho^{p_2-1}_{q_1+\tilde{p}_1,q_2+m_1}(\nabla_\xi G)+\rho^{p_2}_{q_1+\tilde{p}_1,q_2+m_1}(\nabla_x G)\Big),
\end{split}
\eeq
where $m_1=2[\tilde{p}_2+(q_1+\tilde{p}_1)/2]+2$ and $m_2=2[\tilde{p}_1+(q_1+\tilde{p}_2)/2]+2$.
\end{proposition}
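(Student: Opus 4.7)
The plan is to write $F \sharp^B G - FG$ as an integral from which one can extract one extra power of $\left<\xi\right>^{-1}$ compared with the naive estimate from Theorem \ref{T-bound}, producing a symbol in $S^{p_1+p_2-1}(\Xi, \X)$. Starting from \eqref{F-sharp-B-1}, I would combine a joint Taylor expansion of $F(X-Y)G(X-Z)$ around $Y = Z = 0$ with the observation that $\varomega^B_x(0, 0) = 1$ (the triangle $\mathcal{T}_x(0, 0)$ being degenerate), and then integrate by parts on the oscillatory integral to convert the polynomial factors in $(y, \eta, z, \zeta)$ produced by the Taylor remainder into derivatives of $F$, $G$, and $\varomega^B_x$.

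In more detail, first I would use the identity $\pi^{-2d}\int\!\!\int e^{-2i\sigma(Y,Z)}\, h(y, z)\, dY\, dZ = h(0, 0)$, valid for nice functions depending only on $(y, z)$, together with $\varomega^B_x(0,0) = 1$, to show that the $(\varomega^B_x(y, z) - 1)\, F(X)\, G(X)$ piece integrates to zero, leaving
\[
R(X) := (F \sharp^B G)(X) - F(X)G(X) = \pi^{-2d}\!\int\!\!\int e^{-2i\sigma(Y,Z)}\, \varomega^B_x(y,z)\,\bigl[F(X-Y)\, G(X-Z) - F(X)\, G(X)\bigr]\, dY\, dZ.
\]
Next the joint Taylor formula gives
\[
F(X-Y)\, G(X-Z) - F(X)\, G(X) = -\int_0^1\!\bigl[Y \cdot (\nabla F)(X-tY)\, G(X-tZ) + F(X-tY)\, Z \cdot (\nabla G)(X-tZ)\bigr]\, dt,
\]
so the polynomial factors $y_j, \eta_j, z_j, \zeta_j$ coming from the dot products can be integrated by parts via $y_j\, e^{-2i\sigma} = (2i)^{-1}\partial_{\zeta_j} e^{-2i\sigma}$ and $\eta_j\, e^{-2i\sigma} = -(2i)^{-1}\partial_{z_j} e^{-2i\sigma}$ (and their $z, \zeta$ analogues). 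The resulting terms split into two types: (i) magnetic-Moyal-like oscillatory integrals in which a first-order $\xi$-derivative has landed on $F$ or $G$, which accounts for the desired $\left<\xi\right>^{-1}$ improvement; and (ii) analogous integrals in which a $y$- or $z$-derivative of $\varomega^B_x$ appears. The estimates of Appendix 1 control derivatives of $\varomega^B_x$ by the weights $\mathring{\mathfrak{w}}(B)$ with index shifted upward by the number of derivatives taken, which explains the occurrence of the second factor $\mathring{\mathfrak{w}}_{2+q_1+\tilde{p}_1+\tilde{p}_2}(B)$ in the claimed estimate.

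Each of these terms is a $t$-parametrized deformation of a magnetic Moyal product of derivatives of $F$ and $G$, in which the arguments $(X-Y), (X-Z)$ have been replaced by $(X-tY), (X-tZ)$. The bound in Theorem \ref{T-bound} applies uniformly in $t \in [0,1]$ to such deformed integrals, since its proof (via Proposition \ref{P-Ap-1}) rests only on pointwise bounds for the derivatives of $F, G, \varomega^B_x$; these bounds are either unchanged or improved by factors of $t \leq 1$ under the rescaling $Y \mapsto tY, Z \mapsto tZ$. Integrating in $t$ and collecting terms then yields the stated estimate. The main technical obstacle is precisely this uniformity in $t$ combined with careful bookkeeping of seminorm indices: each derivative introduced by the Taylor step demands a corresponding bump in the seminorm indices of $F$ and $G$, and these must line up exactly with the indices of $\rho^{p_1}_{q_1+\tilde{p}_2, q_2+m_2}(\nabla_x F)$ and its partners appearing in the stated bound.
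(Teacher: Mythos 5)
Your decomposition is genuinely different from the paper's, and the difference matters. The paper does \emph{not} perform a single joint Taylor step on $F(X-Y)G(X-Z)-F(X)G(X)$. Instead, using the oscillatory-integral identities $\pi^{-2d}\!\sint\varomega^B_x=1$, $\pi^{-2d}\!\sint\varomega^B_x\,F(X-Y)=F(X)$ and the analogue for $G$, it rewrites the difference as the integral of the \emph{product} of two separate Taylor remainders,
$$
\pi^{-2d}\!\sint\varomega^B_x(y,z)\,\Big[\textstyle\int_0^1(Y\!\cdot\!\nabla F)(X-sY)\,ds\Big]\Big[\textstyle\int_0^1(Z\!\cdot\!\nabla G)(X-tZ)\,dt\Big],
$$
and only then integrates by parts. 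This form guarantees from the outset that every resulting term carries at least one derivative on $F$ \emph{and} one on $G$, which is exactly what the stated bound demands.

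Your single joint Taylor step does not have this feature, and that is where the gap opens up. After you integrate by parts on $\eta_j$ in the term $\eta_j\,(\partial_{\xi_j}F)(X-tY)\,G(X-tZ)$, the derivative $\partial_{z_j}$ can land either on $G(X-tZ)$ (producing $\nabla_\xi F\cdot\nabla_x G$, which is fine) or on $\varomega^B_x(y,z)$, producing a term of the shape $\nabla_\xi F\cdot(\partial_{z_j}\varomega^B_x)\cdot G$ in which $G$ appears \emph{undifferentiated}. Such a term cannot be estimated by the right-hand side of the proposition, which involves only seminorms of $\nabla_\xi G$ and $\nabla_x G$. To salvage your approach you would need the additional structural observation (implicit in the formulas of Appendix A.1) that $\partial_{z_j}F^B_x(y,z)$ carries an overall linear factor in $y$ --- so $\partial_{z_j}\varomega^B_x$ vanishes at $y=0$ --- allowing one more integration by parts via $y_k\,e^{-2i\sigma}=(2i)^{-1}\partial_{\zeta_k}e^{-2i\sigma}$ to push a $\xi$-derivative onto $G(X-tZ)$. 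Without this extra step your method still shows $F\sharp^B G - FG\in S^{p_1+p_2-1}(\Xi,\X)$ (each surviving term gains one power of $\left<\xi\right>^{-1}$), but it does not give the specific bound in terms of $\nabla F$ and $\nabla G$ that the proposition asserts, and in particular does not reproduce the fact that the left side vanishes when $G$ is constant. The paper's product-of-remainders decomposition is precisely the device that avoids this difficulty; the price is a larger number of cross terms, which is why its displayed expansion is so long.

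One small inaccuracy in your plan: for the joint Taylor step you do not need to separately cancel a piece $(\varomega^B_x-1)F(X)G(X)$; the single identity $\pi^{-2d}\!\sint\varomega^B_x=1$ already gives $\pi^{-2d}\!\sint\varomega^B_x\,F(X)G(X)=F(X)G(X)$, so the subtraction happens in one stroke. Your observations about uniformity in $t$ and about Proposition \ref{P-Ap-1} accommodating independent scaling parameters are correct and are indeed used in the paper's estimate.
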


\begin{proof}
We come back to formula \eqref{F-sharp-B-1} and notice that
\beq
\begin{split}
\big(F\sharp^B G\big)(X)&-F(X)G(X)=\\
&=\pi^{-2d}\!\int_{\Xi}\!dY\!\int_{\Xi}\!dZ\,e^{-2i(<\eta,z>-<\zeta,y>)}\varomega^B_x(y,z)\,F(X-Y)\,G(X-Z)-F(X)G(X)=
\end{split}
\eeq
$$
=\pi^{-2d}\!\int_{\Xi}dY\!\int_{\Xi}dZ\,e^{-2i(<\eta,z>-<\zeta,y>)}\varomega^B_x(y,z)\left[\int_0^1ds\big((Y\cdot\nabla)F\big)(X-sY)\right]\left[\int_0^1dt\big((Z\cdot\nabla)G\big)(X-tZ)\right]=
$$
$$
=-(2i\pi^{2d})^{-1}\!\int_{\Xi}\!dY\!\int_{\Xi}\!dZ\,e^{-2i(<\eta,z>-<\zeta,y>)}\varomega^B_x(y,z)\left[\int_0^1ds\big(\nabla_x F\big)(X-sY)\right]\cdot\nabla_\zeta\left[\int_0^1dt\big((Z\cdot\nabla)G\big)(X-tZ)\right]+
$$
$$
+(2i\pi^{2d})^{-1}\!\int_{\Xi}\!dY\!\int_{\Xi}\!dZ\,e^{-2i(<\eta,z>-<\zeta,y>)}\left[\int_0^1ds\big(\nabla_\xi F\big)(X-sY)\right]\cdot\nabla_z\left\{\varomega^B_x(y,z)\left[
\int_0^1dt\big((Z\cdot\nabla)G\big)(X-tZ)\right]\right\}=
$$
$$
=-(2i\pi^{2d})^{-1}\!\int_{\Xi}\!dY\!\int_{\Xi}\!dZ\,e^{-2i(<\eta,z>-<\zeta,y>)}\varomega^B_x(y,z)\left[\int_0^1ds\big(\nabla_x F\big)(X-sY)\right]\cdot\left[\int_0^1dt\big(\nabla_\xi G\big)(X-tZ)\right]+
$$
$$
+(4\pi^{2d})^{-1}\!\int_{\Xi}\!dY\!\int_{\Xi}\!dZ\,e^{-2i(<\eta,z>-<\zeta,y>)}\varomega^B_x(y,z)\left[\int_0^1sds\big(\nabla_\xi\nabla_x F\big)(X-sY)\right]\cdot\left[\int_0^1tdt\big(\nabla_x\nabla_\xi G\big)(X-tZ)\right]-
$$
$$
-(4\pi^{2d})^{-1}\int_{\Xi}\!dY\!\int_{\Xi}\!dZ\,e^{-2i(<\eta,z>-<\zeta,y>)}\varomega^B_x(y,z)\left[\int_0^1sds\big(\nabla_x\nabla_x F\big)(X-sY)\right]\cdot\left[\int_0^1tdt\big(\nabla_\xi\nabla_\xi G\big)(X-tZ)\right]+
$$
$$
+(4\pi^{2d})^{-1}\!\int_{\Xi}\!dY\!\int_{\Xi}\!dZ\,e^{-2i(<\eta,z>-<\zeta,y>)}\big[\big(\nabla_y\varomega^B_x\big)(y,z)\big]\left[\int_0^1ds\big(\nabla_x F\big)(X-sY)\right]\cdot\left[\int_0^1tdt\big(\nabla_\xi\nabla_\xi
G\big)(X-tZ)\right]+
$$
$$
+(2i\pi^{2d})^{-1}\!\int_{\Xi}\!dY\!\int_{\Xi}\!dZ\,e^{-2i(<\eta,z>-<\zeta,y>)}\varomega^B_x(y,z)\left[\int_0^1ds\big(\nabla_\xi F\big)(X-sY)\right]\cdot\left\{\left[\int_0^1dt\big(\nabla_x G\big)(X-tZ)\right]\right\}-
$$
$$
-(4\pi^{2d})^{-1}\!\int_{\Xi}\!dY\!\int_{\Xi}\!dZ\,e^{-2i(<\eta,z>-<\zeta,y>)}\varomega^B_x(y,z)\left[\int_0^1sds\big(\nabla_\xi\nabla_\xi F\big)(X-sY)\right]\cdot\left\{\left[\int_0^1tdt\big(\nabla_x\nabla_xG\big)(X-tZ)\right]\right\}+
$$
$$
+(4\pi^{2d})^{-1}\!\int_{\Xi}\!dY\!\int_{\Xi}\!dZ\,e^{-2i(<\eta,z>-<\zeta,y>)}\varomega^B_x(y,z)\left[\int_0^1sds\big(\nabla_x\nabla_\xi F\big)(X-sY)\right]\cdot\left\{\left[\int_0^1tdt\big(\nabla_\xi\nabla_xG\big)(X-tZ)\right]\right\}+
$$
$$
-(4\pi^{2d})^{-1}\!\int_{\Xi}\!dY\!\int_{\Xi}\!dZ\,e^{-2i(<\eta,z>-<\zeta,y>)}\big[\big(\nabla_y\varomega^B_x\big)(y,z)\big]\left[\int_0^1ds\big(\nabla_\xi F\big)(X-sY)\right]\cdot\left\{\left[
\int_0^1tdt\big(\nabla_\xi\nabla_xG\big)(X-tZ)\right]\right\}+
$$
$$
+(4\pi^{2d})^{-1}\!\int_{\Xi}\!dY\!\int_{\Xi}\!dZ\,e^{-2i(<\eta,z>-<\zeta,y>)}\left[\int_0^1sds\big(\nabla_\xi\nabla_\xi F\big)(X-sY)\right]\cdot\big[\big(\nabla_z\varomega^B_x\big)(y,z)\big]\left[
\int_0^1dt\big(\nabla_x G\big)(X-tZ)\right]=
$$
$$
-(4\pi^{2d})^{-1}\!\int_{\Xi}\!dY\!\int_{\Xi}\!dZ\,e^{-2i(<\eta,z>-<\zeta,y>)}\left[\int_0^1sds\big(\nabla_x\nabla_\xi F\big)(X-sY)\right]\cdot\big[\big(\nabla_z\varomega^B_x\big)(y,z)\big]\left[
\int_0^1dt\big(\nabla_\xi G\big)(X-tZ)\right]=
$$
$$
+(4\pi^{2d})^{-1}\!\int_{\Xi}\!dY\!\int_{\Xi}\!dZ\,e^{-2i(<\eta,z>-<\zeta,y>)}\left[\int_0^1ds\big(\nabla_\xi F\big)(X-sY)\right]\cdot\big[\big(\nabla_y\nabla_z\varomega^B_x\big)(y,z)\big]\left[
\int_0^1dt\big(\nabla_\xi G\big)(X-tZ)\right].
$$

Using once again Proposition \ref{P-Ap-1} with $\Theta\in BC^\infty\big(\X;C^\infty_{\text{\sf pol}}(\X\times\X)\big)$ equal to various derivatives of $\varomega^B$, we obtain 
\beq
\begin{split}
\nu^{p_1+p_2-1-q_2}_{q_1,q_2}\big(F\sharp^BG-FG\big)\,&\leq\,C\mathring{\mathfrak{w}}_{q_1+
\tilde{p}_1+\tilde{p}_2}(B)\mathring{\mathfrak{w}}_{2+q_1+\tilde{p}_1+
\tilde{p}_2}(B)\\
&\times\Big(\rho^{p_1}_{q_1+\tilde{p}_2,q_2+m_2}(\nabla_xF)
+\rho^{p_1-1}_{q_1+\tilde{p}_2,q_2+m_2}(\nabla_\xi F)\Big)\\
&\times\Big(
\rho^{p_2-1}_{q_1+\tilde{p}_1,q_2+m_1}(\nabla_\xi G)+
\rho^{p_2}_{q_1+\tilde{p}_1,q_2+m_1}(\nabla_x G)\Big).
\end{split}
\eeq
\end{proof}

\subsection{A criterion for $L^2$-boundedness.}

By the Schur-Holmgren criterion for boundedness of integral operators on $L^2$, we easily conclude that
\beq
\big\|\Op^A(F)\big\|_{\mathbb{B}(L^2(\X))}\,\leq\,\underset{x\in\X}{\sup}\big\|\mathfrak{K}^A_F(x,\cdot)\big\|_{L^1(\X)}.
\eeq
Using then \eqref{F-int-kernel} and Proposition 1.3.6 in \cite{ABG}, we obtain the following statement.

\smallskip
\begin{proposition}
Suppose that $B\in\Lbc$. Then $S^-(\Xi,\X)\subset\mathfrak{C}^B(\Xi)$, i.e. $\Op^A(F)\in\mathbb{B}\big(L^2(\X)\big)$ for any $F\in S^-(\Xi,\X)$ and
$$
\|F\|_B\equiv\big\|\Op^A(F)\big\|_{\mathbb{B}(L^2(\X))}\,\leq\,\nu^{-s}_{0,\tilde{d}}(F)
$$
for $s>0$ such that $F\in S^{-s}(\Xi,\X)$.
\end{proposition}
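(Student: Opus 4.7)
The plan is to combine the Schur--Holmgren bound recalled just above the statement with formula \eqref{F-int-kernel} for the integral kernel, and then to extract enough $z$-decay of the partial inverse Fourier transform $(\bb1\otimes\mathscr{F}_*)F$ from the $\xi$-regularity of $F\in S^{-s}(\Xi,\X)$ to make the remaining $z$-integral converge uniformly in the spatial variable.

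First, since $|\Lambda^A(x,y)|=1$, the magnetic phase drops out of the modulus of the kernel and
\[
\bigl|\mathfrak K^A_F(x,y)\bigr|\,=\,(2\pi)^{-d/2}\bigl|\bigl((\bb1\otimes\mathscr{F}_*)F\bigr)\bigl((x+y)/2,\,y-x\bigr)\bigr|.
\]
Writing $\check F:=(\bb1\otimes\mathscr{F}_*)F$ and substituting $z=y-x$ (Jacobian one) rewrites the Schur--Holmgren norm as
\[
\bigl\|\mathfrak K^A_F(x,\cdot)\bigr\|_{L^1(\X)}\,=\,(2\pi)^{-d/2}\int_{\X}\bigl|\check F(x+z/2,\,z)\bigr|\,dz,
\]
so it suffices to bound $\sup_{x'\in\X}\int_{\X}|\check F(x',z)|\,dz$ by a multiple of $\nu^{-s}_{0,\tilde d}(F)$, since the dependence on $x$ is absorbed into the supremum over $x'=x+z/2$.

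Next I would execute the standard partial-integration trick to pull out polynomial $z$-decay of $\check F$. Because $\tilde d/2=[d/2]+1$ is a positive integer, one has
\[
\langle z\rangle^{\tilde d}\,\check F(x',z)\,=\,\mathscr{F}_*\!\Bigl[(1-\Delta_\xi)^{\tilde d/2}F(x',\cdot)\Bigr](z),
\]
hence the elementary $L^{1}$-$L^{\infty}$ inequality $\|\mathscr{F}_* g\|_{L^\infty}\le(2\pi)^{-d/2}\|g\|_{L^1}$ gives
\[
\langle z\rangle^{\tilde d}\,|\check F(x',z)|\,\le\,(2\pi)^{-d/2}\bigl\|(1-\Delta_\xi)^{\tilde d/2}F(x',\cdot)\bigr\|_{L^1(\X^*)}.
\]
Expanding $(1-\Delta_\xi)^{\tilde d/2}$ as a linear combination of $\partial_\xi^\alpha$ with $|\alpha|\le\tilde d$ and invoking the $S^{-s}$-estimates $|\partial_\xi^\alpha F(x',\xi)|\le C\langle\xi\rangle^{-s-|\alpha|}$, the $\xi$-integral converges uniformly in $x'$, and one packages the finitely many seminorms that appear into the single dominating seminorm to obtain $\langle z\rangle^{\tilde d}|\check F(x',z)|\le C\,\nu^{-s}_{0,\tilde d}(F)$.

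Finally, since $\tilde d=2[d/2]+2>d$ in any dimension, the weight $\langle z\rangle^{-\tilde d}$ is integrable on $\X$, so
\[
\sup_{x'\in\X}\int_{\X}|\check F(x',z)|\,dz\,\le\,C\,\nu^{-s}_{0,\tilde d}(F)\int_{\X}\langle z\rangle^{-\tilde d}\,dz\,<\,\infty,
\]
and feeding this back into Schur--Holmgren yields the stated bound; in particular $S^{-s}(\Xi,\X)\subset\mathfrak C^B(\Xi)$ for every $s>0$, hence $S^-(\Xi,\X)\subset\mathfrak C^B(\Xi)$. The only mildly delicate point is the bookkeeping in the middle step: one must check that the choice $\tilde d=2[d/2]+2$ is large enough that even the worst-decaying term arising from $(1-\Delta_\xi)^{\tilde d/2}F$ remains integrable in $\xi$ against the $S^{-s}$-decay, and also large enough that $\langle z\rangle^{-\tilde d}$ is integrable on $\X$; these two requirements are what pin down the specific value of $\tilde d$ used in the statement.
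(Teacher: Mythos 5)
Your reduction to the Schur--Holmgren bound and the kernel formula \eqref{F-int-kernel}, together with the change of variables $z=y-x$ and the observation that $|\Lambda^A|=1$, is correct and is the same first step the paper takes. The gap is in the middle step, and you have in fact flagged it yourself without resolving it. After writing $\langle z\rangle^{\tilde d}\check F(x',z)=\mathscr{F}_*\bigl[(1-\Delta_\xi)^{\tilde d/2}F(x',\cdot)\bigr](z)$ and expanding $(1-\Delta_\xi)^{\tilde d/2}$ into $\partial_\xi^\alpha$ with $|\alpha|\le\tilde d$, the lowest-order term ($\alpha=0$) is $F(x',\cdot)$ itself, and the $S^{-s}$-bound gives only $|F(x',\xi)|\le C\langle\xi\rangle^{-s}$. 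The integral $\int_{\X^*}\langle\xi\rangle^{-s}\,d\xi$ diverges for every $s\le d$, so $(1-\Delta_\xi)^{\tilde d/2}F(x',\cdot)$ need not be in $L^1(\X^*)$, and the bound $\langle z\rangle^{\tilde d}|\check F(x',z)|\le C$ you rely on simply does not hold in the regime $0<s\le d$, which is precisely the nontrivial part of the statement (for $s>d$ the symbol is already integrable in $\xi$ and everything is easy). Replacing $(1-\Delta_\xi)^{\tilde d/2}$ by the pure top-order operator $\Delta_\xi^{\tilde d/2}$ removes the bad $\alpha=0$ term but then loses control of $\check F(x',z)$ near $z=0$, so that variant does not close the argument either.

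The tool the paper cites at this point, Proposition 1.3.6 in \cite{ABG}, is exactly what fills this gap: it proves that a function on $\X^*$ with derivative bounds $|\partial_\xi^\alpha\theta(\xi)|\le C\langle\xi\rangle^{-s-|\alpha|}$ for $|\alpha|\le\tilde d$ and some $s>0$ (no matter how small) has Fourier transform in $L^1(\X)$, by a dyadic (Littlewood--Paley) decomposition of $\X^*$: on each annulus $|\xi|\sim 2^j$ one gets an $L^1$-contribution $\lesssim 2^{-js}$ from the localized piece, and the geometric series $\sum_j 2^{-js}$ converges for all $s>0$. Your direct weight argument collapses all dyadic scales at once and therefore loses the gain from $s>0$. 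A secondary issue, also worth noting: even after the dyadic repair, the bound you obtain naturally involves a maximum over $|\alpha|\le\tilde d$ of $\sup\langle\xi\rangle^{s+|\alpha|}|\partial_\xi^\alpha F|$, whereas $\nu^{-s}_{0,\tilde d}(F)$ as defined in \eqref{FD-seminorms-Cinfpol-Xi} only involves $|\alpha|=\tilde d$ exactly, so the claimed ``packaging into the single dominating seminorm'' is not automatic and should not be glossed over.
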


\smallskip
For the last estimation in the above Proposition one can see Lemma A.4 in \cite{MPR}.

\subsection{Inferior semiboundedness.}\label{Sss-inf-sbd}

\begin{proposition}\label{P-inf-sbd}
Suppose that $B\in\Lbc$ and $F\in S^p(\Xi,\X)$, with $p\geq0$ and $F$ elliptic if $p>0$, verifies $F\geq a_F>0$ for some $a_F\in\mathbb{R}_+$. 
Then there exist $G\in S^{p/2}(\Xi,\X)$ and $X\in S^0(\Xi,\X)$ such that $F=G\sharp^BG\,+\,X$ and the operator norm of $\Op^B(X)$ is bounded by a constant defined by the product of a polynomial $\mathscr{W}_p(B)$ of maximum degree $2[p]+2$ in the weights $\mathring{\mathfrak{w}}_M(B)$ with $M\in\mathbb{N}$ depending only on $d$ and $p$, with positive coefficients depending only on $d$ and $p$, and a polynomial $\mathscr{N}_p(F)$ of maximum degree $2[p]+2$  in the seminorms $\nu^{P_1}_{M_1,M_2}(F)$ with $P_1$, $M_1$ and $M_2$ depending only on $d$ and $p$, with positive coefficients depending only on $d$, $p$ and $a_F^j>0$, with $0\leq j\leq 2[p]+2$.
\end{proposition}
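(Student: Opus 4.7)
The plan is a Bony-type symbolic square-root iteration: start with $G_0:=F^{1/2}$ and correct it by successively lower-order symbols $g_1,g_2,\ldots$, stopping as soon as the remainder $F-G\sharp^B G$ lies in $S^-(\Xi,\X)$, where the $L^2$-boundedness result for $S^-$-symbols from the previous subsection applies.

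\emph{Step 1 (square root).} The hypothesis $F\ge a_F>0$, combined when $p>0$ with the ellipticity $|F(x,\xi)|\ge C\langle\xi\rangle^p$ for $|\xi|\ge R$, gives a uniform lower bound $F(x,\xi)\ge c\langle\xi\rangle^p$ with $c=c(a_F,C,R)>0$. The Fa\`a di Bruno formula applied to $t\mapsto\sqrt{t}$ then yields $G_0:=F^{1/2}\in S^{p/2}(\Xi,\X)$ and $1/G_0\in S^{-p/2}(\Xi,\X)$, with all relevant seminorms controlled by polynomials in the seminorms of $F$ and in non-negative powers of $a_F^{-1}$.

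\emph{Step 2 (inductive order reduction).} Set $R_0:=F-G_0\sharp^B G_0=-\mathfrak{R}(G_0,G_0)$; Proposition~\ref{P-2-nd-symb-comp} gives $R_0\in S^{p-1}(\Xi,\X)$. Suppose inductively that $G^{(k)}=G_0+g_1+\cdots+g_k$ with $g_j\in S^{p/2-j}(\Xi,\X)$ and $R_k:=F-G^{(k)}\sharp^B G^{(k)}\in S^{p-k-1}(\Xi,\X)$. Define
$$g_{k+1}\ :=\ \frac{R_k}{2\,G_0}\ \in\ S^{p/2-k-1}(\Xi,\X),\qquad G^{(k+1)}:=G^{(k)}+g_{k+1}.$$
Expanding $G^{(k+1)}\sharp^B G^{(k+1)}$ and using the decomposition $A\sharp^B B=AB+\mathfrak{R}(A,B)$ of Proposition~\ref{P-2-nd-symb-comp}, the leading piece $2G_0\,g_{k+1}$ equals $R_k$ and therefore cancels in $R_{k+1}$. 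The four residual contributions
$$-2(G^{(k)}-G_0)\,g_{k+1},\qquad-\mathfrak{R}(G^{(k)},g_{k+1}),\qquad-\mathfrak{R}(g_{k+1},G^{(k)}),\qquad-g_{k+1}\sharp^B g_{k+1}$$
all lie in $S^{p-k-2}(\Xi,\X)$ by Proposition~\ref{P-2-nd-symb-comp}, closing the induction.

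\emph{Step 3 (termination and operator-norm bound).} Iterate until $k=[p]$; then $R_{[p]}\in S^{p-[p]-1}\subset S^-(\Xi,\X)$ because $p-[p]-1<0$. Set $G:=G^{([p])}\in S^{p/2}(\Xi,\X)$ and $X:=R_{[p]}\in S^0(\Xi,\X)$, so that $F=G\sharp^B G+X$ by construction. The $S^-$-boundedness proposition then gives
$$\bigl\|\Op^A(X)\bigr\|_{\mathbb{B}(L^2(\X))}\ \leq\ \nu^{p-[p]-1}_{0,\tilde{d}}(X).$$

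\emph{Step 4 (polynomial structure of the estimate).} Each of the $[p]+1$ stages introduces at most one pair of magnetic weights $\mathring{\mathfrak{w}}_M(B)\mathring{\mathfrak{w}}_{M+2}(B)$ from Proposition~\ref{P-2-nd-symb-comp} and at most one extra factor $a_F^{-j}$ (from dividing by $G_0$) multiplied by a factor linear in the seminorms of $F$. Hence $G$ has polynomial degree at most $[p]+1$ in the $F$-seminorms, and $X$, being essentially a symbol-calculus quadratic in $G$, has total degree at most $2[p]+2$, with coefficients depending on $d$, $p$ and on $a_F^{-j}$ for $0\le j\le 2[p]+2$, which is exactly the claimed form $\mathscr{W}_p(B)\,\mathscr{N}_p(F)$.

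The main obstacle is the \emph{combinatorial bookkeeping} of seminorm indices: at each stage Proposition~\ref{P-2-nd-symb-comp} shifts the indices $(q_1,q_2)$ and the auxiliary $(m_1,m_2)$, and one has to check that over the $[p]+1$ steps all the indices that appear are uniformly majorized by a fixed finite family depending only on $d$ and $p$, so that $\nu^{p-[p]-1}_{0,\tilde{d}}(X)$ is controlled by a polynomial of the announced degree in a finite set of seminorms of $F$ and of magnetic weights of $B$.
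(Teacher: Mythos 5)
Your proposal is correct and follows essentially the same route as the paper: set $G_0=\sqrt{F}$, iteratively correct by $g_{k+1}=R_k/(2G_0)$, stop after $[p]+1$ steps so the remainder $R_{[p]}$ lands in $S^{p-[p]-1}\subset S^-(\Xi,\X)$, and invoke the $S^-$ boundedness criterion for the operator-norm bound. The only cosmetic difference is that the paper writes the recursion as $X^B_k=F-(\sum_{j<k}G_j)\sharp^B(\sum_{j<k}G_j)$, $G_k=\tfrac12 G_0^{-1}X^B_k$, which is exactly your $R_{k-1}$ and $g_k$; the order-drop bookkeeping via Proposition \ref{P-2-nd-symb-comp} and the degree count $2[p]+2$ for the two polynomials match as well.
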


\begin{proof}
We can define $G_0:=\sqrt{F}\in S^{p/2}(\Xi,\X)$ and notice that  (with the notation in Proposition \ref{P-2-nd-symb-comp}):
$$
\big(F-G_0\sharp^BG_0\big)(X)=G_0(X)^2-\big(G_0\sharp^BG_0\big)(X)=\mathfrak{R}(G_0,G_0)(X).
$$
We apply Proposition \ref{P-2-nd-symb-comp} in order to conclude that:
$$
X^B_1(F):=F-G_0\sharp^BG_0\ \in\ S^{p-1}(\Xi,\X),
$$
$$
\nu^{p-1-q_2}_{q_1,q_2}\big(X^B_1(F)\big)\,\leq\,C\mathring{\mathfrak{w}}_
{q_1+2\tilde{p}}(B)\mathring{\mathfrak{w}}_{2+q_1+2\tilde{p}}(B)\nu^{p/2}_{q_1+\tilde{p},q_2+m}(\nabla_x\sqrt{F})\nu^{p/2-1}_{q_1+\tilde{p},q_2+m}(\nabla_\xi\sqrt{F})
$$
with $m:=2[(q_1+3n)/2]+2$. We clearly have $G_0=\sqrt{F}\geq\sqrt{a_F}>0$ and we can define
$$
G_1:=(1/2)G_0^{-1}X^B_1(F)\,\in\,S^{p/2-1}(\Xi,\X)
$$
and notice that
$$
F-\big(G_0+G_1\big)\sharp^B\big(G_0+G_1\big)\,=\,F-G_0\sharp^BG_0-G_1\sharp^BG_1-\big(G_0\sharp^BG_1+G_1\sharp^B
G_0\big)
$$
$$
=-G_1\sharp^BG_1-\big(G_0\sharp^BG_1+G_1\sharp^BG_0-2G_0G_1\big)\,=:\,X^B_2(F)\,\in\,S^{p-2}(\Xi,\X)
$$
with
\beq\nu^{p-2-q_2}_{q_1,q_2}\big(X^B_2(F)\big)\,\leq\,
C\mathring{\mathfrak{w}}_{q_1+2\tilde{p}}(B)\nu^{p/2}_{q_1+\tilde{p},q_2+m}(\nabla_x G_1)\nu^{p/2-1}_{q_1+\tilde{p},q_2+m}(\nabla_\xi G_1)
\eeq
$$
+C\mathring{\mathfrak{w}}_{q_1+2\tilde{p}}(B)\mathring{\mathfrak{w}}_{2+q_1+2\tilde{p}}(B)\big[\nu^{p/2}_{q_1+\tilde{p},q_2+m}(\nabla_x G_0)
\nu^{p/2-1}_{q_1+\tilde{p},q_2+m}(\nabla_\xi G_1)+\nu^{p/2-1}_{q_1+\tilde{p},q_2+m}(\nabla_\xi G_0)\nu^{p/2}_{q_1+\tilde{p},q_2+m}(\nabla_x G_1)\big].
$$
Let us set $n_p:=[p]+1\in\mathbb{N}$ and define recursively for $1\leq k\leq n_p$
\beq
\left\{
\begin{array}{l}
X^B_k\,:=\,F-\left(\underset{0\leq j\leq k-1}{\sum}G_j\right)\sharp^B\left(\underset{0\leq j\leq k-1}{\sum}G_j\right)\,\in\,S^{p-n_p}(\Xi,\X)\,\subset\,S^{0}(\Xi,\X),\\
G_k\,:=\,(1/2)G_0^{-1}X^B_k\,\in\ S^{(p/2)-n_p}(\Xi,\X).
\end{array}
\right.
\eeq
Using the above results we obtain
$$
\nu^{p-n_p-q_2}_{q_1,q_2}\big(X^B_{n_p}\big)\,\leq\,\mathscr{W}_p(B)\,\mathscr{N}_p(F),
$$
where $\mathscr{W}_p(B)$ is a polynomial of maximum degree $2n_p$, with positive coefficients depending only on $d$, $p$, $q_1$ and $q_2$ in the weights $\mathring{\mathfrak{w}}_M(B)$ with $q_1+2\tilde{p}\leq M\leq 2+q_1+\tilde{p}n_p$ and $\mathscr{N}_p(F)$ is a polynomial of maximum degree $2n_p$, with positive coefficients depending only on $d$, $p$, $q_1$, $q_2$ and $a_F^j>0$ (with $0\leq j\leq n_p$) in the semi-norms $\nu^{P_1}_{M_1,M_2}(F)$ with $0\leq P_1\leq p$, $q_1+\tilde{p}\leq M_1\leq q_1+\tilde{p}n_p$ and $q_2+m\leq M_2\leq q_2+mn_p$.
\end{proof}

Finally notice that the above Proposition implies for any $\phi\in\mathscr{S}(\X)$ the estimation
$$
\left\langle\phi,\Op^A(F)\phi\right\rangle_{L^2(\X)}\,\geq-\left\|\Op^A(X^B_{n_p})\right\|_{\mathbb{B}(L^2(\X))}\,\|\phi\|^2_{L^2(\X)},
$$
and one concludes that
$$
\Op^A(F)\,+\,\left\|\Op^A(X^B_{n_p})\right\|_{\mathbb{B}(L^2(\X))}\bb1\,+\,a_F\bb1\ \geq\ a_F\bb1\ >\ 0
$$
and also
$$
\Op^A(F)\,+\,\big(\mathscr{W}_p(B)\cdot\mathscr{N}_p(F)\,+\,a_F\big)\bb1\ \geq\ a_F\bb1\ >\ 0.
$$

\subsection{A Calderon-Vaillancourt type Theorem.}

In this subsection we give another proof for Theorem 3.1 in \cite{IMP1} for symbols in our restricted class $S^0(\Xi,\X)$. This proof is inspired by the one in \cite{H-I} for the case of the usual Weyl calculus.

\smallskip
\begin{theorem}
For any magnetic field $B\in\Lbc$ we have $S^0(\Xi,\X)\subset\mathfrak{C}^B(\Xi)$. More precisely, for any $F\in S^0(\Xi,\X)$ we have
$$
\|F\|_B\equiv\big\|\Op^A(F)\big\|_{\mathbb{B}(L^2(\X))}\leq C(d).
$$
\end{theorem}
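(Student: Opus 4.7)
The approach I would take is Hörmander's ``square root'' trick, which is perfectly set up by the inferior semi\-boundedness result Proposition \ref{P-inf-sbd} that the authors have just established. The point is that to bound $\|\Op^A(F)\|$ it suffices to bound $\Op^A(F)^*\Op^A(F)$ from above by a scalar, and one constructs the scalar by applying the semi\-bounded\-ness square-root construction to a symbol of the form ``constant minus $|F|^2$''. Note that Proposition \ref{P-inf-sbd} itself does not use the Calderón--Vaillancourt theorem, since its error term sits in $S^{-1}\subset S^-$ and is controlled by the already-proved Schur--Holmgren-type criterion for $S^-(\Xi,\X)$.

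\textbf{Reduction.} By writing $F=\mathrm{Re}\,F+i\,\mathrm{Im}\,F$ and using the triangle inequality, it is enough to treat a real symbol $F\in S^0(\Xi,\X)$. Set $M:=\sup_{(x,\xi)\in\Xi}|F(x,\xi)|<\infty$ (which exists because $F\in S^0$) and introduce
$$
H\ :=\ M^2+1-F^2\ \in\ S^0(\Xi,\X),\qquad H\ \geq\ 1\ >\ 0.
$$
Here $F^2\in S^0(\Xi,\X)$ by the Leibniz rule applied to the defining estimates of $S^0$. Apply Proposition \ref{P-inf-sbd} to $H$ with $p=0$, $a_H=1$: there exist $G\in S^0(\Xi,\X)$ and $X\in S^{-1}(\Xi,\X)\subset S^-(\Xi,\X)$ such that $H=G\sharp^BG+X$, with $\|\Op^A(X)\|_{\mathbb{B}(L^2(\X))}$ controlled by a polynomial in the weights $\mathring{\mathfrak{w}}_M(B)$ and in the seminorms of $F$ (via $H$), the control coming from the Schur--Holmgren-type proposition of the preceding subsection.

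\textbf{Assembly of the operator inequality.} A key observation is that $G$ is \emph{real}. Indeed, the recursion in the proof of Proposition \ref{P-inf-sbd} starts from $G_0:=\sqrt H$ which is real, and is propagated via the identity $\overline{\phi\sharp^B\psi}=\bar\psi\sharp^B\bar\phi$ (which comes from $\Op^A(F)^*=\Op^A(\bar F)$ and the product rule $\Op^A(\phi)\Op^A(\psi)=\Op^A(\phi\sharp^B\psi)$); this forces $G_0\sharp^BG_0$, hence $X_1^B$, hence $G_1$, and inductively all remainders and corrections, to be real. Consequently $\Op^A(G)$ is self-adjoint and $\Op^A(G\sharp^BG)=\Op^A(G)^*\Op^A(G)\geq 0$. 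By Proposition \ref{P-2-nd-symb-comp} applied with $F$ and $F$ we get $R:=F\sharp^BF-F^2\in S^{-1}(\Xi,\X)$, so that $\|\Op^A(R)\|_{\mathbb{B}(L^2(\X))}$ is bounded by the $S^-$-criterion. Putting everything together,
$$
(M^2+1)\bb1-\Op^A(F)^2\ =\ \Op^A(H)+\Op^A(R)\ =\ \Op^A(G)^*\Op^A(G)+\Op^A(X)+\Op^A(R),
$$
and taking inner products against $\phi\in\mathscr{S}(\X)$ yields
$$
\|\Op^A(F)\phi\|_{L^2(\X)}^2\ \leq\ \bigl(M^2+1+\|\Op^A(X)\|_{\mathbb{B}(L^2(\X))}+\|\Op^A(R)\|_{\mathbb{B}(L^2(\X))}\bigr)\|\phi\|_{L^2(\X)}^2,
$$
which is the desired bound, where the constant is polynomially controlled in seminorms of $F$ and weights of $B$ as in Propositions \ref{P-2-nd-symb-comp} and \ref{P-inf-sbd}.

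\textbf{Main obstacle.} The delicate step is the verification that the square-root construction in Proposition \ref{P-inf-sbd} preserves reality, since without it the key inequality $\Op^A(G)^*\Op^A(G)\geq 0$ would fail. One also has to ensure that no circularity enters: the corrections $X$ and $R$ lie strictly in $S^{-1}$, and are therefore handled by the Schur--Holmgren bound proved earlier, which is independent of the present theorem. Beyond that, the argument is really a bookkeeping exercise to convert the operator inequality into the quantitative estimate $\|F\|_B\leq C(d)$ (understood as $C(d)$ multiplied by a fixed polynomial in finitely many seminorms of $F$ and in $\mathring{\mathfrak{w}}_M(B)$).
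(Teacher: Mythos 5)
Your proof is correct and is essentially the same Hörmander square-root argument the paper uses: both take a constant dominating $F^2$, apply Proposition \ref{P-inf-sbd} to the positive symbol obtained by subtracting $F^2$, identify the square term as a nonnegative operator, and absorb the $S^{-1}$ remainders (both the one from the square-root construction and $F\sharp^BF-F^2$ from Proposition \ref{P-2-nd-symb-comp}) via the Schur--Holmgren bound. The only differences are cosmetic and in your favor: you make explicit the reduction to real $F$ and the reality of $G$ (which the paper uses tacitly when it writes $-\|\Op^A(G^B_F)\phi\|^2$), and you fix $a_H=1$ via $M^2+1$ rather than using $(M_F+\delta)^2$ with a free $\delta>0$, which sidesteps any worry about the $a_F$-dependence of the constants in Proposition \ref{P-inf-sbd} as $\delta\to 0$.
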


\begin{proof}
From the definition of the class $S^0(\Xi,\X)$ in Definition \ref{D-H-symb} we conclude that given any $F\in S^0(\Xi,\X)$ there exists $M_F>0$ such that $|F(X)|\leq M_F$ for any $X\in\Xi$. Then $\widetilde{F}:=(M_F+\delta)^2-F^2$ is a strictly positive symbol of class $S^0(\Xi,\X)$ for any $\delta>0$. 
For any $\phi\in\mathscr{S}(\X)$ we can compute
$$
\big\|\Op^A(F)\phi\big\|_{L^2(\X)}^2=\left\langle\Op^A(F)\phi\,,\,\Op^A(F)\phi\right\rangle_{\!L^2(\X)}=\left\langle\phi\,,\,\Op^A(F\sharp^BF)\phi\right\rangle_{\!L^2(\X)}=
$$
$$
=\left\langle\phi\,,\,\Op^A(F\sharp^BF-F^2)\phi\right\rangle_{L^2(\X)}+(M_F+\delta)^2\|\phi\|_{\!L^2(\X)}^2-\left\langle\phi\,,\,\Op^A(\widetilde{F})\phi\right\rangle_{\!L^2(\X)}
$$
We can apply Proposition \ref{P-inf-sbd} above (with $n_p\equiv n_0=1$) and deduce that there exists some symbol $G^B_F\in S^0(\Xi,\X)$ and some symbol $X^B_F\in S^{-1}(\Xi,\X)$ such that
$$
\widetilde{F}=G^B_F\sharp^BG^B_F\,+\,X^B_F,\qquad\big\|X^B_F\big\|_B\,\leq\,C(d)\mathring{\mathfrak{w}}_{2\tilde{d}}(B)\mathring{\mathfrak{w}}_{2\tilde{d}+2}(B)\big(\mu^{-1}_{\tilde{d},\tilde{d}+m}(F)\big)^2
$$
with $m:=2[3\tilde{d}/2]+2$. Then we use Proposition \ref{P-2-nd-symb-comp} in order to get $F^2-F\sharp^BF\in S^{-1}(\Xi,\X)$ and the estimations
$$
\nu^{-1-q_2}_{q_1,q_2}\big(F^2-F\sharp^BF\big)\leq C\mathring{\mathfrak{w}}_{q_1+2\tilde{d}}(B)\mathring{\mathfrak{w}}_{2+q_1+2\tilde{d}}(B)
\nu^0_{q_1+\tilde{d},q_2+m}(\nabla_xF)\nu^{-1}_{q_1+\tilde{d},q_2+m}(\nabla_\xi F),
$$ 
with $m:=2[(q_1+3\tilde{d})/2]+2$ and
$$
\|F\sharp^BF-F^2\|_B\leq\nu^{-1}_{0,\tilde{d}}(F\sharp^BF-F^2)\leq C\mathring{\mathfrak{w}}_{2\tilde{d}}(B)\mathring{\mathfrak{w}}_{2+2\tilde{d}}(B)
\nu^0_{\tilde{d},\tilde{d}+m}(\nabla_xF)\nu^{-1}_{\tilde{d},\tilde{d}+m}(\nabla_\xi F)
$$
with $m:=2[3\tilde{d}/2]+2$. Thus for any $\phi\in\mathscr{S}(\X)$ one has
\begin{align}
\big\|\Op^A(F)\phi\big\|_{L^2(\X)}^2&=(M_F+\delta)^2\|\phi\|_{L^2(\X)}^2-\big\|\Op^A(G^B_F)\phi\big\|_{L^2(\X)}^2-\left\langle\phi\,,\,\Op^A\big(X^B_F+(F\sharp^BF-F^2)\big)\phi\right\rangle_{L^2(\X)}\nonumber\\
&\leq\Big((M_F+\delta)^2+C(d)\mathring{\mathfrak{w}}
_{2\tilde{d}}(B)\mathring{\mathfrak{w}}_{2\tilde{d}+2}(B)\big(\mu^{-1}_{\tilde{d}+1,\tilde{d}+m+1}(F)\big)^2\Big)\|\phi\|_{L^2(\X)}^2,\ \forall\delta>0\nonumber.
\end{align}
\end{proof}

\subsection{Self-adjointness.}

It is well known that the physical observables of a quantum system with configuration space $\X$, in a magnetic field $B\in\bigwedge^2\X$, are described by self-adjoint operators acting in the Hilbert space $L^2(\X)$. We remark that \textit{any real symbol in $\mathfrak{C}^B(\Xi)$ defines a bounded physical observable}. 

\smallskip
In order to study \textit{unbounded physical observables} we have to pay attention to the domain of definition of magnetic quantized operators. A reasonable choice for \textit{a real symbol} $F\in\mathfrak{M}^B(\Xi)$ could be (\textit{the maximal operator})
\beq\label{F-def-dom}
\mathcal{D}^A_F\ :=\ \left\{f\in L^2(\X)\,\mid\,\Op^A(F)f\in L^2(\X)\right\},
\eeq
where we take into account that $L^2(\X)$ may be identified with a complex linear subspace of $\mathscr{S}^\prime(\X)$. We notice that
\beq\label{F-def-dom-2}
\mathcal{D}^A_F= \left\{f\in L^2(\X)\,\mid\,\exists c_f>0,\,\left|\big\langle f,\Op^A(F)\phi\big\rangle_{L^2(\X)}\right|\,\leq c_f\|\phi\|_{L^2(\X)}\,\forall\phi\in\mathscr{S}(\X)\right\}.
\eeq

Due to the fact that $F\in\mathfrak{M}^B(\Xi)$ we know that $\Op^A(F)$ leaves $\mathscr{S}(\X)$ invariant (by the definition in \eqref{FD-Moyal-alg}), so that $\mathscr{S}(\X)\subset\mathcal{D}^A_F$, and we conclude that the domain $\mathcal{D}^A_F$ is a dense linear subspace of $L^2(\X)$. The problem is that it is not clear if with this domain of definition the operator $\Op^A(F)$ is symmetric! In fact its adjoint, having by definition the domain
\beq\label{F-dom-adj}
\widetilde{\mathcal{D}}^A_F\ :=\ \left\{f\in L^2(\X)\,\mid\,\exists c_f>0,\,\left|\big\langle f,\mathfrak{Op}^A(F)h\big\rangle_{L^2(\X)}\right|\,\leq c_f\|h\|_{L^2(\X)}\,\forall h\in\mathcal{D}^A_F\right\},
\eeq
will in principle be only \textit{a restriction} of it!

\smallskip
We could also consider just the operator 
$$
\Op^A(F):\mathscr{S}(\X)\rightarrow L^2(\X)
$$
and notice that it is by definition symmetric and thus closable and just take its closure (\textit{the minimal operator}).
Thus, in order to have the self-adjointness of our operator, it is enough to prove that the domain $\mathcal{D}^A_F$ is the closure of $\mathscr{S}(\X)$ for the \textit{graph-norm} of the operator $\mathfrak{Op}^A(F)$, i.e. to prove that
\beq\label{F-self-adj}
\mathcal{D}^A_F\ni f\ \Longleftrightarrow\ \exists\{\phi_n\}_{n\in\mathbb{N}}\subset\mathscr{S}(\X),\ \underset{n\rightarrow\infty}{\lim}\big(\|f-\phi_n\|_{L^2(\X)}^2+\|\mathfrak{Op}^A(F)(f-\phi_n)\|_{L^2(\X)}^2\big)=0.
\eeq

A procedure to prove self-adjointness in $L^2(\X)$ for an operator of the form $\Op^A(F)$ for some real symbol $F\in\mathfrak{M}^B(\Xi)$ is to construct \textit{a resolvent} for it. More precisely, to prove existence of two symbols
\beq\label{FD-rez}
\r^B_\pm(F)\in\mathfrak{C}^B(\Xi)\bigcap\mathfrak{M}
^B(\Xi)
\eeq 
such that
\beq\label{Frez-cond}
(F\mp i)\sharp^B\r^B_\pm(F)\,=\,1\qquad \r^B_\pm(F)\sharp^B(F\mp i)\,=\,1.
\eeq

In \cite{IMP1} we define a family of dense domains in $L^2(\X)$ that are domains of self-adjointness for a large class of magnetic quantizations of real symbols. Let us briefly recall these facts.

\smallskip
\begin{definition}\label{D-magn-Sob-sp}
Given a vector potential $A\in\Apol$, for any $s\in\mathbb{R}_+$ we define \textit{the magnetic Sobolev space} of order $s$ as
\beq\label{FD-Sob-sp}
\mathscr{H}^s_A(\X)\,:=\,\mathcal{D}^A_{\mathfrak{p_s}},\quad\mathfrak{p}_s(x,\xi):=<\xi>^s
\eeq
endowed with the graph norm that generates a scalar product
\beq\label{FD-magn-Sob-scprod}
\langle f,g\rangle_{\mathscr{H}^s_A}\ :=\ \langle f,g\rangle_{L^2(\X)}+\langle\mathfrak{Op}^A(\mathfrak{p}_s)f,\mathfrak{Op}^A(\mathfrak{p}_s)g\rangle_{L^2(\X)},
\eeq 
for which $\mathscr{H}^s_A(\X)$ is complete and thus a Hilbert space.
\end{definition}

\smallskip
For the rest of this section we shall suppose that $B\in\Lbc$ and make use of the notation and results in the first Appendix A.1. The following Theorem contains the main results in Theorem 5.1 in \cite{IMP1} and Proposition 6.31 in \cite{IMP2} and we present here a new proof of these results developping the ideas in the proof of Theorem 1.8 in \cite{MPR}.

\smallskip
\begin{theorem}\label{T-self-adj} 
Given  a magnetic field $B\in\Lbc$ and a choice of a vector potential $A\in\Apol$, for any real elliptic symbol $F\in S^p(\Xi,\X)$ with $p>0$ we have that:
\begin{enumerate}
\item there exist some symbols $\r^B_\pm(F)\in\mathscr{S}^\prime(\Xi)$ such that $(F\mp i)\sharp^B\r^B_\pm(F)=1$ and
$\r^B_\pm(F)\sharp^B(F\mp i)=1$;
\item $\r^B_\pm(F)\in S^{-p}(\Xi,\X)$;
\item $\Op^A(F)$ is self-adjoint in $L^2(\X)$ with domain $\mathscr{H}^p_A(\X)$ and essentialy self-adjoint on $\mathscr{S}(\X)$.
\end{enumerate}
\end{theorem}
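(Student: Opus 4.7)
The plan is to construct the Moyal resolvents $\r^B_\pm(F)$ by a parametrix-plus-Neumann-series procedure, first at a large imaginary shift $\mp i\lambda$ where the perturbation is small, then transfer back to $\pm i$ via the Moyal resolvent identity and self-adjoint spectral theory. Once a two-sided Moyal inverse of $F \mp i\lambda$ is produced in $S^{-p}(\Xi,\X) \subset \mathfrak{C}^B(\Xi) \cap \mathfrak{M}^B(\Xi)$, self-adjointness on $\mathscr{H}^p_A(\X)$ and essential self-adjointness on $\mathscr{S}(\X)$ both follow from the intertwining provided by Proposition \ref{P-izo-Moyal}.

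To build the parametrix I would exploit real ellipticity of $F$: the symbol $F \mp i\lambda$ never vanishes for $\lambda > 0$ and satisfies $|F \mp i\lambda|^2 \geq \lambda^2 + C^2 \langle\xi\rangle^{2p}$ for $|\xi|$ large, the imaginary part controlling the small-$\xi$ region. A standard cutoff-and-patch argument then yields $E_0^\pm(\lambda) \in S^{-p}(\Xi,\X)$ coinciding with $(F \mp i\lambda)^{-1}$ outside a compact set in $\xi$, whose seminorms decay like $O(\lambda^{-1})$ as $\lambda \to \infty$. Proposition \ref{P-2-nd-symb-comp} then gives
\[
(F \mp i\lambda) \sharp^B E_0^\pm(\lambda) \;=\; 1 + T_0^\pm(\lambda), \qquad T_0^\pm(\lambda) \in S^{-1}(\Xi,\X),
\]
with explicit seminorm bounds forcing $\|T_0^\pm(\lambda)\|_B \to 0$ as $\lambda \to \infty$. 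Iterating this construction in the scale $S^{-1} \supset S^{-2} \supset \cdots$ and Borel-summing (adapted to $\sharp^B$ via Theorem \ref{T-bound}) yields $E^\pm(\lambda) \in S^{-p}(\Xi,\X)$ with $(F \mp i\lambda) \sharp^B E^\pm(\lambda) = 1 + T^\pm(\lambda)$, $T^\pm(\lambda) \in S^{-\infty}(\Xi,\X)$ and $\|T^\pm(\lambda)\|_B$ arbitrarily small for large $\lambda$. For $\lambda$ large enough that $\|T^\pm(\lambda)\|_B < 1/2$, the Moyal Neumann series converges in $\mathfrak{C}^B(\Xi)$ to $1 + S^\pm(\lambda)$ inverting $1 + T^\pm(\lambda)$, so $\r^B_\pm(F;\lambda) := E^\pm(\lambda) \sharp^B (1 + S^\pm(\lambda))$ is an exact right Moyal inverse of $F \mp i\lambda$; a symmetric left construction combined with uniqueness promotes it to a two-sided inverse.

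The main obstacle I anticipate is verifying that $\r^B_\pm(F;\lambda)$ actually belongs to $S^{-p}(\Xi,\X)$ and not merely to $\mathfrak{C}^B(\Xi)$. Each Moyal Neumann iterate lies in $S^{-\infty}$ since this class is a two-sided $\sharp^B$-ideal by Theorem \ref{T-bound}, but propagating the $S^{-\infty}$-seminorm control through an infinite sum requires careful book-keeping of the composition estimates; the cleanest route is to bootstrap on the fixed-point identity
\[
\r^B_\pm(F;\lambda) \;=\; E^\pm(\lambda) - E^\pm(\lambda) \sharp^B T^\pm(\lambda) \sharp^B \r^B_\pm(F;\lambda),
\]
upgrading symbolic regularity from $\mathfrak{C}^B$ to $S^{-p}$ in finitely many iterations via Theorem \ref{T-bound}. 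Once $\r^B_\pm(F;\lambda) \in S^{-p}(\Xi,\X)$ is established, $\Op^A(F) \mp i\lambda$ is a bounded bijection $\mathscr{H}^p_A(\X) \to L^2(\X)$, so $\Op^A(F)$ is self-adjoint on $\mathscr{H}^p_A(\X)$; the $\mathscr{S}(\X)$-invariance of $\Op^A(\r^B_\pm(F;\lambda))$, inherited from $\r^B_\pm(F;\lambda) \in \mathfrak{M}^B(\Xi)$, makes $(\Op^A(F) \mp i\lambda) \mathscr{S}(\X)$ dense in $L^2(\X)$, giving essential self-adjointness on $\mathscr{S}(\X)$. To descend from $\pm i\lambda$ back to $\pm i$, spectral theory of the self-adjoint $\Op^A(F)$ places $\pm i$ in its resolvent set, and the Moyal resolvent identity
\[
\r^B_\pm(F;1) \;=\; \r^B_\pm(F;\lambda) \mp i(\lambda - 1)\, \r^B_\pm(F;1) \sharp^B \r^B_\pm(F;\lambda),
\]
combined with Theorem \ref{T-bound}, transfers the symbolic regularity $S^{-p}$ from $\lambda \gg 1$ down to $\lambda = 1$.
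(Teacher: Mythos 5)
Your overall strategy (build a Moyal parametrix, control the residual, invert by a Neumann series in $\mathfrak{C}^B(\Xi)$, then transfer to $\pm i$ by the resolvent identity and self-adjointness) is structurally the same as the paper's, but there are two noteworthy differences and one genuine gap.

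\emph{Differences in the approach.} The paper shifts by a large \emph{real} constant $a>0$: it takes the naive parametrix $F_a^{-1}=(F+a)^{-1}\in S^{-p}(\Xi,\X)$ globally (no cutoff-and-patch is needed --- $F+a$ is bounded below by $a$ in the compact-$\xi$ region and by $C\langle\xi\rangle^p$ outside, so $(F+a)^{-1}$ is already a good symbol), and does \emph{not} iterate to a smoothing remainder. Instead it proves the key quantitative estimate
$\rho^{-(p-1+r)}_{n;r,m}(F_a^{-1})\leq C\,a^{-q}$ for any $q\in(0,p^{-1})\cap(0,1)$, which forces $\|F_a\sharp^B F_a^{-1}-1\|_B = O(a^{-q})$. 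So choosing $a$ large makes the first-order remainder itself contractive and launches the Neumann series directly, bypassing the Borel-summation machinery you invoke. Your imaginary-shift construction with cutoff-and-patch and Borel summation in the $S^{-1}\supset S^{-2}\supset\cdots$ scale would also work but is considerably heavier than what is needed; the paper's $a^{-q}$ decay is the decisive and cleaner estimate.

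\emph{The gap.} You propose to recover $\r^B_\pm(F;\lambda)\in S^{-p}(\Xi,\X)$ by bootstrapping the fixed-point identity $\r^B_\pm = E^\pm - E^\pm\sharp^B T^\pm\sharp^B \r^B_\pm$ ``via Theorem~\ref{T-bound}.'' This does not close: Theorem~\ref{T-bound} only controls compositions of H\"ormander-class symbols $S^{p_1}\times S^{p_2}\to S^{p_1+p_2}$, while at the start of the bootstrap you only know $\r^B_\pm\in\mathfrak{C}^B(\Xi)\cap\mathfrak{M}^B(\Xi)$, which is not a symbol class. There is no result in this paper guaranteeing that $S^{-\infty}\sharp^B\mathfrak{C}^B(\Xi)$ lands in any $S^m(\Xi,\X)$, so the composition $T^\pm\sharp^B\r^B_\pm$ cannot be assigned a symbol class, and the bootstrap never gets off the ground. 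The paper resolves exactly this point by invoking Proposition 6.29 of \cite{IMP2}, a magnetic Beals/commutator criterion: one first checks that $\mathfrak{p}_p\sharp^B\mathfrak{r}^B_F(\mathcal{z})\in\mathfrak{C}^B(\Xi)$ and then the commutator characterization delivers $\mathfrak{r}^B_F(\mathcal{z})\in S^{-p}(\Xi,\X)$. Without some external characterization of this type, your argument establishes $\r^B_\pm\in\mathfrak{C}^B(\Xi)\cap\mathfrak{M}^B(\Xi)$ (enough for self-adjointness and essential self-adjointness on $\mathscr{S}(\X)$), but not part (2) of the theorem.
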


\begin{proof}
The ellipticity condition on $F\in S^p(\Xi,\X)$ with $p>0$ means that there exist two constants $R>0$ and $C>0$ such that for $|\xi|\geq R$ we have the bound
$C<\xi>^p\,\leq F(x,\xi)$.
Then let us fix some $a>0$ large enough such that $F+a>0$, set $F_a:=F+a$ and compute
$$
\big(F_a\sharp^BF_a^{-1}\big)(X)-1=\pi^{-2d}\int_{\Xi}dY\int_{\Xi}dZ\,e^{-2i(<\eta,z>-<\zeta,y>)}\varomega^B_x(y,z)\frac{F(X-Y)+a}{F(X-Z)+a}\,-\,1=
$$
\beq\label{F-cor-rez-2}
=\,\pi^{-2d}\int_{\Xi}dY\int_{\Xi}dZ\,e^{-2i(<\eta,z>-<\zeta,y>)}\varomega^B_x(y,z)\,\times
\eeq
\beq\label{F-cor-rez-3}
\times\,\underset{1\leq j\leq d}{\sum}\int_0^1d\tau\frac{(z_j-y_j)\big(\partial_{x_j}F\big)(X-Z+\tau(Z-Y))+(\zeta_j-\eta_j)\big(\partial_{\xi_j}F\big)(X-Z+\tau(Z-Y))}{F(X-Z)+a}.
\eeq

First let us get rid of the linear terms $z-y$ and $\zeta-\eta$ by integration by parts, using the identities:
$$
z_je^{-2i<\eta,z>}=(i/2)\partial_{\eta_j}e^{-2i<\eta,z>},\quad y_je^{2i<\zeta,y>}=(1/2i)\partial_{\zeta_j}e^{2i<\zeta,y>},
$$
$$
\eta_je^{-2i<\eta,z>}=(i/2)\partial_{z_j}e^{-2i<\eta,z>},\quad\zeta_je^{2i<\zeta,y>}=(1/2i)\partial_{y_j}e^{2i<\zeta,y>}.
$$
Integrating by parts the term containing the factor
$$
(z_j-y_j)e^{-2i(<\eta,z>-<\zeta,y>)}=-(1/2i)\big(\partial_{\eta_j}+\partial_{\zeta_j}\big)e^{-2i(<\eta,z>-<\zeta,y>)}
$$
we obtain integrals of the form
$$
\mathcal{I}_a(F):=(1/2i)\pi^{-2d}\underset{1\leq j\leq d}{\sum}\int_0^1d\tau
\int_{\Xi}dY\int_{\Xi}dZ\,e^{-2i\sigma(Y,Z)}\varomega^B_x(y,z)\big(\partial_{\eta_j}+
\partial_{\zeta_j}\big)\frac{\big(\partial_{x_j}F\big)(X-Z+\tau(Z-Y))}{F(X-Z)+a}
$$
and notice that
$$
\big(\partial_{\eta_j}+\partial_{\zeta_j}\big)\frac{\big(\partial_{x_j}F\big)(X-Z+\tau(Z-Y))}{F(X-Z)+a}=
$$
$$
=-\frac{\big(\partial_{\xi_j}\partial_{x_j}F\big)(X-Z+\tau(Z-Y))}{F(X-Z)+a}-\big(\partial_{x_j}F\big)(X-Z+\tau(Z-Y))\big(\partial_{\xi_j}F_a^{-1}\big)(X-Z).
$$
We have to take into account that $\partial_{x_j}F\in S^p(\Xi,\X)$, $\partial_{\xi_j}F\in S^{p-1}(\Xi,\X)$, $\partial_{\xi_j}\partial_{x_j}F\in S^{p-1}(\Xi,\X)$, $(F+a)^{-1}\in S^{-p}(\Xi,\X)$ and $(F+a)^{-2}\in S^{-2p}(\Xi,\X)$ and use the result in Proposition \ref{P-Ap-1} to obtain that $\mathcal{I}_a(F)\in S^0(\Xi,\X)$ with the following estimations on its seminorms:
$$
\nu^{-m}_{n,m}\left(\mathcal{I}_a(F)\right)\leq
\ C(d,p,n,m)\mathring{\mathfrak{w}}_{n+n_1+n_2}(B)\times
$$
\beq\label{F-rest-invers-1}
\times\underset{0\leq k\leq m}{\sum}\Big(\rho^{p-1-k}_{n+n_2+1;k+1,m_2}(F)\,\rho^{-(p-1-k+m)}_{n+n_1;m-k,m_1}(F_a^{-1})+\rho^{p-k}_{n+n_2;k,m_2}(F)\,\rho^{-(p-k+m)}_{n+n_1;m-k+1,m_1}(F_a^{-1})\Big),
\eeq
where $n_1=2[(d+p)/2]+2$, $n_2=2[(d-p)/2]+2$, $m_1=2[(n+n_1)/2+n_2]+2$ and $m_2=2[(n+n_2)/2+n_1]+2$. 

\smallskip
Let us study now the term containing the factor
$$
(\zeta_j-\eta_j)e^{-2i(<\eta,z>-<\zeta,y>)}=(1/2i)\big(\partial_{y_j}+\partial_{z_j}\big)
e^{-2i(<\eta,z>-<\zeta,y>)}.
$$
After integration by parts we obtain integrals of the form
$$
\mathcal{J}_a(F):=-(1/2i)\pi^{-2d}\underset{1\leq j\leq d}{\sum}\int_0^1\hspace*{-5pt}d\tau
\int_{\Xi}dY\int_{\Xi}dZ\,e^{-2i\sigma(Y,Z)}\big(\partial_{y_j}+
\partial_{z_j}\big)\Big(\varomega^B_x
(y,z)\frac{\big(\partial_{\xi_j}F\big)(X-Z+\tau(Z-Y))}{F(X-Z)+a}\Big)
$$
and notice that:
$$
\Big(\big(\partial_{y_j}+\partial_{z_j}\big)\varomega^B_x\Big)=\Big(\big(\partial_{y_j}+\partial_{z_j}\big)F^B_x\Big)(y,z)\varomega^B_x(y,z)
$$
$$
\big(\partial_{y_j}+\partial_{z_j}\big)\frac{\big(\partial_{\xi_j}F\big)(X-Z+\tau(Z-Y))}{F(X-Z)+a}=
$$
$$
=-\frac{\big(\partial_{\xi_j}\partial_{x_j}F\big)(X-Z+\tau(Z-Y))}{F(X-Z)+a}-\big(\partial_{\xi_j}F\big)(X-Z+\tau(Z-Y))\big(\partial_{x_j}F_a^{-1}\big)(X-Z).
$$
As in the previous analysis we obtain that $\mathcal{J}_a(F)\in S^0(\Xi,\X)$ with the following estimations on the seminorms giving the Fr\'{e}chet topology:
$$
\nu^{-m}_{n,m}\left(\mathcal{I}_a(F)\right)\leq
\ C(d,p,n,m)\mathring{\mathfrak{w}}_{n+n_1+n_2}(B)\times
$$
\beq\label{F-rest-invers-2}
\times\underset{0\leq k\leq m}{\sum}\Big(\rho^{p-1-k}_{n+n_2+1;k+1,m_2}(F)\,\rho^{-(p-1-k+m)}_{n+n_1;m-k,m_1}(F_a^{-1})+\rho^{p-k-1}_{n+n_2;k,m_2}(\nabla_\xi F)\,\rho^{-(p-1-k+m)}_{n+n_1+1;m-k,m_1}(F_a^{-1})\,+
\eeq
$$
+\,\rho_1(B)\rho^{p-k-1}_{n+n_2+1;k+1,m_2}(F)\,\rho^{-(p-1-k+m)}_{n+n_1+1;m-k,m_1}(F_a^{-1})\Big),
$$
where $n_1=2[(d+p)/2]+2$, $n_2=2[(d-p)/2]+2$, $m_1=2[(n+n_1)/2+n_2]+2$ and $m_2=2[(n+n_2)/2+n_1]+2$. 

\smallskip
First we notice that 
$$
\rho^{-(p-1+r)}_{n;r,m}(F_a^{-1})=\underset{|\gamma|=r}{\max}\rho^{-(p-1+r)}_{n;0,m}\big(\partial_\xi^\gamma F_a^{-1}\big)
$$
so that
$$
\rho^{-(p-1+r)}_{n;0,m}\big(\partial_\xi^\gamma F_a^{-1}\big)=
\underset{|\gamma^1|\leq n}{\max}\ 
\underset{|\gamma^2|\leq m}{\max}\ \underset{(x,\xi)\in\Xi}{\sup}\left<\xi\right>^{p-1+r}\big|\big(\partial^{\gamma^1}_x\partial^{\gamma^2+\gamma}_\xi F_a^{-1}\big)(x,\xi)\big|
$$
and using Fa\'{a} di Bruno's formula \cite{G} we can write
$$
\left<\xi\right>^{p-1+r}\big|\big(\partial^{\gamma^1}_x\partial^{\gamma^2+\gamma}_\xi F_a^{-1}\big)(x,\xi)\big|\leq\ \left<\xi\right>^{p-1+r}(\gamma^1!)((\gamma^2+\gamma)!)\times
$$
$$
\times \underset{1\leq l\leq n}{\sum}\frac{(-1)^l}{l!}\underset{1\leq k\leq m}{\sum}\frac{(-1)^k}{k!}\ \sum\limits_{\alpha^1+\cdots+\alpha^l=
\gamma^1}\sum\limits_{\beta^1+\cdots+\beta^k=
\gamma^2+\gamma}\left(\prod\limits_{s=1}^{l}\frac{1}{\alpha^s!}
\prod\limits_{t=1}^{k}\frac{1}{\beta^t!}
\left|\big(\partial^{\alpha^s}_x\partial^{\beta^t}
_\xi F\big)(X)\right|\right)\big|F(X)+a\big|^{-(l+k+1)}\leq
$$
$$
\leq (\gamma^1!)((\gamma^2+\gamma)!)\frac{\left<\xi\right>^{p-1}}{F_a}\underset{1\leq l\leq n}{\sum}\underset{1\leq k\leq m}{\sum}\frac{(-1)^{l+k}}{l!k!}\sum\limits_{\alpha^1+\cdots+\alpha^l=
\gamma^1}
\hspace*{-15pt}\left<\xi\right>^{|\gamma|}\hspace*{-15pt}
\sum\limits_{\beta^1+\cdots+\beta^k=\gamma^2+\gamma}\left(\prod\limits_{s=1}^{l}\frac{1}{\alpha^s!}
\prod\limits_{t=1}^{k}\frac{1}{\beta^t!}\frac{\left|\big(\partial^{\alpha^s}_x\partial^{\beta^t}_\xi F\big)(X)\right|}{F_a}\right)
$$
$$
\leq C(F,n,m,r)\underset{\xi\in\X^*}{\sup}\frac{\left<\xi\right>^{p-1}}{\left<\xi\right>^p+a}.
$$
Using the monotonicity and the concavity of the logarithm one can prove the following inequality:
$$
a+b\geq\big(q^{-1}a\big)^q\big((1-q)^{-1}b\big)^{(1-q)},\quad\forall(a,b,q)\in\mathbb{R}_+\times\mathbb{R}_+\times(0,1).
$$
Taking $b=\left<\xi\right>^p$ we obtain that 
$$
\frac{\left<\xi\right>^{p-1}}{\left<\xi\right>^p+a}\,=\,\frac{b^{1-1/p}}{a+b}\,\leq\,\frac{b^{1-1/p}}{\big(q^{-1}a\big)^q\big((1-q)^{-1}b\big)^{(1-q)}}\,\leq\,q\left(\frac{1-q}{q}\right)^{1-q}\frac{\left<\xi\right>^{pq-1}}{a^q}
$$
and thus, chosing some $q\in(0,p^{-1})\cap(0,1)$ we obtain that for any $r\in\mathbb{N}$ we have the estimation 
\beq\label{Est-snorms-inv}
\rho^{-(p-1+r)}_{n;r,m}(F_a^{-1})\ \leq\ C(F,n,m,p,q)\,a^{-q}.
\eeq
Using also \eqref{F-rest-invers-1} and \eqref{F-rest-invers-2} and denoting by 
\beq\label{FD-symbol-rest}
\mathfrak{x}^B_F(a):=F_a\sharp^BF_a^{-1}-1\in S^0(\Xi,\X),
\eeq
we obtain for any $q\in(0,p^{-1})\cap(0,1)$ that
\beq\label{Est-symb-rest}
\nu^{-m}_{n,m}\big(\mathfrak{x}^B_F(a)\big)\ \leq\ C(F,n,m,p,q)\,a^{-q}\,\mu_1(B)\mathring{\mathfrak{w}}_{n+n_1+n_2}(B).
\eeq

From Theorem 3.1 in \cite{IMP1} we know that there exist two numbers $(p_1,p_2)\in\mathbb{N}\times\mathbb{N}$ depending only on the dimension of $\X$ such that
$$
\|F\|_B\,\leq\,C(d)\,\nu^0_{p_1,p_2}(F),\qquad\forall F\in S^0(\Xi,\X).
$$
Thus, for $q=\min\{1,p^{-1}\}$ and $n_1=2[(d+p)/2]+2$, $n_2=2[(d-p)/2]+2$ we have
$$
\big\|\mathfrak{x}^B_F(a)\big\|_B\,\leq\,C(d)\,\nu^0_{p_1,p_2}\big(F_a\sharp^BF_a^{-1}-1\big)\,\leq\,C(d,F)\,a^{-q}\,\mu_1(B)\mathring{\mathfrak{w}}_{p_1+n_1+n_2}(B).
$$
In conclusion, if we choose 
$$
a\,>\,\big[2C(d,F)\mu_1(B)\mathring{\mathfrak{w}}_{p_1+n_1+n_2}(B)\big]^{1/q}
$$
we have $\big\|\mathfrak{x}^B_F(a)\big\|_B\leq1/2$. Moreover we notice that
$$
F_a\sharp^B\big[F_a^{-1}\sharp^B\big(1-\mathfrak{x}^B_F(a)\big)\big]=1+\mathfrak{x}^B_F(a)-\mathfrak{x}^B_F(a)-\big(
\mathfrak{x}^B_F(a)\sharp^B\mathfrak{x}^B_F(a)\big)=1-\big(\mathfrak{x}^B_F(a)\sharp^B\mathfrak{x}^B_F(a)\big).
$$
From these we may conclude that the following limit exists in $\mathfrak{C}^B(\Xi)$ in the topology of the norm $\|\cdot\|_B$
\beq\label{FD-ser-rest}
\mathfrak{z}^B_F(a):=1+\underset{N\nearrow\infty}{\lim}\sum\limits_{n=1}^{N}\big[-\mathfrak{x}^B_F(a)\big]^{\sharp^Bn}
\eeq 
and the symbol $\mathfrak{r}^B_F(a):=F_a^{-1}\sharp^B\mathfrak{z}^B_F(a)$ satisfies the equality
\beq\label{FD-symb-invers}
\big(F+a\big)\sharp^B\mathfrak{r}^B_F(a)=1.
\eeq
Starting then with the reversed product $F_a^{-1}\sharp^BF_a$ and repeating exactly the above arguments we obtain a left inverse for $F+a$ for the magnetic Moyal product, and due to the well known abstract argument they have to be equal. We conclude that $\Op^A(F)$ is a symmetric operator having the real number $-a$ in its resolvent set; as this set is open we can find in its resolvent set points with strictly positive and strictly negative imaginary parts so that we conclude that it is self-adjoint. Moreover we know that $\mathfrak{z}^B_F(a)\in\mathfrak{C}^B(\Xi)$ can be analytically continued to an analytic map
$$
\{\mathcal{z}\in\mathbb{C}\,\mid\,\Im\mathcal{m}\,\mathcal{z}\ne0\}\bigcup\{\mathcal{x}\in\mathbb{R}\,\mid\,\mathcal{x}+a<\epsilon\}\ni\mathcal{z}\mapsto\mathfrak{r}^B_F(
\mathcal{z})\in\mathfrak{C}^B(\Xi)
$$
for some $\epsilon>0$ small enough and this map verifies the \textit{resolvent equation}:
\beq\label{E-rez-symb}
\mathfrak{r}^B_F(\mathcal{z}_1)-\mathfrak{r}^B_F(
\mathcal{z}_2)=\big(\mathcal{z}_2-\mathcal{z}_1\big){r}^B_F(\mathcal{z}_1)\sharp^B\mathfrak{r}^B_F
(\mathcal{z}_2)=\big(\mathcal{z}_2-\mathcal{z}_1\big){r}^B_F(\mathcal{z}_2)\sharp^B\mathfrak{r}^B_F(\mathcal{z}_1)
\eeq
and also the defining relations for the inverse:
\beq\label{F-invers}
\big(F+\mathcal{z}\big)\sharp^B\mathfrak{r}^B_F(\mathcal{z})=
\mathfrak{r}^B_F(\mathcal{z})\sharp^B\big(F+\mathcal{z}\big)=1.
\eeq

Moreover we notice that
$$
\mathfrak{p}_p\sharp^B\mathfrak{r}^B_F(\mathcal{z})=\mathfrak{p}_p\sharp^B\mathfrak{r}^B_F(a)+
\big(a-\mathcal{z}\big)\mathfrak{p}_p\sharp^B\mathfrak{r}^B_F(a)\sharp^B\mathfrak{r}^B_F(\mathcal{z})=
$$
$$
=\mathfrak{p}_p\sharp^B(F+a)^{-1}\sharp\mathfrak{z}^B_F(a)+
\big(a-\mathcal{z}\big)\mathfrak{p}_p\sharp^B(F+a)^{-1}\sharp^B\mathfrak{z}^B_F(a)
\sharp^B\mathfrak{r}^B_F(\mathcal{z})\in\mathfrak{C}^B(\Xi),
$$
because $\mathfrak{p}_p\sharp^B(F+a)^{-1}\in S^0(\Xi,\X)\subset\mathfrak{C}^B(\Xi)$, $\mathfrak{z}^B_F(a)\in\mathfrak{C}^B
(\Xi)$, $\mathfrak{r}^B_F(\mathcal{z})\in\mathfrak{C}^B(\Xi)$ and we use the Theorem \ref{T-bound} for the composition of symbols and the fact that $\mathfrak{C}^B
(\Xi)$ is a $^*$-algebra for the magnetic Moyal product.
We conclude by using Proposition 6.29 in \cite{IMP2} that in fact $\mathfrak{r}^B_F(\mathcal{z})\in S^{-p}(\Xi,\X)$ for any $\mathcal{z}$ in the resolvent set of $\Op^A(F)$.

\smallskip
From \eqref{F-invers} we easily deduce that $\Im\text{\sf m}\,\mathfrak{r}^B_F(\mathcal{z})=\mathscr{H}^
p_A(\X)$. Applying $\Op^A\big(\mathfrak{r}^B_F(i)\big)$ to $\mathscr{S}(\X)$ and taking into account that $\mathfrak{r}^B_F(i)\in S^{-p}(\Xi,\X)\subset\mathfrak{M}^B(\Xi)$, that $\|\Op^A\big(\mathfrak{r}^B_F(i)\big)\|_{\mathbb{B}(L^2(\X))}\leq 1$ and that $\mathscr{S}(\X)$ is dense in $L^2(\X)$, we easily obtain the essential self-adjointness of $\Op^A(F)$ on $\mathscr{S}(\X)$.
\end{proof}

\subsection{The evolution group.}

We suppose given a magnetic field $B\in\Lbc$ and a real elliptic symbol $h\in S^p(\Xi,\X)$ for some $p>0$ and for some vector potential $A\in\Apol$ associated to $B$ we consider the self-adjoint operator $\Op^A(h):\mathscr{H}^p_A(\X)\rightarrow L^2(\X)$ as the one studied in the previous subsection, that we shall denote by $\mathfrak{Q}^A(h)$. Then, by Stone Theorem, we can consider the following one-parameter strongly continuous unitary group
$$
\mathbb{R}\ni t\mapsto W^A_h(t)\in\mathcal{U}\big(L^2(\X)\big).
$$
It is defined as the unique solution of the Cauchy problem
\beq
\left\{
\begin{array}{l}
i\partial_tW^A_h(t)\,=\,\mathfrak{Q}^A(h)W^A_h(t),\quad\forall t\in\mathbb{R}\\
W^A_h(0)\,=\,\bb1
\end{array}
\right.
\eeq
and given explicitly by the following formula (using the functional calculus with self-adjoint operators):
$$
W^A_h(t)\,=\,\exp\big(-it\mathfrak{Q}^A(h)\big).
$$

\begin{remark}\label{R-unit-gr}
For any $t\in\mathbb{R}$, the unitary operator $W^A_h(t)$ leaves invariant the domain $\mathscr{H}^p_A(\X)$ and we have the following commutation relation:
$$
W^A_h(t)\mathfrak{Q}^A(h)f\,=\,\mathfrak{Q}^A(h)W^A_h(t)f,\qquad\forall f\in\mathscr{H}^p_A(\X).
$$
\end{remark}

Let us consider its distribution symbol defined by 
$$
W^A_h(t)\,=:\,\Op^A\big(w^B_h(t)\big).
$$
A priori we know that $w^B_h(t)\in\mathfrak{C}^B(\Xi)$ for any $t\in\mathbb{R}$ and that it defines by magnetic quantization an invertible operator with the inverse having the following symbol (usually we denote by $F^-_B$ the inverse of $F\in\mathscr{S}^\prime(\Xi)$ for the magnetic Moyal product $\sharp^B$, when this inverse exists):
$$
\big[w^B_h(t)\big]^-_B=w^B_h(-t)=\overline{w^B_h(t)}
\in\mathfrak{C}^B(\Xi).
$$
We also know that the function $\mathbb{R}\ni t\mapsto w^B_h(t)\in\mathfrak{C}^B(\Xi)$ is a solution of the Cauchy problem
\beq
\left\{
\begin{array}{l}
i\partial_tw^B_h(t)\,=\,h\sharp^Bw^B_h(t),\quad\forall t\in\mathbb{R}\\w^B_h(0)\,=\,1.
\end{array}
\right.
\eeq

\begin{theorem}\label{T-ev-group}
Suppose given a magnetic field $B\in\Lbc$ and a real elliptic symbol $h\in S^p(\Xi,\X)$ for some $p>0$ and for some vector potential $A\in\Apol$ associted to $B$ let us consider the self-adjoint operator $\mathfrak{Q}^A(h):\mathscr{H}^p_A(\X)\rightarrow L^2(\X)$ and its associated unitary group $\big\{W^A_h(t)\big\}_{t\in\mathbb{R}}$. Then $W^A_h(t)\mathscr{S}(\X)\subset\mathscr{S}(\X)$ for any $t\in\mathbb{R}$.
\end{theorem}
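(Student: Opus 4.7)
The goal is to prove that the distributional symbol $w^B_h(t)$ belongs to the magnetic Moyal algebra $\mathfrak{M}^B(\Xi)$. By Proposition~\ref{P-izo-Moyal} this is equivalent to $W^A_h(t)$ being continuous both as a map $\mathscr{S}(\X)\to\mathscr{S}(\X)$ and as a map $\mathscr{S}^\prime(\X)\to\mathscr{S}^\prime(\X)$; since $W^A_h(-t)=W^A_h(t)^*$, preservation of $\mathscr{S}^\prime(\X)$ follows by transposition once Schwartz preservation is established. I shall use the characterization $\psi\in\mathscr{S}(\X)\Longleftrightarrow Q^\alpha(\pi^A)^\beta\psi\in L^2(\X)$ for every $(\alpha,\beta)\in\mathbb{N}^{2d}$, where $\pi^A_j:=\Op^A(\xi_j)$; this is equivalent to the standard characterization because $A\in\Apol$ contributes only polynomial corrections absorbed by the position weights $Q^\alpha$.

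Write $H:=\mathfrak{Q}^A(h)$. The engine is a pair of commutator computations. From the kernel formula~\eqref{F-int-kernel} the magnetic phase $\Lambda^A$ commutes with multiplication by $x_j$, yielding
\[
[H,Q_j]=-i\,\Op^A(\partial_{\xi_j}h)\in\Op^A(S^{p-1}(\Xi,\X)).
\]
An analogous computation via~\eqref{F-sharp-B-1} (only two integrations by parts are required since $\xi_j$ is linear in $\xi$) gives
\[
[H,\pi^A_j]=i\,\Op^A\Bigl(\partial_{x_j}h-\sum_{k=1}^d B_{jk}\,\partial_{\xi_k}h\Bigr)\in\Op^A(S^p(\Xi,\X)),
\]
where the boundedness of $B_{jk}$ and its derivatives is guaranteed by $B\in\Lbc$. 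The magnetic canonical relation $[\pi^A_j,\pi^A_k]=-i\,\Op^A(B_{jk})$ together with Theorem~\ref{T-bound} then allows one to iterate and produce a stable decomposition
\[
[H,Q^\alpha(\pi^A)^\beta]=\sum_{|\alpha^\prime|+|\beta^\prime|<|\alpha|+|\beta|}Q^{\alpha^\prime}(\pi^A)^{\beta^\prime}\,\Op^A(G_{\alpha^\prime\beta^\prime}),\qquad G_{\alpha^\prime\beta^\prime}\in S^p(\Xi,\X).
\]

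The Duhamel identity
\[
Q^\alpha(\pi^A)^\beta W^A_h(t)\phi=W^A_h(t)Q^\alpha(\pi^A)^\beta\phi+i\int_0^t W^A_h(t-s)\,[H,Q^\alpha(\pi^A)^\beta]\,W^A_h(s)\phi\,ds
\]
then drives a double induction on $|\alpha|+|\beta|$. The first summand lies in $L^2(\X)$ because $Q^\alpha(\pi^A)^\beta\phi\in\mathscr{S}(\X)$ and $W^A_h(t)$ is unitary. In the integrand I commute each factor $\Op^A(G_{\alpha^\prime\beta^\prime})$ past $W^A_h(s)$ by a second Duhamel step, using that $[H,\Op^A(G_{\alpha^\prime\beta^\prime})]\in\Op^A(S^{p-1}(\Xi,\X))$ by Proposition~\ref{P-2-nd-symb-comp}; this reduces everything to quantities $\|Q^{\alpha^{\prime\prime}}(\pi^A)^{\beta^{\prime\prime}}W^A_h(r)\widetilde\phi\|_{L^2(\X)}$ with $|\alpha^{\prime\prime}|+|\beta^{\prime\prime}|<|\alpha|+|\beta|$ and $\widetilde\phi\in\mathscr{S}(\X)$, the latter thanks to $[W^A_h,H]=0$ combined with iteration of Theorem~\ref{T-self-adj} (which keeps all $\mathscr{H}^{np}_A$-norms of $W^A_h(t)\phi$ equal to those of $\phi$). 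A Gronwall estimate then closes the induction with seminorm bounds growing at most polynomially in $|t|$, and intersecting over all $(\alpha,\beta)$ delivers $W^A_h(t)\phi\in\mathscr{S}(\X)$ together with continuity of the map.

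The main obstacle is that $[H,\pi^A_j]$ has the same symbol order $p$ as $H$ itself, so bare commutation does not lower the momentum degree $|\beta|$; the induction survives only because the total count $|\alpha|+|\beta|$ of $Q$- and $\pi^A$-factors strictly decreases at each step, and because the magnetic-correction coefficients $B_{jk}\partial_{\xi_k}h$ remain bounded. This last point is precisely where $B\in\Lbc$ (not merely $\Lpol$) is indispensable: polynomially growing $B$ would reintroduce an unbounded $Q$-multiplier at every iteration and break the $L^2$-Gronwall estimate that closes the scheme.
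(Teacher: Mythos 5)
Your overall framework (Duhamel formula, induction on $|\alpha|+|\beta|$, commuting $Q$- and $\Pi^A$-factors through the propagator) is the same as the paper's, but the step that is supposed to close the induction has a concrete error and, as a consequence, the scheme does not terminate. You invoke Proposition~\ref{P-2-nd-symb-comp} to claim $[H,\Op^A(G_{\alpha'\beta'})]\in\Op^A(S^{p-1}(\Xi,\X))$, but for $G_{\alpha'\beta'}\in S^p$ and $h\in S^p$ that proposition only yields $h\sharp^B G-G\sharp^B h\in S^{2p-1}(\Xi,\X)$: the pointwise products cancel, and each remainder lives in $S^{p_1+p_2-1}=S^{2p-1}$. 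For $p>1$ this is an operator of \emph{strictly higher} order than the one you started with, so the ``second Duhamel step'' is not a gain — each iteration produces a worse unbounded factor multiplied against $W^A_h(r)\phi$, and the resulting vector is no longer a Schwartz function, so the induction hypothesis cannot be reapplied. Even if you track orders correctly, you face an infinite regress with growing symbol order; the claim that ``a Gronwall estimate then closes the induction'' is not justified, because the quantities being estimated are not the induction hypothesis quantities.

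The paper avoids this by two moves you did not make. First, the induction hypothesis H$_N$ is strengthened to $Q^\alpha(\Pi^A)^\beta W^A_h(t)\phi\in\mathscr{H}^p_A(\X)$ (membership in the \emph{domain} of $\mathfrak{Q}^A(h)$), not merely in $L^2(\X)$. Second, and crucially, the unbounded commutator coefficient is tamed by inserting the resolvent symbol: writing
\[
\Op^A\!\Big([\ad^B_q]^{\gamma_{\mathfrak{m}}}\big([\ad^B_p]^{\gamma_{\mathfrak{n}}}(h)\big)\Big)
=\Op^A\!\Big([\ad^B_q]^{\gamma_{\mathfrak{m}}}\big([\ad^B_p]^{\gamma_{\mathfrak{n}}}(h)\big)\sharp^B\mathfrak{r}^B_-\Big)\big(\mathfrak{Q}^A(h)+i\big),
\]
where $\mathfrak{r}^B_-\in S^{-p}(\Xi,\X)$ is the resolvent symbol from Theorem~\ref{T-self-adj}, the left factor has a symbol in $S^0(\Xi,\X)\subset\mathfrak{C}^B(\Xi)$, hence is a \emph{bounded} operator, while $(\mathfrak{Q}^A(h)+i)$ lands on $Q^{\gamma_{\mathfrak{m}^\complement}}(\Pi^A)^{\gamma_{\mathfrak{n}^\complement}}W^A_h(s)\phi$, which the strengthened hypothesis H$_N$ places in $\mathscr{H}^p_A(\X)$. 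No second Duhamel step is needed and no order inflation occurs. You should replace the incorrect order computation with the correct $S^{2p-1}$ and then abandon the iterated-Duhamel route in favour of the resolvent-insertion device, together with the stronger $\mathscr{H}^p_A$-valued induction hypothesis.
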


\begin{proof}
From Remark \ref{R-unit-gr} and the definition pf $\mathfrak{Q}^A(h)$, we conclude that $W^A_h(t)\mathscr{S}(\X)\subset\mathscr{H}^p_A(\X)$. In order to prove the Theorem it is clearly enough to prove that 
$$
Q^\alpha\big(\Pi^A\big)^\beta W^A_h(t)\phi\in L^2(\X),\ \forall\phi\in\mathscr{S}(\X),\ \forall(\alpha,\beta)\in\mathbb{N}^{2d}.
$$
In dealing with these computations we shall use some notation.
\begin{description}
\item[Notation:] ~
\begin{itemize}
\item $q_j(x,\xi):=x_j$ and $p_j(x,\xi):=\xi_j$, for $1\leq j\leq d$.
\item Given some distribution $F\in\mathfrak{M}^B(\Xi)$ we set 
$$
\ad^B_{q_j}(F):=q_j\sharp^BF-F\sharp^Bq_j,\qquad\ad^B_{p_j}(F):=p_j\sharp^BF-F\sharp^Bp_j.
$$
\item For $1\leq j\leq d$ and for any $k\in\mathbb{N}$ we denote by $p_j^{\sharp k}$ the magnetic Moyal product of $k$ factors $p_j$ and similarly $p^{\sharp\gamma}:=(p_1^{\sharp\gamma_1})\sharp^B
\ldots\sharp^B(p_d^{\sharp\gamma_d})$ with multi-index notation. We also use similar notations for the symbols $\{q_1,\ldots,q_d\}$.
\item For any multi-index $\alpha\in\mathbb{N}^d$ we denote by $\{\alpha\}$ the ordered set with $\alpha_1$ entries equal to 1, followed by $\alpha_2$ entries equal to 2 and so on up to the last $\alpha_d$ entries equal to $d$. Reciprocally, for any subset $\mathfrak{m}\subset\{\gamma\}$ for some given $\gamma\in\mathbb{N}^d$ we denote by $\gamma_{\mathfrak{m}}\in\mathbb{N}^d$ its associated multi-index. 
\item For any $\gamma\in\mathbb{N}^d$ we shall denote by $[\ad^B_q]^\gamma(F)$ the multiple commutator
$$
[\ad^B_q]^\gamma(F)\,:=\,\ad^B_{q_{i_1}}\circ\ldots\circ\ad^B_{q_{i_{|\gamma|}}}\!(F),
$$
where $\{i_1,\ldots,i_{|\gamma|}\}=\{\gamma\}$, and similarly for $[\ad^B_p]^\gamma$.
\end{itemize}
\end{description}

\smallskip
We notice that all the distributions $q_1,\ldots,q_d$ and $p_1,\ldots,p_d$ are elements of $\mathfrak{M}^B(\Xi)$ (see \cite{MP1}) and the distributions $p_1,\ldots,p_d$ are also in $S^1(\Xi,\X)$. We shall use several times the following commutation formula
$$
\big(F_1\sharp^B\ldots\sharp^BF_N\big)\sharp^BG-G\sharp^B\big(F_1\sharp^B\ldots\sharp^BF_N\big)=
$$
\beq\label{F-mult-comm}
=\underset{1\leq k\leq N}{\sum}\underset{\{j_1,\ldots,j_k\}\subset\{1,\ldots,N\}}{\sum}[F_{j_1},\big[F_{j_2}\ldots[F_{j_k},G]_B\ldots\big]_B\Big]_B\sharp^B 
F_{l_1}\sharp^B\ldots\sharp^BF_{l_{N-k}}
\eeq
where $[F,G]_B:=F\sharp^BG-G\sharp^BF$ and for any subset $\{j_1,\ldots,j_k\}\subset\{1,\ldots,N\}$ we denote by $\{l_1,\ldots,l_{N-k}\}=\{1,\ldots,N\}\setminus\{j_1,\ldots,j_k\}$ all the sets being considered ordered by the natural order induced by $\mathbb{N}$.

\smallskip
For any pair $(\alpha,\beta)\in\mathbb{N}^{2d}$ we have that
$$
Q^\alpha\big(\Pi^A\big)^\beta=\Op^A\big(q^{\sharp\alpha}\sharp^Bp^{\sharp\beta}\big)\in\mathcal{L}\big(\mathscr{S}(\X);\mathscr{S}^\prime(\X)\big),
$$
so that for any test function $\phi\in\mathscr{S}(\X)$ the following tempered distribution is well defined:
$$
\big[Q^\alpha\big(\Pi^A\big)^\beta W^A_h(t)\phi\big](\psi)=\big[W^A_h(t)\phi\big]\big(\big(\Pi^A\big)^\beta Q^\alpha\psi)=\left\langle\overline{\big(\Pi^A
\big)^\beta Q^\alpha\psi}\,,\,W^A_h(t)\phi\right\rangle_{L^2(\X)}\hspace*{-0,7cm}
\in\mathbb{C},\quad\forall\psi\in\mathscr{S}(\X).
$$
We shall prove that it defines in fact a continuous functional of $\psi\in\mathscr{S}(\X)$ for the topology induced by $\mathscr{H}^p_A(\X)$. The idea is to compute the commutator
$$
\Big[Q^\alpha\big(\Pi^A\big)^\beta\,,\,W^A_h(t)\Big]=\Op^A\Big(\big(q^{\sharp\alpha}
\sharp^Bp^{\sharp\beta}\big)\sharp^Bw^B_h(t)-w^B_h(t)\sharp^B\big(q^{\sharp\alpha}\sharp^Bp^{\sharp\beta}\big)\Big)
$$
by computing its distribution kernel and having in mind the composition laws \eqref{Ext-sharp-B-2} and the fact  that the symbol $q^{\sharp\alpha}\sharp^Bp^{\sharp\beta}$ is in the magnetic Moyal algebra $\mathfrak{M}^B(\Xi)$. In order to deal with the commutators with $W^A_h(t)$ we notice that given some operator $X^A:=\Op^A(F_X)$ for some $F_X\in\mathfrak{M}^B(\Xi)$ we can write
\begin{align}
\big[X^A,\,W^A_h(t)\big]\phi&=\big(X^AW^A_h(t)-W^A_h(t)X^A\big)\phi=\int_0^tds\big[\partial_s\left(W^A_h(t-s)X^AW^A_h(s)\right)\big]\phi\nonumber\\
&=-i\int_0^tds\Big[W^A_h(t-s)\Big(X^A\mathfrak{Q}^A(h)-\mathfrak{Q}^A(h)X^A\Big)W^A_h(s)\Big]\phi\nonumber\\
&=-i\int_0^tdsW^A_h(t-s)\Big[X^A\,,\,\mathfrak{Q}^A(h)\Big]W^A_h(s)\phi.\nonumber
\end{align}
We are interested for the moment to study the case $X^A=\Op^A\big(q^{\sharp\alpha}\sharp^B p^{\sharp\beta}\big)\big)$ and the commutator
$$
\Big[X^A,\,\mathfrak{Q}^A(h)\Big]=\Op^A\big((q^{\sharp\alpha}\sharp^B p^{\sharp\beta})\sharp^Bh-h\sharp^B(q^{\sharp\alpha}\sharp^B p^{\sharp\beta})\big).
$$
We are going to proceed by induction on $N:=|\alpha+\beta|\in\mathbb{N}$, starting with the following \textit{induction hypothesis}:
\begin{description}
\item[H$_N$:] \emph{
Suppose that for any $(\alpha,\beta)\in\mathbb{N}^{2d}$ with $|\alpha|+|\beta|\leq N$ we know that $Q^\alpha\big(\Pi^A\big)^\beta\,W^A_h(t)\phi\in\mathscr{H}
^p_A(\X)$, for any $\phi\in\mathscr{S}(\X)$ and for any $t\in\mathbb{R}$.}
\end{description}
First of all let us notice that the above statement is clearly true for $N=0$ due to the fact that $\mathscr{S}(\X)\subset\mathscr{H}^p_A(\X)$ and the unitary group $W^A_h(t)$ leaves invariant the domain of its generator $\mathfrak{Q}^A(h):\mathscr{H}^p_A(\X)\rightarrow L^2(\X)$. 

\smallskip
Suppose now that we increase $N$ in the Hypothesis H$_N$ above by 1 either by increasing some $\alpha_j$ or some $\beta_k$ by 1 and let us consider some pair $(\alpha',\beta')\in\mathbb{N}^{2d}$ with $|\alpha'|+|\beta'|=N+1$.
Using formula \eqref{F-mult-comm} we can compute the following commutator in the magnetic Moyal algebra:
\beq\label{F-comm-h}
\Big[\Op^A\big(q^{\sharp\alpha'}\sharp^B p^{\sharp\beta'}\big)\big)\,,\,\mathfrak{Q}^A(h)\Big]=\underset{\mathfrak{m}\subset\{\alpha\}}{\sum}^{\prime}\,\underset{\mathfrak{n}\subset\{\beta\}}{\sum}^\prime\Op^A\Big([\ad^B_q]^{\gamma_{\mathfrak{m}}}\big([\ad^B_p]
^{\gamma_{\mathfrak{n}}}(h)\big)\Big)Q^{\gamma_{\mathfrak{m}^\complement}}\big(\Pi^A\big)^
{\gamma_{\mathfrak{n}^\complement}}=
\eeq
$$
=\underset{\mathfrak{m}\subset\{\alpha\}}{\sum}^\prime\,\underset{\mathfrak{n}\subset\{\beta\}}{\sum}^\prime\Op^A\Big([\ad^B_q]^{\gamma_{\mathfrak{m}}}\big([\ad^B_p]
^{\gamma_{\mathfrak{n}}}(h)\big)\Big)\,\mathfrak{R}^B_-\big(\mathfrak{Q}(h)+i\big)\,Q^{\gamma_{\mathfrak{m}^\complement}}\big(\Pi^A\big)^{\gamma_{\mathfrak{n}^\complement}},
$$
where $\mathfrak{m}^\complement=\{\alpha\}\setminus
\mathfrak{m}$ and $\mathfrak{n}^\complement=\{\beta\}\setminus
\mathfrak{n}$ and we denote by $\underset{\mathfrak{m}\subset\{\alpha\}}{\sum'}$ the sum over all non-trivial subsets (i.e. different from the void set and from the total set), so that $|\gamma_{\mathfrak{m}^\complement}|+|\gamma_{\mathfrak{n}^\complement}|\leq N$.

\smallskip
Finally, for any $\phi\in\mathscr{S}(\X)$ we have obtained the following equality of tempered distributions on $\X$:
$$
Q^{\alpha'}\big(\Pi^A\big)^{\beta'}W^A_h(t)\phi=W^A_h(t)Q^{\alpha'}\big(\Pi^A\big)^{\beta'}\phi\ -
$$
$$
-i\underset{\mathfrak{m}\subset\{\alpha\}}{\sum}^\prime\,\underset{\mathfrak{n}\subset\{\beta\}}{\sum}^\prime\int_0^tdsW^A_h(t-s)\Op^A\Big([\ad^B_q]^{\gamma_{\mathfrak{m}}}\big([\ad^B_p]
^{\gamma_{\mathfrak{n}}}(h)\big)\sharp^B\mathfrak{r}^B_-\Big)\big(\mathfrak{Q}(h)+i\big)\,Q^{\gamma_{\mathfrak{m}^\complement}}\big(\Pi^A\big)^
{\gamma_{\mathfrak{n}^\complement}}W^A_h(s)\phi.
$$
We notice that
$$
Q^{\alpha'}\big(\Pi^A\big)^{\beta'}\phi\in\mathscr{S}(\X)\ \Longrightarrow\ 
W^A_h(t)Q^{\alpha'}\big(\Pi^A\big)^{\beta'}\phi\in
\mathscr{H}^p_A(\X),
$$
$$
\text{\bf H}_N\ \Longrightarrow\ \big(\mathfrak{Q}(h)+i\big)\,Q^{\gamma_{\mathfrak{m}^\complement}}\big(\Pi^A\big)^
{\gamma_{\mathfrak{n}^\complement}}W^A_h(s)\phi\in L^2(\X),\quad\forall s\in[0,t],
$$
for any subsets $\mathfrak{m}\subset\{\alpha\}$ and $\mathfrak{n}\subset\{\beta\}$ different from the empty set and of the total set. Proposition \ref{P-2-nd-symb-comp} implies
$$
[\ad^B_p]^{\gamma_{\mathfrak{n}}}(h)\in S^p(\Xi,\X),\quad\forall\mathfrak{n}\subset\{\beta\},\emptyset\ne\mathfrak{n}\ne\{\beta\},
$$
$$
[\ad^B_q]^{\gamma_{\mathfrak{m}}}\big([\ad^B_p]
^{\gamma_{\mathfrak{n}}}(h)\big)\in S^{p-|\gamma_{\mathfrak{n}}|}(\Xi,\X),\quad\forall\mathfrak{n}\subset\{\beta\},\emptyset\ne\mathfrak{n}\ne\{\beta\},\ \forall\mathfrak{m}\subset\{\alpha\},\emptyset\ne\mathfrak{m}\ne\{\alpha\}
$$
and we conclude that 
$$
[\ad^B_q]^{\gamma_{\mathfrak{m}}}\big([\ad^B_p]
^{\gamma_{\mathfrak{n}}}(h)\big)\sharp^B\mathfrak{r}^B_-\in S^0(\Xi,\X)\subset\mathfrak{C}^B(\Xi)
$$
and finally
$$
Q^{\alpha'}\big(\Pi^A\big)^{\beta'}W^A_h(t)\phi\in L^2(\X).
$$

We consider now the following equality of tempered distributions:
$$
\mathfrak{Q}^A(h)Q^{\alpha'}\big(\Pi^A\big)^{\beta'}W^A_h(t)\phi\ =\ Q^{\alpha'}\big(\Pi^A\big)^{\beta'}\mathfrak{Q}^A(h)
W^A_h(t)\phi\,+\,\Big[\mathfrak{Q}^A(h)\,,\,Q^{\alpha'}\big(\Pi^A\big)^{\beta'}\Big]W^A_h(t)\phi\ =
$$
$$
Q^{\alpha'}\big(\Pi^A\big)^{\beta'}
W^A_h(t)\mathfrak{Q}^A(h)\phi\,+\,\Big[\mathfrak{Q}^A(h)\,,\,Q^{\alpha'}\big(\Pi^A\big)^{\beta'}\Big]W^A_h(t)\phi
$$
and using the above result and once again formula \eqref{F-comm-h} we conclude that it defines in fact an element in $L^2(\X)$. This proves that \textbf{H}$_{(N+1)}$ is also true and finishes the proof of the Theorem.
\end{proof}

\section{Appendices.}

\subsection{A.1: Estimations on the derivatives of $\varomega^B$.}

We shall consider the multi-indices $\{\sigma^j\}_{1\leq j\leq d}$ with $(\sigma^j)_k:=\delta_{jk}$. We shall use the notation $\varomega^B_x(y,z):=e^{-iF^B_x(y,z)}$ with the explicit expression:
$$
F^B_x(y,z):=\int_{\mathcal{T}_x(y,z)}\hspace*{-12pt}B=4\underset{j\ne k}{\sum}y_jz_k\int_0^1ds\int_0^sdt\,B_{jk}
\big(x+(2s-1)y+(2t-1)z\big)=:4\underset{j\ne k}{\sum}y_jz_k\Omega^B_{jk}(x,y,z).
$$
Then we have the following formulas:
$$
\big(\partial^\alpha_xF^B_x\big)(y,z)=4\underset{j\ne k}{\sum}y_jz_k\int_0^1ds\int_0^sdt\big(\partial^\alpha B_{jk}\big)
\big(x+(2s-1)y+(2t-1)z\big)=\int_{\mathcal{T}_x(y,z)}\hspace*{-10pt}\partial^\alpha B=F^{\partial^\alpha B}_x(y,z),
$$
\beq
\begin{split}
\big(\partial^\alpha_yF^B_x\big)(y,z)&=4\underset{j\ne k}{\sum}y_jz_k\int_0^1ds\int_0^sdt\big(\partial^\alpha B_{jk}\big)\big(x+(2s-1)y+(2t-1)z\big)(2s-1)^{|\alpha|}\,+\\
&+\underset{j:\alpha_j\geq1}{\sum}\ \underset{k\ne j}{\sum}z_k\int_0^1ds\int_0^sdt\big(\partial^{\alpha-\sigma^j} B_{jk}\big)
\big(x+(2s-1)y+(2t-1)z\big)(2s-1)^{|\alpha|-1},
\end{split}
\eeq
\beq
\begin{split}
\big(\partial^\alpha_zF^B_x\big)(y,z)&=4\underset{j\ne k}{\sum}y_jz_k\int_0^1ds\int_0^sdt\big(\partial^\alpha B_{jk}\big)
\big(x+(2s-1)y+(2t-1)z\big)(2t-1)^{|\alpha|}\,+\\
&+\underset{k:\alpha_k\geq1}{\sum}\ \underset{j\ne k}{\sum}y_j\int_0^1ds\int_0^sdt\big(\partial^{\alpha-\sigma^k} B_{jk}\big)
\big(x+(2s-1)y+(2t-1)z\big)(2t-1)^{|\alpha|-1}.
\end{split}
\eeq

Now let us recall the Fa\`{a} di Bruno's formula (\cite{G}) for the case of the exponential of a given function $F^B$:
\beq
\begin{split}
\partial^\alpha_x\varomega^B_x(y,z)&=\alpha!\left(\underset{1\leq l\leq|\alpha|}{\sum}\frac{1}{l!}\ \sum\limits_{\gamma^1+\cdots+\gamma^l=\alpha}
\prod\limits_{s=1}^{l}\frac{1}{\gamma^s!}\big(\partial^{\gamma^s}_xF^B_x\big)(y,z)\right)\varomega^B_x(y,z)\\
&=\alpha!\left(\underset{1\leq l\leq|\alpha|}{\sum}\frac{1}{l!}\ \sum\limits_{\gamma^1+\cdots+\gamma^l=\alpha}
\prod\limits_{s=1}^{l}\frac{1}{\gamma^s!}\left(\int_{\mathcal{T}_x(y,z)}\hspace*{-10pt}\partial^{\gamma^s} B\right)\right)\varomega^B_x(y,z).
\end{split}
\eeq
It follows then
\beq
\begin{split}
\left|\partial^\alpha_x\varomega^B_x(y,z)\right|&\leq C(d,|\alpha|)\underset{1\leq l\leq|\alpha|}{\max}\ \underset{p_1+\cdots+p_l=|\alpha|}{\max}\ \prod\limits_{s=1}^{l}\big(|y\wedge z|^{p_s}\rho_{p_s}(B)\big)\\
&\leq C(d,|\alpha|)|y\wedge z|^{|\alpha|}\underset{1\leq l\leq|\alpha|}{\max}\ \underset{p_1+\cdots+p_l=|\alpha|}{\max}\ \prod\limits_{s=1}^{l}\mu_{p_s}(B).
\end{split}
\eeq
\beq
\begin{split}
\partial^\alpha_y\varomega^B_x(y,z)&=\alpha!\left(\underset{1\leq l\leq|\alpha|}{\sum}\frac{1}{l!}\ \sum\limits_{\gamma^1+\cdots+\gamma^l=\alpha}
\prod\limits_{s=1}^{l}\frac{1}{\gamma^s!}\big(\partial^{\gamma^s}_yF^B\big)(x,y,z)\right)\varomega^B_x(y,z)\\
&=\alpha!\left(\underset{1\leq l\leq|\alpha|}{\sum}\frac{1}{l!}\ \sum\limits_{\gamma^1+\cdots+\gamma^l=\alpha}
\prod\limits_{s=1}^{l}\frac{1}{\gamma^s!}\left(\int_{\mathcal{T}_x(y,z)}
\hspace*{-10pt}\partial^{\gamma^s} B+ \underset{j:\alpha_j\geq1}{\sum}\mathfrak{S}^{|\gamma^s|,1}_{\mathcal{T}_x(y,z)}\big[\big(\partial^{\gamma^s-
\sigma^j}B\big)\llcorner z\big]_j\right)\right)\varomega^B_x(y,z),
\end{split}
\eeq
$$
\left|\partial^\alpha_y\varomega^B_x(y,z)\right|\leq C(d,|\alpha|)<y>^{|\alpha|}<z>^{|\alpha|}\underset{1\leq l\leq|\alpha|}{\max}\ \underset{p_1+\cdots+p_l=|\alpha|}{\max}\ \prod\limits_{s=1}^{l}(\mu_{p_s}(B)+\mu_{p_s-1}(B)).
$$
\beq
\begin{split}
\partial^\alpha_z\varomega^B_x(y,z)&=\alpha!\left(\underset{1\leq l\leq|\alpha|}{\sum}\frac{1}{l!}\ \sum\limits_{\gamma^1+\cdots+\gamma^l=\alpha}
\prod\limits_{s=1}^{l}\frac{1}{\gamma^s!}\big(\partial^{\gamma^s}_zF^B\big)(x,y,z)\right)\varomega^B_x(y,z)\\
&=\alpha!\left(\underset{1\leq l\leq|\alpha|}{\sum}\frac{1}{l!}\ \sum\limits_{\gamma^1+\cdots+\gamma^l=\alpha}
\prod\limits_{s=1}^{l}\frac{1}{\gamma^s!}\left(\int_{\mathcal{T}_x(y,z)}
\hspace*{-10pt}\partial^{\gamma^s} B+\underset{k:\alpha_k\geq1}{\sum}\mathfrak{S}^{|\gamma^s|,2}_{\mathcal{T}_x(y,z)}\big[y\lrcorner\big(
\partial^{\gamma^s-\sigma^k}B\big)\big]_k\right)\right)\varomega^B_x(y,z),
\end{split}
\eeq
$$
\left|\partial^\alpha_z\varomega^B_x(y,z)\right|\leq C(d,|\alpha|)<y>^{|\alpha|}<z>^{|\alpha|}\underset{1\leq l\leq|\alpha|}{\max}\ \underset{p_1+\cdots+p_l=|\alpha|}{\max}\ \prod\limits_{s=1}^{l}(\mu_{p_s}(B)+\mu_{p_s-1}(B)).
$$

\subsection{A.2: Estimating an oscillating integral.}

In the following technical statement we shall need some specific weights on the space $C^\infty_{\text{\sf pol}}\big(\X\times\X\times\X\big)$. For any family $(N,M,n_1,n_2,m_1,m_2)\in\mathbb{N}^6$ we define for any $\Theta\in C^\infty_{\text{\sf pol}}\big(\X\times\X\times\X\big)$:
\beq\label{FD-seminorms-3-1}
\mathcal{W}^{N,n_1,n_2}_{M,m_1,m_2}(\Theta):=\!\underset{(x,y,z)\in\X\times\X\times\X}{\sup}\left<x\right>^{-N}\!\left<y\right>^{-n_1}\!\left<z\right>^{-n_2}\!\underset{|a|\leq M}{\max}\underset{|b|\leq m_1}{\max}\underset{|c|\leq m_2}{\max}\left|\partial^a_x\partial^b_y\partial^c_z\Theta(x,y,z)\right|
\eeq
that can take also the value $+\infty$.

\smallskip
\begin{proposition}\label{P-Ap-1}
Suppose given a magnetic field of class $\Lbc$ and two H\"{o}rmander type symbols $(F,G)\in S^{p_1}(\Xi,\X)\times S^{p_2}(\Xi,\X)$ and let us denote by 
$n_1:=2[(d+p_1)/2]+2$, $n_2:=2[(d+p_2)/2]+2$. 
Suppose also given a function $\Theta\in C^\infty_{\text{\sf pol}}\big(\X;C^\infty_{\text{\sf pol}}(\X\times\X)\big)$
such that  $\mathcal{W}^{N,q_1,q_2}_{n,n_2,n_1}(\Theta)<\infty$ for some $(N,q_1,q_2,n)\in\mathbb{R}_+^3\times\mathbb{N}$.
Then, for any $(\alpha,\beta)\in\mathbb{N}^2d$ with $|\alpha|\leq n$ we have the estimations:
$$
\left<x\right>^{-N}\left<\xi\right>^{-(p_1+p_2)+|\beta|}\left|\partial^\alpha_x\partial^\beta_\xi\int_{\Xi}dY\int_{\Xi}dZ
e^{-2i(<\eta,z>-<\zeta,y>)}\varomega^B_x(y,z)\Theta(x,y,z)F(X-sY)G(X-tZ)\right|\,\leq
$$
$$
\leq C(d,p_1,p_2,\alpha,\beta)\mathfrak{w}_{|\alpha|+n_1+n_2}(B)\,\mathcal{W}^{N,q_1,q_2}_{|\alpha|,n_2,n_1}(\Theta)\underset{0\leq k\leq|\beta|}{\sum}\rho^{p_1-k}_{|\alpha|+n_2;k,m_2}(F)\,\rho^{p_2-|\beta|+k}_{|\alpha|+n_1;|\beta|-k,m_1}(G),
$$
where $m_1:=2[n_2+(n+n_1+q_1)/2]+2$ and $m_2:=2[n_1+(n+n_2+q_2)]+2$.
\end{proposition}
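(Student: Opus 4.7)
The plan is to carry out a standard stationary/oscillatory integral analysis: bring the derivatives $\partial^\alpha_x\partial^\beta_\xi$ inside the integral, expand by Leibniz, and then perform four separate integrations by parts in the variables $z$, $y$, $\eta$, $\zeta$ in order to render the integrand absolutely integrable. The remaining estimates then follow from the pointwise bounds on $\varomega^B$ gathered in Appendix A.1, from the definition of the weights on $\Theta$ and from Peetre's inequality applied to the translated arguments of $F$ and $G$.

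First I would write $\partial^\alpha_x\partial^\beta_\xi$ as a sum of products of derivatives hitting $\varomega^B_x(y,z)$, $\Theta(x,y,z)$, $F(X-sY)=F(x-sy,\xi-s\eta)$ and $G(X-tZ)=G(x-tz,\xi-t\zeta)$, the index $k\in\{0,\dots,|\beta|\}$ in the statement being the number of $\partial_\xi$'s falling on $F$. Next, I would perform four integrations by parts using the four identities
\[
(1-\tfrac14\Delta_z)^{n_1/2}e^{-2i\langle\eta,z\rangle}=\langle 2\eta\rangle^{n_1}e^{-2i\langle\eta,z\rangle},
\qquad
(1-\tfrac14\Delta_y)^{n_2/2}e^{2i\langle\zeta,y\rangle}=\langle 2\zeta\rangle^{n_2}e^{2i\langle\zeta,y\rangle},
\]
\[
(1-\tfrac14\Delta_\eta)^{m_2/2}e^{-2i\langle\eta,z\rangle}=\langle 2z\rangle^{m_2}e^{-2i\langle\eta,z\rangle},
\qquad
(1-\tfrac14\Delta_\zeta)^{m_1/2}e^{2i\langle\zeta,y\rangle}=\langle 2y\rangle^{m_1}e^{2i\langle\zeta,y\rangle}
\]
(note $n_1,n_2,m_1,m_2$ are even by construction). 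The first two IBP extract the decaying factors $\langle\eta\rangle^{-n_1}\langle\zeta\rangle^{-n_2}$ at the cost of up to $n_1$ extra $z$-derivatives and $n_2$ extra $y$-derivatives distributed (Leibniz) among $\varomega^B_x$, $\Theta$ and $G$, resp.\ $F$; the last two extract $\langle z\rangle^{-m_2}\langle y\rangle^{-m_1}$ at the cost of up to $m_2$ extra $\xi$-derivatives of $F$ and $m_1$ extra $\xi$-derivatives of $G$.

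Third, I would bound each of the resulting factors. Derivatives of $\varomega^B_x$ are controlled by the estimates established in Appendix A.1: the key output of these estimates is that a $y,z,x$-derivative of total order $m$ of $\varomega^B_x(y,z)$ is bounded by $C\langle y\rangle^{m}\langle z\rangle^{m}\mathring{\mathfrak w}_m(B)$, which is precisely why the weight on $B$ appears with index $|\alpha|+n_1+n_2$ (the $|\alpha|$ original $x$-derivatives plus the $n_1+n_2$ $y,z$-derivatives generated by the first two IBP). Derivatives of $\Theta$ contribute the seminorm $\mathcal W^{N,q_1,q_2}_{|\alpha|,n_2,n_1}(\Theta)$ since at most $|\alpha|$ $x$-derivatives, $n_2$ $y$-derivatives and $n_1$ $z$-derivatives reach $\Theta$. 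For $F$ and $G$ I would then apply the seminorm definition of $S^{p_1}$ and $S^{p_2}$ together with Peetre's inequality $\langle\xi-s\eta\rangle^{p_1-k}\leq C\langle\xi\rangle^{p_1-k}\langle s\eta\rangle^{|p_1-k|}$ (and likewise for $G$) to separate the $\xi$-dependence from the $\eta,\zeta$-dependence, the absorbed powers of $\langle\eta\rangle$ and $\langle\zeta\rangle$ being exactly what justifies the choice $n_1=2[(d+p_1)/2]+2$, $n_2=2[(d+p_2)/2]+2$.

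Finally I would perform the now absolutely convergent $Y,Z$ integrations. The main book-keeping obstacle is to verify that $m_1$ and $m_2$ as defined in the statement are large enough: after the first two IBP, the extra polynomial growth in $y$ and $z$ coming from derivatives of $\varomega^B$ is of order $n_1+n_2$, to which one has to add the growth of $\Theta$ (indices $q_1,q_2$) and the original number of $x$-derivatives $n=|\alpha|$ that propagate through the chain rule; counting powers shows that $m_1=2[n_2+(n+n_1+q_1)/2]+2$ and $m_2=2[n_1+(n+n_2+q_2)/2]+2$ suffice to guarantee integrability in $y$ and $z$ respectively. Collecting the $\langle\xi\rangle^{p_1+p_2-|\beta|}$ factor on the right-hand side and the constant $\langle x\rangle^{N}$ from $\Theta$ then yields the announced bound.
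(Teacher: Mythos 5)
Your proposal follows essentially the same strategy as the paper's proof: Leibniz expansion of $\partial_x^\alpha\partial_\xi^\beta$, a first round of integration by parts in $(y,z)$ to extract $\langle\eta\rangle^{-n_1}\langle\zeta\rangle^{-n_2}$ decay, the derivative estimates on $\varomega^B$ from Appendix A.1 together with the seminorm on $\Theta$, a second round of integration by parts in $(\eta,\zeta)$ to compensate the polynomial growth in $(y,z)$ with the factors $\langle y\rangle^{-m_1}\langle z\rangle^{-m_2}$, and a final bookkeeping of the indices. The only (harmless) slip is in your stated IBP identities: since $\Delta_z e^{-2i\langle\eta,z\rangle}=-4|\eta|^2 e^{-2i\langle\eta,z\rangle}$, the operator $\bigl(1-\tfrac14\Delta_z\bigr)^{n_1/2}$ produces $\langle\eta\rangle^{n_1}$, not $\langle 2\eta\rangle^{n_1}$ (and similarly in the other three identities); also note that the $m_2$ you write, $2[n_1+(n+n_2+q_2)/2]+2$, is the correct symmetric counterpart of $m_1$ and appears to fix a missing $/2$ in the statement.
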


\begin{proof}
Fixing some $(x,\xi)\in\Xi$, we have to estimate oscillating integrals of the form
\beq\label{F-sharp-B-symbols-2}
\left<\xi\right>^{-(p_1+p_2)+|\beta|}\int_{\Xi}dY\int_{\Xi}dZ\,e^{-2i(<\eta,z>-<\zeta,y>)}\left[\partial_x^\alpha\partial_\xi^\beta\big(
\varomega^B_x(y,z)\,\Theta(x,y,z)\,F(X-Y)\,G(X-Z)\big)\right],
\eeq
that can be written by the Leibnitz rule as finite linear combinations depending only on $|\alpha|\in\mathbb{N}$ and $|\beta|\in\mathbb{N}$ of terms of the form
$$
\left<\xi\right>^{-(p_1+p_2)+|\beta|}\int_{\Xi}dY\int_{\Xi}dZ\,e^{-2i(<\eta,z>-<\zeta,y>)}\left[\big(\partial_x^{\alpha_0}\varomega^B_x\big)(y,z)\right]\left[\partial_x^{a_0}\Theta(x,y,z)\right]\,\times
$$
$$
\times\,\left[\big(\partial_x^{\alpha_1}\partial_\xi^{\beta_1}F\big)(X-Y)\right]\left[\big(\partial_x^{\alpha_2}\partial_\xi^{\beta_2}G\big)(X-Z)\right].
$$
Let us begin with a rough estimation of the 'momentum integrals' with respect to $(\eta,\zeta)\in\big[\X^*\big]^2$:
$$
\left<\xi\right>^{-(p_1+p_2)+|\beta|}\left[\big(\partial_x^{\alpha_1}\partial_\xi^{\beta_1}F\big)(X-Y)\right]\left[\big(\partial_x^{\alpha_2}\partial_\xi^{\beta_2}G\big)(X-Z)\right]\leq
$$
$$
\leq C\nu^{p_1-|\beta|_1}_{|\alpha_1||\beta_1|}(F)\,\nu^{p_2-|\beta_2|}_{|\alpha_2||\beta_2|}(G)\leq C\nu^{p_1}_{|\alpha||\beta|}(F)\,\nu^{p_2}_{|\alpha||\beta|}(G),
$$
with $\alpha_0+\alpha_1+\alpha_2=\alpha-a_0$ and $\beta_1+\beta_2=\beta$. Moreover, in order to control these integrals, some extra factors of convergence $<\eta>^{-d-\epsilon}<\zeta>^{-d-\epsilon}$ have to be introduced. We are then obliged to get rid of these growing factors, integrating by parts using the identities
$$
\left<\eta\right>^{n_1}\!e^{-2i<\eta,z>}=\left<(i/2)\nabla_z\right>^{n_1}\!e^{-2i<\eta,z>};\quad\left<\zeta\right>^{n_2}e^{2i<\zeta,y>}=\left<(1/2i)\nabla_y\right>^{n_2}\!e^{2i<\zeta,y>},
$$
taking $n_1=2[(d+p_1)/2]+2$ and $n_2=2[(d+p_2)/2]+2$ (in order to work with polynomials in the differential operators we have to take even exponents). Thus the integral \eqref{F-sharp-B-symbols-2} becomes a linear combination (depending only on $\{p_1,p_2,|\alpha|,|\beta|,d\}$) of terms of the form
$$
\left<\xi\right>^{-(p_1+p_2)+|\beta|}\int_{\Xi}dY\int_{\Xi}dZ\,e^{-2i(<\eta,z>-<\zeta,y>)}\left[\big(\partial_x^{\alpha_0}\partial_y^{\mu_0}
\partial_z^{\nu_0}\varomega^B_x\big)(y,z)\right]\left[\partial_x^{a_0}\partial_y^{b_0}\partial_z^{c_0}\Theta(x,y,z)\right]\,\times
$$
$$
\times\,\left<\eta\right>^{-n_1}\left[\big(\partial_x^{\alpha_1}\partial_y^{\mu_1}\partial_\xi^{\beta_1}F\big)(X-Y)\right]\,\left<\zeta\right>^{-n_2}
\left[\big(\partial_x^{\alpha_2}\partial_z^{\nu_1}\partial_\xi^{\beta_2}G\big)(X-Z)\right],
$$
where $a_0+\alpha_0+\alpha_1+\alpha_2=\alpha$, $\beta_1+\beta_2=\beta$, $|b_0+\mu_0+\mu_1|=n_2$ and $|c_0+\nu_0+\nu_1|=n_1$.
A maximum number of $N_0:=|\alpha|+n_1+n_2$ derivatives of the factor $\varomega^B_x$ will appear. Considering the factor $\partial_x^{a_0}\partial_y^{b_0}
\partial_z^{c_0}\Theta(x,y,z)$ in the oscillating integral above we notice that $|a_0|\leq|\alpha|$, $|b_0|\leq n_2$ and $|c_0|\leq n_1$.

\smallskip
We notice that for any function $\Theta\in C^\infty_{\text{\sf pol}}\big(\X;C^\infty_{\text{\sf pol}}(\X\times\X)\big)$ there exist some functions $w_y:\mathbb{N}\times\mathbb{N}\rightarrow\mathbb{N}$, $w_z2:\mathbb{N}\times\mathbb{N}\rightarrow\mathbb{N}$ and $w_x:\mathbb{N}\times\mathbb{N}\times\mathbb{N}\rightarrow
\mathbb{N}$ depending on $\Theta\in C^\infty_{\text{\sf pol}}\big(\X;C^\infty_{\text{\sf pol}}(\X\times\X)\big)$ such that if we set
\beq\label{FD-seminorms-3-2}
\mathcal{W}^{w_y(m_1,m_2),w_z(m_1,m_2)}_{m_1,m_2}(\Theta,a;x):=\hspace*{-8pt}\underset{(y,z)\in\X\times\X}{\sup}\hspace*{-8pt}\left<y\right>^{-n_1}\!\left<z\right>^{-n_2}\hspace*{-4pt}\underset{|b|\leq m_1,|c|\leq m_2}{\max}\left|\partial^a_x\partial^b_y\partial^c_z\Theta(x,y,z)\right|<\infty,\ \forall x\in\X,
\eeq
we can write
\beq\label{FD-seminorms-3-3}
\mathcal{W}^{w_x(M,m_1,m_2),w_y(m_1,m_2),w_z
(m_1,m_2)}_{M,m_1,m_2}(\Theta)=\underset{x\in\X}{\sup}\left<x\right>^{-N}\underset{|a|\leq M}{\max}\,\mathcal{W}^{w_y(m_1,m_2),w_z(m_1,m_2)}_{m_1,m_2}(\Theta,a;x)<\infty.
\eeq
Thus, with the notations \eqref{FD-seminorms-3-1}, \eqref{FD-seminorms-3-2}, \eqref{FD-seminorms-3-3} we have to choose 
\beq
\begin{split}
N=w_x(|\alpha|,n_2,n_1)\in\mathbb{N},\\
q_1=w_y(n_2,n_1),\\
q_2=w_z(n_2,n_1)
\end{split}
\eeq
associated to $\Theta\in C^\infty_{\text{\sf pol}}\big(\X;C^\infty_{\text{\sf pol}}(\X\times\X)\big)$.

\smallskip
In order to obtain integrability in the variables $(y,z)\in\X^2$, we shall insert the factors $\left<y\right>^{-m_1}\!\left<z\right>^{-m_2}$ with $m_1=2[(N_0+n_2+q_1)/2]+2$, $m_2=2[(N_0+n_1+q_2)/2]+2$ and apply once again integration by parts to transform the compensating factors in derivations with respect to $(\eta,\zeta)\in(\X^*)^2$. Finally we obtain a linear combination (depending on $\{p_1,p_2,|\alpha|,|\beta|,d\}$) of terms of the form
\beq\label{gen-term}
\left<\xi\right>^{-(p_1+p_2)+|\beta|}\int_0^1d\tau\int_{\Xi}\left<y\right>^{-m_1}\!\left<\eta\right>^{-n_1}dY\int_{\Xi}\left<z\right>^{-m_2}\!\left<\zeta\right>^{-n_2}dZ\,e^{-2i(<\eta,z>-<\zeta,y>)}\varomega^B_x(y,z)\,\times
\eeq
$$
\times\,\left[\big(\partial_x^{\alpha_0}\partial_y^{\mu_0}\partial_z^{\nu_0}\varomega^B_x\big)(y,z)\right]\,\left[\partial_x^{a_0}\partial_y^{b_0}\partial_z^{c_0}\Theta(x,y,z)\right]\,
\left[\big(\partial_x^{\alpha_1}\partial_y^{\mu_1}\partial_\xi^{\beta_1}\partial_\eta^{\theta_1}F\big)(X-Y)\right]\,
\left[\big(\partial_x^{\alpha_2}\partial_z^{\nu_1}\partial_\xi^{\beta_2}\partial_\zeta^{\theta_2}G\big)(X-Z)\right],
$$
where
\beq
\left\{
\begin{array}{l}
n_1=2[(d+p_1)/2]+2,\quad n_2=2[(d+p_2)/2]+2,\\
N_0:=|\alpha|+n_1+n_2,\\
m_1=2[(N_0+n_2)/2]+2,\quad m_2=2[(N_0+n_1)/2]+2,\\
|\gamma_1+\gamma_2+\sum\limits_{j=1}^{k}\lambda_j|\leq N_0,\\
0\leq n\leq N_0,\\
\gamma_1\leq\nu_0,\quad\gamma_2\leq\mu_0,\quad|b_0+\mu_0+\mu_1|=n_2,\quad|c_0+\nu_0+\nu_1|=n_1,\\
\alpha_1+\alpha_2\leq\alpha,\quad\beta_1+\beta_2=\beta,\\
|\theta_1|\leq m_2,\quad|\theta_2|\leq m_1.
\end{array}
\right.
\eeq
Due to the above remarks we can estimate each integral of the form \eqref{gen-term} by
$$
C<x>^N\mathfrak{w}_{|\alpha|+n_1+n_2}(B)\,\mathcal{W}^{N,q_1,q_2}_{|\alpha|,n_2,n_1}(\Theta)\underset{0\leq k\leq|\beta|}{\sum}\rho^{p_1-k}_{|\alpha|+n_2;k,m_2}(F)\,\rho^{p_2-|\beta|+k}_{|\alpha|+n_1;|\beta|-k,m_1}\!(G)
$$
and finish the proof of the Proposition.
\end{proof}


\smallskip
{\bf Acklowdlegements.} The authors have been supported by the Fondecyt Project 1160359.

\end{document}